\documentclass[journal,12pt,onecolumn,draftclsnofoot,]{IEEEtran}
\usepackage{subcaption}
\usepackage{graphicx}
\usepackage{amsmath}
\usepackage{amssymb }
\usepackage{graphicx}
\usepackage{amsmath}
\usepackage{amssymb}
\usepackage{amsthm}
\usepackage{lipsum}
\usepackage{amsmath}   
\usepackage{amssymb}   
\usepackage{amsfonts}   
\usepackage{lipsum}   
\usepackage{float}   
\usepackage{cuted}     
\usepackage{amsthm}
\usepackage[dvipsnames]{xcolor}

\newtheorem{thm}{Theorem}
\newtheorem{lem}[thm]{Lemma}
\newtheorem{prop}[thm]{Proposition}
\newtheorem{cor}{Corollary}

\theoremstyle{definition}

\makeatletter
\let\NAT@parse\undefined
\makeatother
\usepackage{hyperref}

\usepackage[ruled,vlined]{algorithm2e}
\def\BibTeX{{\rm B\kern-.05em{\sc i\kern-.025em b}\kern-.08em
    T\kern-.1667em\lower.7ex\hbox{E}\kern-.125emX}}


\begin{document}

\title{Two-Timescale Design for RIS-aided Cell-free Massive MIMO Systems with Imperfect CSI}
\author{Mahdi Eskandari,
        Kangda Zhi,
        Huiling Zhu,
        Cunhua Pan, \\
        and~Jiangzhou Wang,~\IEEEmembership{Fellow,~IEEE,}
        
\thanks{Mahdi Eskandari is with the School of Engineering, University
of Kent, UK. (email: me377@kent.ac.uk). Kangda Zhi is with the School of
Electronic Engineering and Computer Science at Queen Mary University of London, UK. (email: k.zhi@qmul.ac.uk). Huiling Zhu is with the School of Engineering, University
of Kent, UK. (email: H.Zhu@kent.ac.uk). Cunhua
Pan is with the National Mobile Communications Research Laboratory, Southeast University,
Nanjing 211111, China. (email: cpan@seu.edu.cn). Jiangzhou Wang is with the School of Engineering, University
of Kent, UK. (email: j.z.wang@kent.ac.uk)}
}

\maketitle

\begin{abstract}
The objective of this paper is to evaluate the effectiveness of a two-timescale transmission design in cell-free massive multi-input multiple-output (MIMO) systems incorporating reconfigurable intelligent surfaces (RISs) under the assumption of imperfect channel state information (CSI).
We examine the Rician channel model and formulate the passive beamforming for the RISs based on statistical channel state information (S-CSI).
To that end, we put forth a linear minimum mean square error (LMMSE) estimator with the aim of estimating the aggregation of channels from the users to the APs within each channel coherence interval. Meanwhile, the active beamforming for the radio units (APs) is executed using the maximum ratio combining (MRC) approach, which utilizes the instantaneous aggregated channels, that result from the combination of the direct and reflected channels from the RISs. 
Subsequently, we derive the closed-form expressions of the achievable uplink spectral efficiency (SE), which is a function of S-CSI elements such as distance-dependent path loss, Rician factors as well as the number of RIS elements and AP antennas.
We then optimize the phase shifts of the RISs to maximize the sum SE of the users, utilizing the soft actor-critic (SAC) which is a deep reinforcement learning (RL) method, and relying on the derived closed-form expressions.
Numerical evaluations affirm that, despite the presence of imperfect CSI, the deployment of RIS in cell-free systems can lead to significant performance improvement.
\end{abstract}
\begin{IEEEkeywords}
Reconfigurable intelligent surface (RIS), deep reinforcement learning, soft actor-critic, S-CSI, cell-free MIMO.
\end{IEEEkeywords}

\IEEEpeerreviewmaketitle

\section{Introduction}
It has become increasingly important for academia and industry to study cell-free massive multi-input multi-output (MIMO) as a potential technology for 5G and 6G networks \cite{zhang2019cell, ngo2017cell, yang2018energy}. Cell-free systems consist of a large number of access points (APs) distributed over a small coverage area to serve a small number of users \cite{wang2020uplink, interdonato2019ubiquitous, interdonato2021enhanced}. Unlike cellular massive MIMO, cell-free systems do not have cells, and therefore do not suffer from inter-cell interference. For more sophisticated operations, each AP is equipped with limited computing power units in the serving area and connected to the CPU via a low-latency backhaul network. There are several disadvantages associated with cell-free wireless networks, including cost, power consumption of access points, blockage, and managing a large number of APs. In addition, the AP deployment in cell-free systems can be expensive and energy-intensive compared to a standard cellular network.  Signal degradation or loss can occur due to blockages caused by objects such as buildings or trees \cite{wang2020uplink}.

Reconfigurable intelligent surface (RIS) is an emerging technology that allows radio waves to be shaped at the electromagnetic level without using digital signal processing methods or requiring power amplifiers \cite{liu2021reconfigurable}. Radio frequency (RF) chains and power amplifiers are not required in the RIS because each element scatters (reflects) the incident signal. RIS has the potential to address some of the limitations of cell-free wireless networks. It can be used to steer and focus signals around obstacles, improving coverage and capacity, while reducing the number of distributed APs required, which can in turn lower the power consumption of the network \cite{wu2021intelligent}. 

For simplicity, the main attention has far been concentrated on designing the phase shifts under the assumption of perfect channel state information (CSI). See \cite{zhao2020intelligent, zappone2020overhead, abrardo2021intelligent, zheng2019intelligent}. In \cite{wei2021channel}, the authors discussed the fundamental issues of perfect channel estimation in RIS-aided systems. The impact of the channel estimation overhead on spectral efficiency (SE) and energy efficiency (EF) was investigated in \cite{zappone2020overhead}. In \cite{zappone2020overhead}, \cite{abrardo2021intelligent} and \cite{eskandari2022statistical} to reduce the impact of channel estimation overhead, the authors investigated the design of the RIS in the presence of statistical CSI. 

As for the integration of the cell-free massive MIMO and RIS, recent works have formulated and solved optimization problems with different objectives under the assumption of perfect and instantaneous CSI \cite{zhou2020achievable, zhang2020capacity, bashar2020performance, zhang2020reconfigurable, zhang2021beyond}. The work \cite{zhang2021beyond} considered a downlink RIS-aided cell-free network and jointly designed the active and passive beamforming vectors with the assumption of perfect channel knowledge available at the CPU. The work \cite{zhang2021joint} considered a downlink RIS-aided cell-free network, in a typical wideband scenario, the authors formulated the problem of joint precoding design at the APs and RISs to maximize the
weighted sum-rate (WSR) of all users to improve the network capacity. Unlike the previous works that the active and passive beamforming vectors were designed in a centralized manner in the CPU, the work \cite{huang2020decentralized} proposed a downlink distributed scheme for active and passive beamforming where the joint active and passive beamformers are designed locally at each AP and based on the communications between neighbouring APs.

All the aforementioned works considered the instantaneous CSI (I-CSI) knowledge available at the APs and the CPU. In this paper, the uplink two-timescale design of a RIS-aided cell-free massive MIMO system is proposed subject to imperfect aggregated CSI.  
To the best of our knowledge, only a few research works investigated RIS-aided massive
MIMO systems based on the two-timescale design \cite{zhi2022power, 8746155, wu2022two, zhi2022two}.  The authors of \cite{zhi2022power} proposed to
employ the RIS for serving users that are located in the out-of-coverage areas of massive MIMO
systems. In \cite{8746155}, the authors evaluated the performance of a RIS-assisted large-scale antenna system by formulating an upper bound of the ergodic SE and investigated the effect of the phase shifts on the ergodic SE. The authors of \cite{wu2022two} investigated the two-timescale design of the simultaneously transmitting and reflecting reconfigurable intelligent surfaces (STAR-RISs) where the phase shifts of the STAR-RIS is designed based on S-CSI whereas the active beamforming at the BS is designed based on I-CSI. All the aforementioned works assumed the availability of the perfect channel knowledge for the two-timescale design. The work \cite{zhi2022two} investigates the uplink two-timescale design of the single-cell massive MIMO system in the presence of imperfect channel knowledge. In \cite{zhi2022two}, the authors have designed the phase shifts of the RIS based on statistical CSI (S-CSI) where the MRC scheme based on I-CSI was used for uplink detection. Additionally, to the best of our knowledge, the two-timescale design of the cell-free RIS-aided massive MIMO system with imperfect CSI has not been studied yet.

In this paper, an uplink two-timescale design of a RIS-aided cell-free massive MIMO system is proposed subject to imperfect aggregated CSI. Specifically, in each channel coherence time interval, the linear minimum mean square error (LMMSE) method is utilized to perform the aggregated channel estimation. Hence, instead of estimating the user-RIS, RIS-AP, and user-AP channels independently, the aggregated channel from user to AP link is estimated which has a pilot overhead as conventional massive MIMO systems. Furthermore, the low-complexity MRC scheme is applied in each AP to detect the transmitted signal from each user.  
The phase shift of the RISs is designed using soft actor-critic (SAC) which is of the family of deep reinforcement learning (RL) algorithms.
SAC is an RL algorithm that combines the actor-critic framework with the entropy-regularized reinforcement learning approach. It aims to optimize both the expected reward and the entropy of the policy, making it well-suited for problems with uncertain or complex environments. The algorithm has a softmax function to calculate the policy, which allows it to handle continuous action spaces. Additionally, SAC uses an off-policy learning method, which means it can learn from historical data and does not require a fixed behaviour policy. SAC has been shown to achieve state-of-the-art performance in a variety of RL tasks and is considered to be robust and versatile algorithm for RL.

The main contributions of the paper are
summarized as follows:
\begin{itemize}
    \item We first derive the aggregated channel estimation through the uplink pilot transmission. The channel estimation is done in each AP and it is not being sent to the CPU. Thus the only information that the CPU utilizes is  the S-CSI. Also, the effect of pilot contamination on channel estimation is studied. 
    \item A two-step signal detection approach is utilized. In the first step, each AP  detects the uplink signal from each user locally based on the estimated channel obtained in the channel estimation phase. Each AP used MRC to decode the signal at the first step. Then, the locally detected signal of all the APs is sent to the CPU for final detection. Since the CPU has only access to the S-CSI, the second step of detection is done using S-CSI only.  
    \item In order to reduce the complexity of designing the phase shifts of the RISs, the phase shift of the RISs is designed at the CPU based on the slow-varying S-CSI. In this case, the CPU is responsible to design the phase shift RIS panels and the CPU uses SAC to design the phase shifts of the RISs. 
\end{itemize}

The remainder of this paper is organized as follows. Section II describes the system model
of the considered RIS-aided cell-free Massive MIMO system. Section III derives the LMMSE channel
estimator. In Section IV, we derive closed-form
expressions for the uplink ergodic SE. In Section
V, we introduce the statistical CSI-based design for RISs in the cell-free massive MIMO system based on SAC. Section VI provides the numerical results and Section
VII concludes the paper.

\textbf{Notation:} $\mathbf{X}$ denotes a matrix, and column vectors are denoted by boldface uppercase letters $\mathbf{x}$.
The transpose, conjugate, conjugate transpose, and inverse of matrix $\mathbf{X}$ are denoted
by $\mathbf{X}^T$ , $\mathbf{X}^*$, $\mathbf{X}^H$ and $\mathbf{X}^{-1}$, respectively.
The
trace, expectation, and covariance operators are denoted by $\mathrm{Tr}\{.\}$, $\mathbb{E}\{.\}$, and $\mathrm{Cov}\{.\}$,
respectively. 
$\mathbb{C}^{M \times N}$ denotes the space of $M \times N$ complex matrices. $\mathbf{I}_N$ denotes
the $N \times N$ identity matrix
Also, $\mathcal{X}$ is a set. The $\mod (., .)$,$\rVert.\rVert$, and $\lfloor . \rfloor$ denotes the modulus
operation, the Euclidean norm and the floor function rounds a number down to the nearest integer, respectively. A diagonal matrix with the vector $\mathbf{x}$ on its diagonal entries is shown by $\mathrm{diag}\{ \mathbf{x}\}$. A complex Gaussian distributed vector $\mathbf{x}$ with mean $\Bar{\mathbf{x}}$ and covariance
matrix $\mathbf{C}$ is denoted by $\mathbf{x} \sim \mathcal{CN} (\Bar{\mathbf{x}}, \mathbf{C})$.

\section{System Model}

\begin{table}[t]
\caption{Major Notations} 
\centering 
\begin{tabular}{c c} 
\hline\hline 
Notation & Description \\ [0.5ex] 
\hline 
$J$ & Number of APs \\
$N$ & Number of antennas of each AP  \\
$K$ & Number of users \\
$R$ & Number of RISs \\
$M$ & Number of elements of each RIS \\
$\theta_{r, j}$ &  The phase shift of the $m$th RIS element of the $r$th RIS \\
$\boldsymbol{\Phi}_r $ & Phase shift matrix of the $r$th RIS \\
$\mathbf{H}_{r, j}$ & The channel matrix between the $r$th RIS and the $j$th AP  \\
$\mathbf{h}_{j, k}$ & The channel vector between the $j$th AP and $k$th user  \\
$\mathbf{g}_{r,k}$ & The channel vactor between the $r$th RIS and the $k$th user  \\
$\mathbf{g}_{j, r,k}$ & The cascaded channel between AP $j$ and user $k$  \\
$\mathbf{q}_{r, j, k}$ & The aggregated channel from user $k$ to AP $j$  \\
$\tau_c$ & Coherence time duration  \\
$\tau_p$ & pilot training duration  \\
$\tau_u$ & Data transmission duration  
 \\[1ex]  
\hline  
\end{tabular}
\label{table:cqi} 
\end{table}


\begin{figure}[t] \label{model}
 \centering
    \begin{subfigure}{0.80\textwidth} \centering
            \includegraphics[scale=0.40]{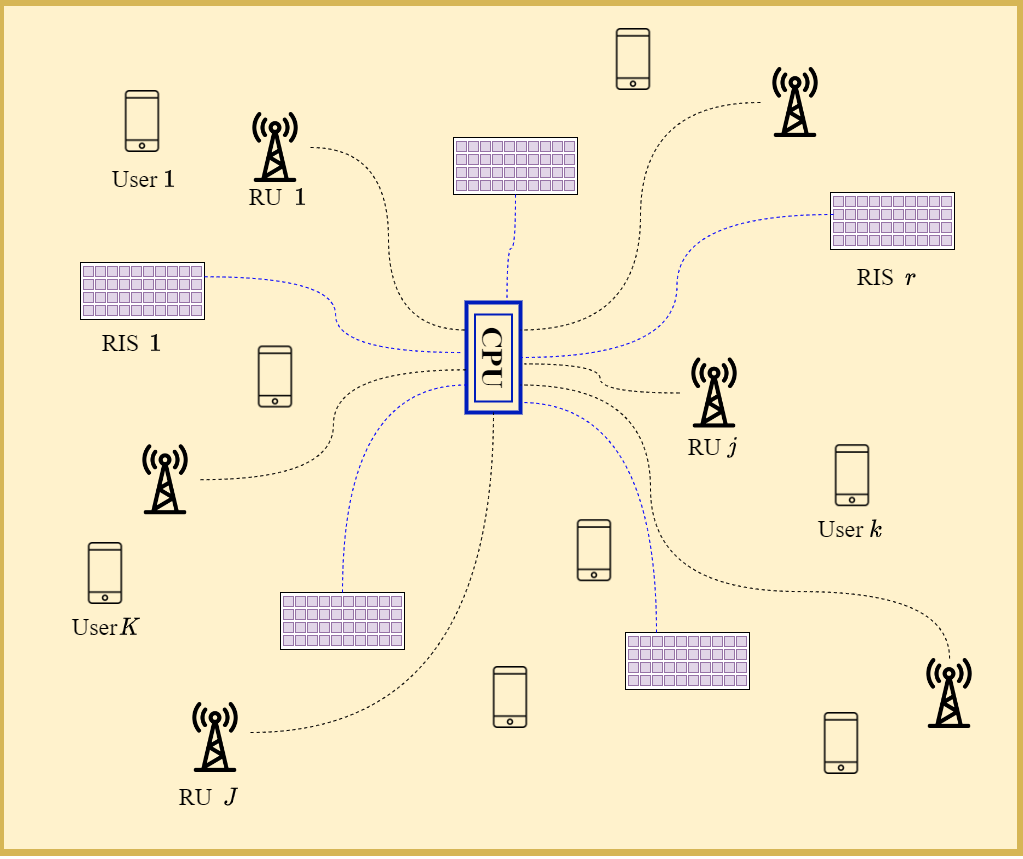}
        \caption{}
        \label{Prinzipbild}
    \end{subfigure} \newline
    \begin{subfigure}{0.80\textwidth}  \centering
    \includegraphics[scale=0.22]{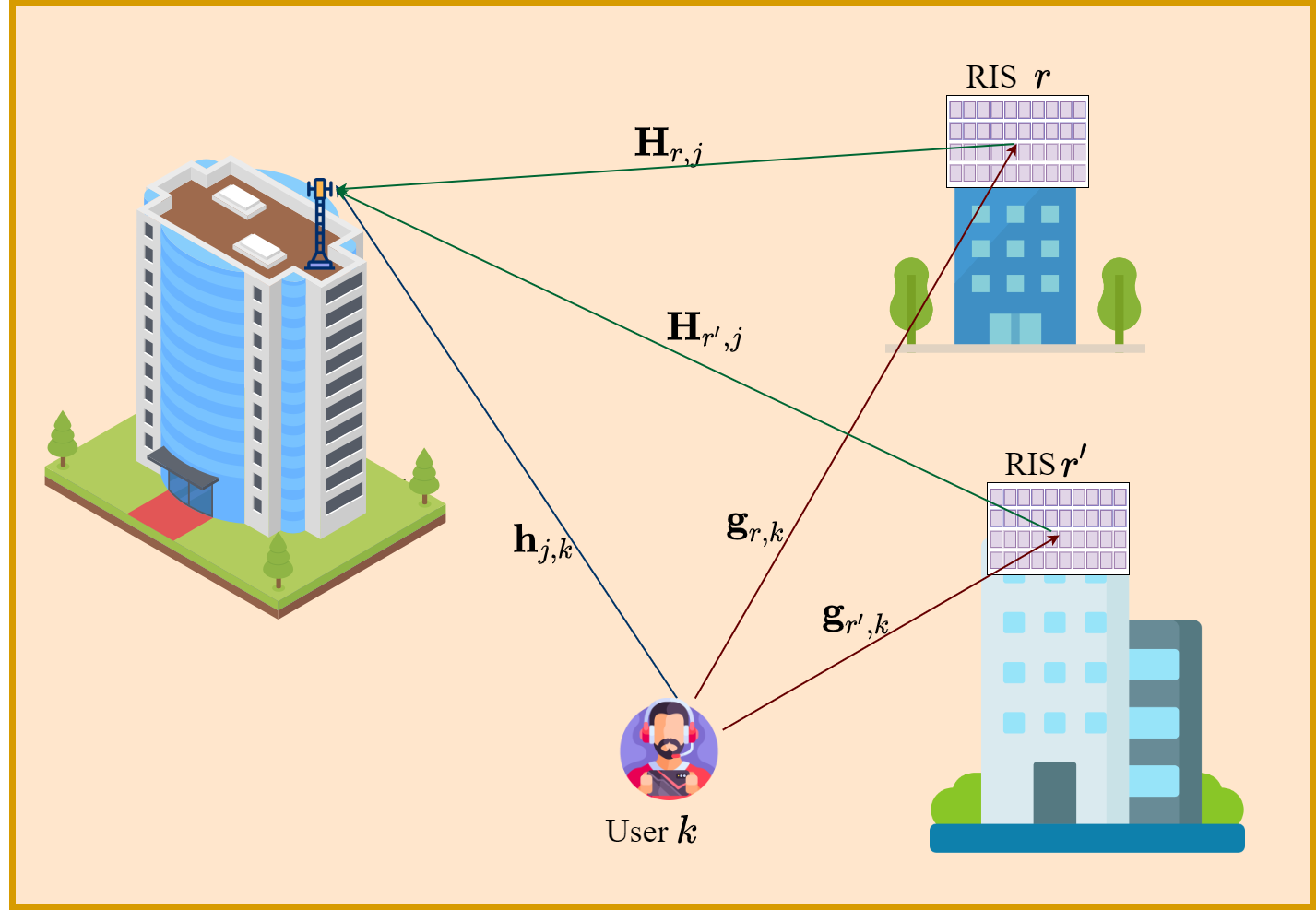}
    \caption{}
        \label{realSensor}
    \end{subfigure}
        \caption{(a) RIS aided cell-free MIMO system. (b) Detailed illustration of the channels}
\end{figure}

\subsection{Network Description}
We consider a cell-free massive MIMO system with $J$ radio units (APs) with $N$ antennas and $K$ single antenna users that are randomly located in an area. The transmission is assisted with the help of $R$ RIS panels each of which has $M$ elements. See Fig.~\ref{Prinzipbild} and Fig.~\ref{realSensor}. We denote the set of APs, RISs, and users as $\mathcal{J} = \{1, 2, \dots, J\}$, $\mathcal{R} = \{1, 2, \dots, R\}$, and $\mathcal{K} = \{1, 2, \dots, K\}$, respectively. Table~\ref{table:cqi} lists the full list of notations. It is assumed that all of the $K$ users are served by all the APs and RISs in the same time-frequency resource. The transmission from APs to the users (downlink) and the transmission from the users to the APs (uplink) are all in TDD operation. Hence, each coherence interval is divided into three sub-intervals:~i)~uplink channel training.~ii)~uplink payload data transmission.~iii)~downlink payload data transmission. Each AP is responsible for estimating the channel to all the RISs and users. The estimated channel is then used to precode the transmitted signal to the users in the downlink transmission and decode the received signal from the users in the uplink transmission. Thanks to the TDD mode operation, the channels are reciprocal which means the channel gains of the uplink and downlink are the same. The usage of TDD mode requires perfect hardware chain calibration \cite{kaltenberger2010relative,ngo2017cell}. Investigation of the impact of imperfect hardware chain calibration is left for future work. 

The RISs are modeled with a phase shift matrix denoted by $\boldsymbol{\Phi}_r = \mathrm{diag}(e^{j \theta_{r, 1}}, e^{j \theta_{r, 2}}, \dots, e^{j \theta_{r, M}})$ where $\theta_{r, m} \in [0, 2\pi)$ is the phase shift of the $m$th element of the $r$th RIS. By assuming $\mathbf{H}_{r, j}\in \mathbb{C}^{M \times N}$, $\mathbf{h}_{j, k} \in \mathbb{C}^{N \times 1}$
 and $\mathbf{g}_{r, k} \in \mathbb{C}^{M \times 1}$ as the channel between the $r$th RIS and the $j$th AP, that between $j$th AP and $k$th user and the channel between $r$th RIS and $k$th user, the cascaded channel between AP $j$ and user $k$ can be written as $\mathbf{g}_{r,j,k} = \mathbf{H}_{r, j} \boldsymbol{\Phi}_r \mathbf{g}_{r, k}$. Furthermore, by denoting $\mathbf{q}_{r, j, k} = \mathbf{g}_{r, j, k} + \mathbf{h}_{j, k}$, the received signal at the $j$th AP is given by
 \begin{align} \label{rec}
     \mathbf{y}_j = \sqrt{\rho} \sum_{k = 1}^{K} \sum_{r = 1}^{R}  \mathbf{q}_{r, j, k} x_k + \mathbf{\check{n}}_j = \sum_{r = 1}^{R} \mathbf{Q}_{j, r} \mathbf{\check{x}} + \mathbf{\check{n}}_j,
 \end{align}
 where $\rho$ is the average transmit power of each user, $\mathbf{\check{x}} = \left[x_1, x_2, \dots, x_K   \right]^T$ are the transmitted symbols of all the $K$ users, and $\mathbf{\check{n}}_j \sim \mathcal{CN}(\mathbf{0}, \sigma^2 \mathbf{I}_N)$ is the additive white Gaussian noise (AWGN). 

 \subsection{Channel Model}
Since the users are located randomly and due to the existence of random blockages such as trees and buildings, the direct channel between APs and the users may be blocked. As in \cite{zhi2021two}, we adopt the Rayleigh fading model to describe the channel between AP $j$ and user $k$ as follows
\begin{equation}
    \mathbf{h}_{j, k} = \sqrt{\gamma_{j, k}} \Tilde{\mathbf{h}}_{j, k}, \hspace{3mm} j \in \mathcal{J}, k \in \mathcal{K},
    \label{D_channel}
\end{equation}
where $\gamma_{j, k}$ is the distance dependent path-loss of the AP $j$ and user $k$ link. $\Tilde{\mathbf{h}}_{j, k}$ is the fast fading non-line-of-sight (NLoS) component of the channel with entries that follow independent and identically distributed (i.i.d) complex Gaussian random variable with zero mean and unit variance. 

Considering that the RIS is commonly deployed at the top of high-rise buildings, it has a high probability for the RISs to have a line-of-sight (LoS) link to the APs and the users. Therefore, as in \cite{zhi2021two, han2019large}, we adopt the Rician fading model for the AP $j$-RIS $r$ and RIS $r$-user $k$ channels as follows
\begin{align} 
    \mathbf{H}_{r, j} &= \sqrt{\frac{\beta_{r, j}}{1 + \kappa_{r, j}}} \left( \sqrt{\kappa_{r, j}} \Bar{\mathbf{H}}_{r, j} + \Tilde{\mathbf{H}}_{r, j}   \right) , \hspace{8mm} r \in \mathcal{R}, j \in \mathcal{J}, \\ 
    \mathbf{g}_{r, k} &= \sqrt{\frac{\alpha_{r, k}}{1 + \epsilon_{r, k}}} \left( \sqrt{\epsilon_{r, k}} \Bar{\mathbf{g}}_{r, k} + \Tilde{\mathbf{g}}_{r, k}   \right) , \hspace{13mm} r \in \mathcal{R}, k \in \mathcal{K},
\end{align} 
where $\beta_{r, j}$ and $\alpha_{r, j}$ represent the path-loss coefficients between the $r$th RIS and the $j$th AP and that between the $r$th RIS and the $k$th user, respectively. $\kappa_{r, j}$ and $\epsilon_{r, j}$ are the Rician factors for the links the $r$th RIS and the $j$th AP between, and the $r$th RIS and the $k$th user, respectively. $\Bar{\mathbf{H}}_{r, j}$ and $\Bar{\mathbf{g}}_{r, k}$ denote the LoS components of the channels whereas $\Tilde{\mathbf{H}}_{r, j}$ and $\Tilde{\mathbf{g}}_{r, k}$ are the NLoS fast fading components of the channels between the $r$th RIS and the $j$th AP, and between the $r$th RIS and the $k$th user, respectively.

We assume a uniform squared planner array (USPA) model for the RISs and a uniform linear array (ULA) model for the APs. This is a reasonable assumption since for the cell-free massive MIMO systems, the number of antennas at the APs is significantly less than the number of antennas at the BS in the cellular systems. Thus, the antennas at the APs can be packed in a small area even with ULA structure \cite{wang2020uplink}. Hence, $\Bar{\mathbf{H}}_{r, j}$ and $\Bar{\mathbf{g}}_{r, k}$ can be modelled as follows
\begin{align}
    \Bar{\mathbf{H}}_{r, j} &= \mathbf{a}_N(\varphi_{r, j}) \mathbf{a}_M^H(\psi_{r, j}^a, \psi_{r, j}^e) \\
    \Bar{\mathbf{g}}_{r, k} &= \mathbf{a}_M(\phi_{r, k}^a, \phi_{r, k}^e),
\end{align}
where $\varphi_{r, j}$ is the azimuth angle-of-arrival (AoA) for the signal received at the AP~$j$ from RIS~$r$. $\psi_{r, j}^a$ ($\psi_{r, j}^e$) is the azimuth (elevation) angle-of-departure (AoD) from the $r$th RIS towards the $j$th AP. $\phi_{r, k}^a$ ($\phi_{r, k}^e$) is the azimuth (elevation) AoA to the $r$th RIS from the $k$th user. Furthermore, $\mathbf{a}_{X}(\vartheta^a, \vartheta^e) \in \mathbb{C}^{X \times 1}$ is the $x$th entry of the array response vector\cite{zhi2021two}
\begin{equation}
    [\mathbf{a}_{X}(\vartheta^a, \vartheta^e)]_x = \exp{\left( j \frac{2\pi d}{\lambda} \left( \lfloor (x-1)/\sqrt{X} \rfloor \sin \vartheta^a \sin \vartheta^e  + \left( (x-1)\mod\sqrt{X} \right) \cos \vartheta^e \right)   \right)},
    \label{arv}
\end{equation}
where $d$ and $\lambda$ denote the element spacing and the wavelength, respectively. Note that for the ULA-based model for the APs, the array response vector can be derived by setting $\vartheta^e = \frac{\pi}{2}$ in (\ref{arv}). Also, $\lfloor (x-1)/\sqrt{X} \rfloor$ should be replaced with $x$ for the ULA-based model. In order to simplify the notations, $\mathbf{a}_M(\phi_{r, k}^a, \phi_{r, k}^e)$, $\mathbf{a}_N(\varphi_{r, j})$ and $\mathbf{a}_M^H(\psi_{r, j}^a, \psi_{r, j}^e)$ will be written as $\Bar{\mathbf{g}}_{r, k}$, $\mathbf{a}_{N, r, j}$, and $\mathbf{a}_{M, r, j}$, respectively. Then,  the aggregated channel from user $k$ to AP $j$ will be written as
\begin{align} \label{channelss}
    \mathbf{q}_{r, j, k} &= \mathbf{g}_{r, j, k} + \mathbf{h}_{j, k} = \mathbf{H}_{r, j} \boldsymbol{\Phi}_r \mathbf{g}_{r, k} + \mathbf{h}_{j, k} \\ \nonumber
    &= \underbrace{\eta_{r, j, k}^{(1)} \Bar{\mathbf{H}}_{r, j}  \boldsymbol{\Phi}_r  \Bar{\mathbf{g}}_{r, k}}_{\mathbf{\check{g}}_{r, j, k}^{(1)}} +  \underbrace{\eta_{r, j, k}^{(2)} \Bar{\mathbf{H}}_{r, j} \boldsymbol{\Phi}_r \Tilde{\mathbf{g}}_{r, k}}_{\mathbf{\check{g}}_{r, j, k}^{(2)}} \\ \nonumber
    &+ \underbrace{\eta_{r, j, k}^{(3)} \Tilde{\mathbf{H}}_{r, j} \boldsymbol{\Phi}_r \Bar{\mathbf{g}}_{r, k}}_{\mathbf{\check{g}}_{r, j, k}^{(3)}} + \underbrace{\eta_{r, j, k}^{(4)} \Tilde{\mathbf{H}}_{r, j} \boldsymbol{\Phi}_r \Tilde{\mathbf{g}}_{r, k}}_{\mathbf{\check{g}}_{r, j, k}^{(4)}} + \eta_{j, k}^{(5)} \Tilde{\mathbf{h}}_{j, k} \\ \nonumber
    & = \mathbf{\check{g}}_{r, j, k} + \mathbf{h}_{j, k},
\end{align}
where $\mathbf{\check{g}}_{r, j, k} = \sum_{u = 1}^{4} \mathbf{\check{g}}_{r,  j,  k}^{(u)}$
and $\zeta_{r, j, k} \triangleq \frac{\beta_{r, j} \alpha_{r, k}}{(1 + \kappa_{r, j})(1 + \epsilon_{r, k})}$, $\eta_{r, j, k}^{(1)} \triangleq   \sqrt{\zeta_{r, j, k} \kappa_{r, j} \epsilon_{r, k}}$, $\eta_{r, j, k}^{(2)} \triangleq \sqrt{\zeta_{r, j, k} \kappa_{r, j}}$, $\eta_{r, j, k}^{(3)} \triangleq \sqrt{\zeta_{r, j, k} \epsilon_{r, k}}$,  $\eta_{r, j, k}^{(4)} \triangleq \sqrt{\zeta_{r, j, k}}$ and $\eta_{j, k}^{(5)} \triangleq \sqrt{\gamma_{j, k}}$.
Also, $\mathbf{\check{g}}_{r, j, k}$ and $\mathbf{h}_{j, k}$ are mutually independent.

\section{Channel Estimation}
We use the LMMSE method to estimate the aggregated channel. In this case, it is assumed that in the uplink, $\tau_p$ time slots are used for training and $\tau_u = \tau_c - \tau_p$ time slots dedicated for data transmission where $\tau_c$ is the length in which the channel response is constant. We use $\tau_p$ mutually orthogonal pilot sequences for channel estimation. Let us denote $\boldsymbol{\varrho}_k \in \mathbb{C}^{\tau_p \times 1}$ as the pilot sequence for the $k$th user with $\rVert \boldsymbol{\varrho}_k \rVert^2 = 1$. Furthermore, let us define $\mathcal{P}_k$ as the index subset of the users that use the same pilot sequences as user $k$ including itself. The received training signal at the $j$th AP is then given by
\begin{equation}
    \mathbf{Y}_j = \sqrt{\rho \tau_p} \sum_{k = 1}^{K} \sum_{r = 1}^{R}  \mathbf{q}_{r, j, k} \boldsymbol{\varrho}_k^H + \mathbf{\check{N}}_j,
\end{equation}
where $\mathbf{\check{N}}_j \in \mathbb{C}^{N \times \tau_p}$ is the additive white Gaussian noise matrix, each element of which follows the Gaussian distribution with zero mean and variance of $\sigma^2$. In order to estimate $\hat{\mathbf{q}}_{r, j, k}$, the AP $j$ multiplies the received signal with $\frac{\boldsymbol{\varrho}_k}{\sqrt{\rho \tau_p}}$ as follows
\begin{align} \label{LS}
    \mathbf{y}_{j, k} =  \mathbf{Y}_j \frac{\boldsymbol{\varrho}_k}{\sqrt{\rho \tau_p}} &= \sum_{r = 1}^{R}  \mathbf{q}_{r, j, k} +  \sum_{l \in \mathcal{P}_k \backslash \{k\}} \sum_{r = 1}^{R}  \mathbf{q}_{r, j, l}\boldsymbol{\varrho}_k^H \boldsymbol{\varrho}_l + \Tilde{\mathbf{n}}_j \\ 
    &= \mathbf{q}_{j, k} + \sum_{l \in \mathcal{P}_k \backslash \{k\}} \mathbf{q}_{j, l} + \Tilde{\mathbf{n}}_j,
    \label{y_m_k}
\end{align}
where $\mathbf{q}_{j, k} \triangleq  \sum_{r = 1}^{R}  \mathbf{q}_{r, j, k} = \sum_{r=1}^{R}\mathbf{\check{g}}_{r, j, k} +  \mathbf{h}_{j, k}$ and $\Tilde{\mathbf{n}}_j \triangleq \mathbf{\check{N}}_j \frac{\boldsymbol{\varrho}_k}{\sqrt{\rho \tau_p}}$.

\begin{thm}\label{lmmse}
The LMMSE estimation of the channel $\mathbf{q}_{j,  k}$ with the observation vector of $\mathbf{y}_{j,  k}$ is given by
\begin{align} \label{est_r1}
     \hat{\mathbf{q}}_{j,  k} &=  \mathbf{\check{g}}_{j,  k}^{(1)} + \sum_{u = 2}^{4} \mathbf{A}_{j,  k}\mathbf{\check{g}}_{j,  k}^{(u)} + \mathbf{A}_{j,  k} \mathbf{h}_{j,  k} + \sum_{l \in \mathcal{P}_k \backslash \{k\}} \mathbf{A}_{j,  k} \left( \sum_{u = 2}^{4}  \mathbf{\check{g}}_{j,  l}^{(u)} +  \mathbf{h}_{j,  l} \right) \\ \nonumber
    &+  \mathbf{A}_{j,  k} \Tilde{\mathbf{n}}_j,
\end{align}
where 
\begin{align}
     \mathbf{A}_{j,  k}  &= \Big(\sum_{r=1}^R   \xi_{r, j,  k} \mathbf{a}_{N, r, j} \mathbf{a}_{N, r, j}^H + \chi_{j,  k} \mathbf{I}_N \Big)\Big( \sum_{r=1}^R \omega_{r, j,  k} \mathbf{a}_{N, r, j} \mathbf{a}_{N, r, j}^H + \nu_{j,  k }  \mathbf{I}_N  \Big)^{-1},
\end{align}
and
\begin{align}
    \xi_{r, j,  k} &\triangleq M(\eta_{r, j, k}^{(2)})^2, \\
    \mu_{j,  k} &\triangleq \sum_{r=1}^R ( M ( (\eta_{r, j, k}^{(3)})^2 + (\eta_{r, j, k}^{(4)})^2 ) + (\eta_{j, k}^{(5)})^2 ), \\
     \delta_{j,  k, l} &\triangleq \sum_{r = 1}^R \eta_{r, j, k}^{(3)} \eta_{r, j, l}^{(3)} \mathrm{Tr}\{ \Bar{\mathbf{g}}_{r, k} \Bar{\mathbf{g}}_{r, l}^H  \}, \\
     \omega_{r, j,  k} &\triangleq \xi_{r, j,  k} + \sum_{l \in \mathcal{P}_k \backslash \{k\}} \xi_{r, j,  l}, \\
     \nu_{j,  k} &\triangleq  \mu_{j,  k} +  \sum_{l \in \mathcal{P}_k \backslash \{k\}}\mu_{j,  l} + 2 \sum_{l \in \mathcal{P}_k \backslash \{k\}} \delta_{j,  k, l} + \frac{\sigma^2}{\rho \tau_p}, \\
     \chi_{j,  k} &\triangleq \mu_{j,  k} + \sum_{l \in \mathcal{P}_k \backslash \{k\}}\delta_{j,  k, l}.
\end{align}

\end{thm}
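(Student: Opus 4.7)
The plan is to apply the standard LMMSE formula
$\hat{\mathbf{q}}_{j,k} = \mathbb{E}[\mathbf{q}_{j,k}] + \mathrm{Cov}(\mathbf{q}_{j,k},\mathbf{y}_{j,k})\,\mathrm{Cov}(\mathbf{y}_{j,k})^{-1}\bigl(\mathbf{y}_{j,k}-\mathbb{E}[\mathbf{y}_{j,k}]\bigr)$,
identify $\mathbf{A}_{j,k}$ as the coefficient matrix, and then substitute back the explicit expression for the centered observation. First, using the decomposition (\ref{channelss}), the only deterministic component of $\mathbf{q}_{j,k}$ is the LoS--LoS term, so $\mathbb{E}[\mathbf{q}_{j,k}]=\mathbf{\check{g}}_{j,k}^{(1)}$, and analogously $\mathbb{E}[\mathbf{y}_{j,k}] = \sum_{l\in\mathcal{P}_k}\mathbf{\check{g}}_{j,l}^{(1)}$. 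All remaining work is to compute the two covariance matrices.

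Second, I would compute the self-covariance of each random component using three elementary identities: $\boldsymbol{\Phi}_r\boldsymbol{\Phi}_r^H = \mathbf{I}_M$, $\|\mathbf{a}_{M,r,j}\|^2 = \|\bar{\mathbf{g}}_{r,k}\|^2 = M$ from the array-response definition, and $\mathbb{E}[\tilde{\mathbf{H}}\mathbf{B}\tilde{\mathbf{H}}^H]=\mathrm{Tr}(\mathbf{B})\mathbf{I}_N$ for an i.i.d.\ $\mathcal{CN}(0,1)$ matrix. These yield $\mathrm{Cov}(\mathbf{\check{g}}_{r,j,k}^{(2)}) = \xi_{r,j,k}\mathbf{a}_{N,r,j}\mathbf{a}_{N,r,j}^H$ (rank-one, inherited from the deterministic $\bar{\mathbf{H}}_{r,j}$), $\mathrm{Cov}(\mathbf{\check{g}}_{r,j,k}^{(3)})=M(\eta_{r,j,k}^{(3)})^2\mathbf{I}_N$, $\mathrm{Cov}(\mathbf{\check{g}}_{r,j,k}^{(4)})=M(\eta_{r,j,k}^{(4)})^2\mathbf{I}_N$, and $\mathrm{Cov}(\mathbf{h}_{j,k})=(\eta_{j,k}^{(5)})^2\mathbf{I}_N$; summing the isotropic coefficients over $r$ reproduces $\mu_{j,k}$, while the rank-one pieces accumulate into $\sum_r \xi_{r,j,k}\mathbf{a}_{N,r,j}\mathbf{a}_{N,r,j}^H$.

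Third, and this is the delicate step, I would handle the cross-covariances between different users sharing the pilot. Cross terms involving $\mathbf{\check{g}}^{(2)}$ or $\mathbf{\check{g}}^{(4)}$ vanish because $\tilde{\mathbf{g}}_{r,k}$ and $\tilde{\mathbf{g}}_{r,l}$ are independent for $k\neq l$; however, cross terms in $\mathbf{\check{g}}^{(3)}$ do \emph{not} vanish, because the RIS-to-AP Rayleigh matrix $\tilde{\mathbf{H}}_{r,j}$ is common to every user that routes through RIS $r$. Applying the same trace identity, $\mathrm{Cov}(\mathbf{\check{g}}_{r,j,k}^{(3)},\mathbf{\check{g}}_{r,j,l}^{(3)}) = \eta_{r,j,k}^{(3)}\eta_{r,j,l}^{(3)}\mathrm{Tr}(\bar{\mathbf{g}}_{r,k}\bar{\mathbf{g}}_{r,l}^H)\mathbf{I}_N$, summing over $r$ to $\delta_{j,k,l}\mathbf{I}_N$. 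The direct-link vectors $\mathbf{h}_{j,k}$ are independent across users, and the pilot-projected noise $\tilde{\mathbf{n}}_j$ has covariance $(\sigma^2/(\rho\tau_p))\mathbf{I}_N$ because $\|\boldsymbol{\varrho}_k\|=1$. Collecting terms gives $\mathrm{Cov}(\mathbf{q}_{j,k},\mathbf{y}_{j,k}) = \sum_r \xi_{r,j,k}\mathbf{a}_{N,r,j}\mathbf{a}_{N,r,j}^H + \chi_{j,k}\mathbf{I}_N$ (the $l=k$ self-term plus cross $\delta$-terms to every $l\in\mathcal{P}_k\setminus\{k\}$), and $\mathrm{Cov}(\mathbf{y}_{j,k}) = \sum_r \omega_{r,j,k}\mathbf{a}_{N,r,j}\mathbf{a}_{N,r,j}^H + \nu_{j,k}\mathbf{I}_N$, matching exactly the two factors that define $\mathbf{A}_{j,k}$.

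Finally, I would substitute the explicit form $\mathbf{y}_{j,k}-\mathbb{E}[\mathbf{y}_{j,k}] = \sum_{l\in\mathcal{P}_k}\bigl(\sum_{u=2}^4 \mathbf{\check{g}}_{j,l}^{(u)} + \mathbf{h}_{j,l}\bigr) + \tilde{\mathbf{n}}_j$ into $\hat{\mathbf{q}}_{j,k} = \mathbf{\check{g}}_{j,k}^{(1)} + \mathbf{A}_{j,k}(\mathbf{y}_{j,k}-\mathbb{E}[\mathbf{y}_{j,k}])$ and peel off the $l=k$ summand to recover the layout of (\ref{est_r1}). The main obstacle is being disciplined about which pairs of random terms are correlated: naively treating the users as fully independent would miss the $\delta_{j,k,l}\mathbf{I}_N$ contributions inherited from the shared RIS--AP fading matrix, and dropping the $M$ factor hidden in $\|\bar{\mathbf{g}}_{r,k}\|^2$ and $\|\mathbf{a}_{M,r,j}\|^2$ would corrupt every coefficient downstream; once these two bookkeeping points are handled, the rest is mechanical trace manipulation.
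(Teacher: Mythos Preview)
Your proposal is correct and mirrors the paper's own proof in Appendix~A almost step for step: the paper likewise applies the textbook LMMSE formula, computes $\mathbb{E}\{\mathbf{q}_{j,k}\}$, $\mathbb{E}\{\mathbf{y}_{j,k}\}$, $\mathrm{Cov}\{\mathbf{q}_{j,k},\mathbf{q}_{j,k}\}$, $\mathrm{Cov}\{\mathbf{q}_{j,k},\mathbf{q}_{j,l}\}$, $\mathrm{Cov}\{\mathbf{q}_{j,k},\mathbf{y}_{j,k}\}$, and $\mathrm{Cov}\{\mathbf{y}_{j,k},\mathbf{y}_{j,k}\}$ via the same trace/unitary identities, and then substitutes back the centered observation. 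In particular, you correctly isolate the one non-obvious ingredient the paper also singles out, namely the nonzero cross-user covariance $\delta_{j,k,l}\mathbf{I}_N$ induced by the shared NLoS RIS--AP matrix $\tilde{\mathbf{H}}_{r,j}$.
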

\begin{proof}
See Appendix A.
\end{proof}
Theorem \ref{lmmse} shows that we only need to estimate the aggregated channel matrices including the reflected and the direct channel of the link between user $k$ and all the RISs to AP $j$. In this case, the estimated channel has the same dimension as the AP-user channel as in conventional cell-free massive MIMO systems without RIS.

\section{Uplink Data Transmission}
During the uplink data transmission, the received signal at the $j$th AP is given by 
\begin{equation}
    \mathbf{y}_j = \sqrt{\rho_s}\sum_{i = 1}^{K} \mathbf{q}_{j, i} x_i + \mathbf{\check{n}}_j,
    \label{received}
\end{equation}
where $\rho_s$ is the uplink data transmission power of the users. 
APs are all connected to the CPU through wired or wireless backhauls. The received signal can be processed locally at each AP or can be sent to the CPU for further processing. The benefit of centralized processing is the high computing capacity that is available at the CPU, enabling complex computations to be performed on the network. However, the downside is the limited overhead of sending data to the CPU via the backhaul. By contrast, the AP can process and detect the signal locally using its own estimated channel and then send it to the CPU to be aggregated with the detected signal from other APs to decode the data sent by each user. This scheme has the advantage of low signalling overhead since the CPU does not need to know the estimated channel for each AP. 
Please refer to \cite{bjornson2019making, bjornson2019cell} for further details. Thus, in this paper, it is assumed that the signal is processed locally at the APs and then passed to the CPU for final decoding. 

Let $\mathbf{v}_{j, k} \in \mathbb{C}^{N \times 1}$ be the local combining vector that the $j$th AP uses to decode the signal sent from the $k$ user, the the local estimate of $x_k$ at AP $j$ is represented by $\check{x}_{j, k}$ and given by
\begin{equation}
   \check{x}_{j, k} \triangleq  \mathbf{v}_{j, k}^H \mathbf{y}_j = \sqrt{\rho_s} \mathbf{v}_{j, k}^H \mathbf{q}_{j, k} x_k + \sqrt{\rho_s} \sum_{i = 1, i \neq k}^K {\mathbf{v}_{j, i}^H \mathbf{q}_{j, i} x_i} + \mathbf{v}_{j, i}^H \mathbf{\check{n}}_j.
   \label{local_est}
\end{equation}

The locally estimated signal $ \check{x}_{j, k}$ is then sent to the CPU to obtain the final decoding. In this case, the CPU, obtains a linear average using the deterministic weights $a_{j, k}$ to decode $\check{x}_{j, k}$. From (\ref{local_est}), we can have
\begin{align}
    \hat{x}_k = \sum_{j = 1}^{J}  a_{j, k} \check{x}_{j, k} =  \sum_{j = 1}^{J}  \sqrt{\rho_s} a_{j, k} \mathbf{v}_{j, k}^H \mathbf{q}_{j, k} x_k + \sum_{j = 1}^{J} \sum_{i = 1, i \neq k}^K \sqrt{\rho_s} a_{j, k} {\mathbf{v}_{j, k}^H \mathbf{q}_{j, i} x_i} + \underbrace{\sum_{j = 1}^{J} a_{j, k} \mathbf{v}_{j, k}^H \mathbf{\check{n}}_j}_{\triangleq n_k^\prime}.
    \label{CPU_rec}
\end{align}
With the assumption of $\mathbf{w}_{k, i} \triangleq \left[\mathbf{v}_{1, k}^H \mathbf{q}_{1, i}, \dots, \mathbf{v}_{J, k}^H \mathbf{q}_{J, i}  \right]^T $ and $\mathbf{a}_k \triangleq \left[ a_{1, k}, \dots,  a_{J, k}  \right]^T$, (\ref{CPU_rec}) can be re-written as
\begin{equation}
    \hat{x}_k = \mathbf{a}_k^T \mathbf{w}_{k, k} x_k + \sum_{i = 1, i \neq k}^K \mathbf{a}_k^T \mathbf{w}_{k, i} x_i + n_k^\prime.
\end{equation}
The weighting vector $\mathbf{a}_k$ can be optimized at the CPU in order to maximize the uplink SE (SE). The simplest option would be equal weight averaging, i.e., $\mathbf{a}_k = \left[1/K, \dots, 1/K \right]^T$.


\begin{prop}
Since the CPU only has access to the channel statistics, the SE for the $k$th user is given by
\begin{align} \label{SE}
     \mathrm{SE}_k = \left(1 - \frac{\tau_p}{\tau_c} \right) \log_2\left(1 + \mathrm{SINR}_k \right),
\end{align}
with the effective $\mathrm{SINR}$ given by
\begin{equation}
    \mathrm{SINR}_k = \frac{\rho |\mathbf{a}_k^H \mathbb{E}\{ \mathbf{w}_{k, k} \} |^2}{ \rho ( \mathbb{E}\{ |\mathbf{a}_k^H \mathbf{w}_{k, k}|^2 \} -  |\mathbf{a}_k^H \mathbb{E}\{ \mathbf{w}_{k, k}\}|^2) + \rho \sum_{i=1, i \neq k}^{K} \mathbb{E}\{ |\mathbf{a}_k^H \mathbf{w}_{k, i}|^2 \} + \sigma^2\mathbf{a}_k^H \mathbf{V}_k \mathbf{a}_k},
    \label{sinr}
\end{equation}
where $\mathbf{V}_k = \mathrm{diag}\left( \mathbb{E}\{\rVert \mathbf{v}_{1, k}\rVert^2\}, \dots, \mathbb{E}\{\rVert\mathbf{v}_{J, k}\rVert^2 \} \right)$.
\end{prop}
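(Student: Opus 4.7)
The plan is to apply the standard ``use-and-then-forget'' (UatF) lower bound from the massive MIMO literature, adapted to our cell-free RIS setting in which the CPU possesses only the channel statistics $\mathbb{E}\{\mathbf{w}_{k,k}\}$ rather than the realization $\mathbf{w}_{k,k}$. First I would rewrite the aggregated signal at the CPU in (\ref{CPU_rec}) as
\begin{equation*}
    \hat{x}_k = \underbrace{\sqrt{\rho}\,\mathbf{a}_k^H\mathbb{E}\{\mathbf{w}_{k,k}\}}_{\text{deterministic effective gain}}\,x_k + \tilde{n}_k,
\end{equation*}
where $\tilde{n}_k$ collects the channel-gain fluctuation $\sqrt{\rho}\,\mathbf{a}_k^H(\mathbf{w}_{k,k}-\mathbb{E}\{\mathbf{w}_{k,k}\})x_k$, the multi-user interference $\sqrt{\rho}\sum_{i\neq k}\mathbf{a}_k^H\mathbf{w}_{k,i}x_i$, and the processed thermal noise $n_k'$.

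The crucial observation is that $\tilde{n}_k$ is uncorrelated with $x_k$: for the fluctuation term this follows from the channels being independent of the data symbol $x_k$ together with $\mathbb{E}\{\mathbf{w}_{k,k}-\mathbb{E}\{\mathbf{w}_{k,k}\}\}=\mathbf{0}$; for the interference, from $\{x_i\}$ being i.i.d., zero-mean, unit-variance, and independent of all channels; for the noise, from $\check{\mathbf{n}}_j$ being zero-mean and independent of $x_k$. Consequently, treating $\tilde{n}_k$ as the worst-case Gaussian noise of the same variance yields a single-letter achievable rate of $\log_2(1+\mathrm{SINR}_k)$ per channel use. Multiplying by the fraction $1-\tau_p/\tau_c$ of each coherence block that is devoted to data delivers the prelog factor in (\ref{SE}).

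Finally, I would compute the four second-moment contributions that assemble the denominator of (\ref{sinr}). Using $\mathbb{E}\{|x_k|^2\}=1$ together with the identity $\mathrm{Var}(Z)=\mathbb{E}\{|Z|^2\}-|\mathbb{E}\{Z\}|^2$, the beamforming-gain uncertainty term evaluates to $\rho\bigl(\mathbb{E}\{|\mathbf{a}_k^H\mathbf{w}_{k,k}|^2\}-|\mathbf{a}_k^H\mathbb{E}\{\mathbf{w}_{k,k}\}|^2\bigr)$. Each interference term contributes $\rho\,\mathbb{E}\{|\mathbf{a}_k^H\mathbf{w}_{k,i}|^2\}$ by the same argument, again using independence of the symbols across users. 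For the noise power $\mathbb{E}\{|n_k'|^2\}$, independence of $\check{\mathbf{n}}_j$ across $j$ kills cross terms and leaves $\sigma^2\sum_j|a_{j,k}|^2\mathbb{E}\{\|\mathbf{v}_{j,k}\|^2\}=\sigma^2\mathbf{a}_k^H\mathbf{V}_k\mathbf{a}_k$. Summing these terms matches the denominator in (\ref{sinr}) exactly. The only real subtlety (rather than a serious obstacle) is verifying the independence of $\mathbf{v}_{j,k}$ from the data-phase noise $\check{\mathbf{n}}_j$: since $\mathbf{v}_{j,k}$ is a deterministic function of the LMMSE estimate formed during the \emph{pilot} phase, it depends on $\tilde{\mathbf{n}}_j$ but is statistically independent of the disjoint data-slot noise $\check{\mathbf{n}}_j$, which justifies the simple product form of the noise variance.
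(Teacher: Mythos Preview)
Your proposal is correct and is precisely the standard use-and-then-forget argument; the paper itself does not give a proof but simply cites \cite[Appendix~A]{bjornson2019making}, which contains exactly the derivation you outline. Your reconstruction matches that cited argument in structure and detail, including the noise-independence subtlety you flag at the end.
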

\begin{proof}
See \cite[Appendix A]{bjornson2019making}
\end{proof}

\begin{cor}
The effective $\mathrm{SINR}$ in (\ref{sinr}) is maximized by 
\begin{equation}
    \mathbf{a}_k = \left( \rho (\mathbb{E}\{ \mathbf{w}_{k, k} \mathbf{w}_{k, k}^H \}  -  \mathbb{E}\{ \mathbf{w}_{k, k} \} \mathbb{E}\{ \mathbf{w}_{k, k}^H \}) + \rho \sum_{i=1, i \neq k}^{K}  \mathbb{E}\{ \mathbf{w}_{k, i} \mathbf{w}_{k, i}^H \} + \sigma^2 \mathbf{V}_k \right)^{-1} \mathbb{E}\{ \mathbf{w}_{k, k} \},
\end{equation}
and the $\mathrm{SINR}$ becomes
\begin{equation}
     \mathrm{SINR}_k =  \mathbf{s}_k^H \mathbf{I}_{k, i}^{-1} \mathbf{s}_k,
     \label{SINR_here}
\end{equation}
where
\begin{align}
    \mathbf{s}_k &=  \sqrt{\rho} \mathbb{E}\{ \mathbf{w}_{k, k} \} \\
    \mathbf{I}_{k, i} &= \rho (\mathbb{E}\{ \mathbf{w}_{k, k} \mathbf{w}_{k, k}^H \}  -  \mathbb{E}\{ \mathbf{w}_{k, k} \} \mathbb{E}\{ \mathbf{w}_{k, k}^H \}) + \rho \sum_{i=1, i \neq k}^{K}  \mathbb{E}\{ \mathbf{w}_{k, i} \mathbf{w}_{k, i}^H \} + \sigma^2 \mathbf{V}_k
\end{align}
\end{cor}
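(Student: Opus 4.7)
The plan is to recognize the SINR expression in (\ref{sinr}) as a generalized Rayleigh quotient in the unknown combining vector $\mathbf{a}_k$, and then invoke the standard maximization result. First I would rewrite the numerator using $\rho |\mathbf{a}_k^H \mathbb{E}\{\mathbf{w}_{k,k}\}|^2 = \mathbf{a}_k^H (\sqrt{\rho}\mathbb{E}\{\mathbf{w}_{k,k}\})(\sqrt{\rho}\mathbb{E}\{\mathbf{w}_{k,k}\})^H \mathbf{a}_k = \mathbf{a}_k^H \mathbf{s}_k \mathbf{s}_k^H \mathbf{a}_k$. For the denominator I would pull $\mathbf{a}_k^H$ and $\mathbf{a}_k$ out of the expectations by using $\mathbb{E}\{|\mathbf{a}_k^H \mathbf{w}_{k,i}|^2\} = \mathbf{a}_k^H \mathbb{E}\{\mathbf{w}_{k,i}\mathbf{w}_{k,i}^H\} \mathbf{a}_k$ and $|\mathbf{a}_k^H \mathbb{E}\{\mathbf{w}_{k,k}\}|^2 = \mathbf{a}_k^H \mathbb{E}\{\mathbf{w}_{k,k}\}\mathbb{E}\{\mathbf{w}_{k,k}^H\}\mathbf{a}_k$; the noise term $\sigma^2 \mathbf{a}_k^H \mathbf{V}_k \mathbf{a}_k$ is already in quadratic form. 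Collecting the three contributions identifies the denominator as $\mathbf{a}_k^H \mathbf{I}_{k,i}\mathbf{a}_k$ with $\mathbf{I}_{k,i}$ exactly as defined in the statement.

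At this point the problem reduces to
\begin{equation*}
\max_{\mathbf{a}_k \neq \mathbf{0}} \frac{\mathbf{a}_k^H \mathbf{s}_k \mathbf{s}_k^H \mathbf{a}_k}{\mathbf{a}_k^H \mathbf{I}_{k,i} \mathbf{a}_k},
\end{equation*}
which is a standard generalized Rayleigh quotient. To solve it, I would first confirm that $\mathbf{I}_{k,i}$ is Hermitian positive definite: it is the sum of a covariance matrix (hence positive semidefinite), a sum of positive semidefinite outer-product expectations, and $\sigma^2 \mathbf{V}_k$, which is strictly positive definite as long as each $\mathbb{E}\{\|\mathbf{v}_{j,k}\|^2\}>0$. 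Thus $\mathbf{I}_{k,i}^{1/2}$ and its inverse exist.

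The argument is then a one-line application of Cauchy--Schwarz: with the substitution $\mathbf{b}_k = \mathbf{I}_{k,i}^{1/2}\mathbf{a}_k$, the quotient becomes $|\mathbf{b}_k^H \mathbf{I}_{k,i}^{-1/2}\mathbf{s}_k|^2/\|\mathbf{b}_k\|^2 \le \|\mathbf{I}_{k,i}^{-1/2}\mathbf{s}_k\|^2 = \mathbf{s}_k^H \mathbf{I}_{k,i}^{-1}\mathbf{s}_k$, with equality if and only if $\mathbf{b}_k \propto \mathbf{I}_{k,i}^{-1/2}\mathbf{s}_k$, i.e., $\mathbf{a}_k \propto \mathbf{I}_{k,i}^{-1}\mathbf{s}_k \propto \mathbf{I}_{k,i}^{-1}\mathbb{E}\{\mathbf{w}_{k,k}\}$. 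Since the Rayleigh quotient is invariant under positive scaling of $\mathbf{a}_k$, dropping the $\sqrt{\rho}$ factor recovers the expression stated in the corollary, and substituting this optimum back into the quotient yields $\mathrm{SINR}_k = \mathbf{s}_k^H \mathbf{I}_{k,i}^{-1}\mathbf{s}_k$, matching (\ref{SINR_here}).

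The proof is essentially routine linear algebra; the only nontrivial bookkeeping is verifying positive definiteness of $\mathbf{I}_{k,i}$ and tracking the $\sqrt{\rho}$ normalization between $\mathbf{s}_k$ and $\mathbb{E}\{\mathbf{w}_{k,k}\}$ so that the optimizer given in the corollary is recognized (up to an inessential scalar) as the generalized-eigenvector solution. No new probabilistic or channel-model ingredients beyond the second-order statistics already appearing in the proposition are needed.
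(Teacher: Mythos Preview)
Your proposal is correct and follows essentially the same approach as the paper: recognizing that (\ref{sinr}) is a generalized Rayleigh quotient in $\mathbf{a}_k$ and applying the standard maximization result. The paper simply cites \cite[Lemma B.10]{bjornson2017massive} for this step, whereas you spell out the Cauchy--Schwarz argument explicitly, but the underlying idea is identical.
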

\begin{proof}
It follows from \cite[Lemma B.10]{bjornson2017massive} by noticing that
 (\ref{sinr}) is a generalized Rayleigh quotient with respect to $\mathbf{a}_k$. 
\end{proof}

\begin{thm} \label{exp_thm}
With the following definitions 
\begin{align}
    g_{r, j} &\triangleq  \mathbf{a}_{N, r, j}^H \mathbf{A}_{j,  k} \mathbf{a}_{N, r, j} = g_{r, j}^H \\ 
    f_{r, j, k}(\boldsymbol{\Phi}_r) &\triangleq \mathbf{a}_{M, r, j}^H \boldsymbol{\Phi}_r \Bar{\mathbf{g}}_{r, k} \\
     h_{r,j,k} &\triangleq \mathbf{a}_{N, r, j}^H   \mathbf{A}_{j,  k}^H \mathbf{A}_{j,  k} \mathbf{a}_{N, r, j} \\ 
     k_{j,k, h,i}(\boldsymbol{\Phi}) &\triangleq \Bar{\mathbf{g}}_{r_1, k}^H \boldsymbol{\Phi}_{r_1}^H \Bar{\mathbf{H}}_{r_1, j}^H
  \Bar{\mathbf{H}}_{r_2, j}  \boldsymbol{\Phi}_{r_2}  \Bar{\mathbf{g}}_{r_2, i}
  \Bar{\mathbf{g}}_{r_3, i}^H \boldsymbol{\Phi}_{r_3}^H \Bar{\mathbf{H}}_{r_3, h}^H   
   \Bar{\mathbf{H}}_{r_4, h}  \boldsymbol{\Phi}_{r_4}  \Bar{\mathbf{g}}_{r_4, k} \\
   z_{j,k,h}({r_{1,2,3,1}}) &\triangleq \mathrm{Tr}\{\Bar{\mathbf{H}}_{r_1, j}^H \mathbf{A}_{j,  k}^H \mathbf{a}_{N, r_2, j}  \mathbf{a}_{N, r_3, h}^H \mathbf{A}_{h,  k} \Bar{\mathbf{H}}_{r_1, h} \} \\
   a_{j, k} &\triangleq \mathrm{Tr}\{ \mathbf{A}_{j,  k} \} \\
    v_{j, h, k} &\triangleq \mathrm{Tr}\{ \mathbf{A}_{j,  k} \} \mathrm{Tr}\{ \mathbf{A}_{h,  k} \} \\ 
    w_{j, k} &\triangleq \mathrm{Tr}\{ \mathbf{A}_{j,  k}^H \mathbf{A}_{j,  k}\} 
\end{align}
\begin{align}
     m_{j, k, h, i}({r_{x, y}}) &\triangleq (\mathbf{\check{g}}_{r_x, j, k}^{(1)})^H \mathbf{\check{g}}_{r_y, h, i}^{(1)}
\end{align}
\begin{align}
    b_{j, h}(r_{x, y}) &\triangleq \mathbf{a}_{N, r_x, j}^H \mathbf{a}_{N, r_y, h}
\end{align}
\begin{align}
    c_{r, k, i} &\triangleq \Bar{\mathbf{g}}_{r, k}^H  \Bar{\mathbf{g}}_{r, i}
\end{align}
\begin{align}
   d_{j,k,h,k}(r_{2,4}) &\triangleq \mathrm{Tr} \{  \Bar{\mathbf{H}}_{r_1, j}^H  \mathbf{A}_{j,  k}^H \Bar{\mathbf{H}}_{{r_2}, j}   \Bar{\mathbf{H}}_{{r_2}, h}^H 
  \mathbf{A}_{h,  k}  \Bar{\mathbf{H}}_{{r_4}, h}  \}
\end{align}
\begin{align}
s_{j, h}(r_{1, 2,4}) &\triangleq \mathbf{a}_{N, r_1, j}^H  \Bar{\mathbf{H}}_{r_2, j} \Bar{\mathbf{H}}_{r_2, h} ^H
 \mathbf{a}_{N, r_4, h}
\end{align}
\begin{align}
   h_{j,k}(r_{1, 2}) \mathrm{Tr}\{ \mathbf{A}_{j,  k}^H
     \mathbf{a}_{N, r_1, j} \mathbf{a}_{N, r_2, j}^H \mathbf{A}_{j,  k} \}
\end{align}
and by using the MRC detector and
\begin{align}\nonumber
    \mathbb{E} \{ \mathbf{w}_{k, k} \} &=  \mathbb{E}\left\{ \left[\hat{\mathbf{q}}_{1, k}^H \mathbf{q}_{1, k}, \dots, \hat{\mathbf{q}}_{j, k}^H \mathbf{q}_{j, k}, \dots, \hat{\mathbf{q}}_{J, k}^H \mathbf{q}_{J, k}  \right]^T \right\} \\ \label{}
    &=  \Big[\mathbb{E} \{\hat{\mathbf{q}}_{1, k}^H \mathbf{q}_{1, k}\}, \dots, \mathbb{E} \{\hat{\mathbf{q}}_{j, k}^H \mathbf{q}_{j, k}\}, \dots, \mathbb{E}\{\hat{\mathbf{q}}_{J, k}^H \mathbf{q}_{J, k}\}  \Big]^T, \label{Ewkk1} 
\end{align}
the $j$th entry of (\ref{Ewkk1}) could be written as follows
\begin{align} \nonumber
    \mathbb{E} \{\hat{\mathbf{q}}_{j, k}^H \mathbf{q}_{j, k}\} &= \sum_{r=1}^R  \Bigg(   \sum_{r^\prime=1}^R \Big(\eta_{r, j, k}^{(1)} \eta_{r^\prime, j, k}^{(1)} \Bar{\mathbf{g}}_{r, k}^H \boldsymbol{\Phi}_r^H \Bar{\mathbf{H}}_{r, j}^H \Bar{\mathbf{H}}_{r^\prime, j}  \boldsymbol{\Phi}_{r^\prime}  \Bar{\mathbf{g}}_{r^\prime, k} \Big) + (\eta_{r, j, k}^{(2)})^2 M g_{r, j}\\ \nonumber
     &+  (\eta_{r, j, k}^{(3)})^2 M \mathrm{Tr}\{ \mathbf{A}_{j,  k}^H \}+ (\eta_{r, j, k}^{(4)})^2 M \mathrm{Tr}\{ \mathbf{A}_{j,  k}^H \} + (\eta_{j, k}^{(5)})^2 \mathrm{Tr}\{ \mathbf{A}_{j,  k}^H \} \\ \label{exp_qhq} 
     &+ \sum_{l \in \mathcal{P}_k \backslash \{k\}} \Big( \eta_{r, j, k}^{(3)} \eta_{r, j, l}^{(3)}  \mathrm{Tr}\{ \mathbf{A}_{j,  k}^H \}\Bar{\mathbf{g}}_{r, l}^H  \Bar{\mathbf{g}}_{r, k} \Big) \Bigg).
\end{align}
Furthermore, by assuming
\begin{align}\label{matrixE}
    \mathbb{E} \{ \mathbf{w}_{k, i} \mathbf{w}_{k, i}^H \} &=   \begin{bmatrix}
        \mathbb{E}\{|\hat{\mathbf{q}}_{1, k}^H \mathbf{q}_{1, i}|^2\} & \dots &  \mathbb{E}\{\hat{\mathbf{q}}_{1, k}^H \mathbf{q}_{1, i} (\hat{\mathbf{q}}_{j, k}^H \mathbf{q}_{j, i})^H\} & \dots &   \mathbb{E}\{\hat{\mathbf{q}}_{1, k}^H \mathbf{q}_{1, i}  (\hat{\mathbf{q}}_{J, k}^H \mathbf{q}_{J, i})^H \}\\
        \vdots & \ddots \\
        \mathbb{E}\{\hat{\mathbf{q}}_{j, k}^H \mathbf{q}_{j, i}  (\hat{\mathbf{q}}_{1, k}^H \mathbf{q}_{1, i})^H\} & \dots & \mathbb{E}\{| \hat{\mathbf{q}}_{j, k}^H \mathbf{q}_{j, i}|^2\} & \dots &  \mathbb{E}\{\hat{\mathbf{q}}_{j, k}^H \mathbf{q}_{j, i}  (\hat{\mathbf{q}}_{J, k}^H \mathbf{q}_{J, i})^H\} \\
        \vdots & & & \ddots \\
        \mathbb{E}\{\hat{\mathbf{q}}_{J, k}^H \mathbf{q}_{J, i}  (\hat{\mathbf{q}}_{1, k}^H \mathbf{q}_{1, i})^H\} & \dots & \mathbb{E}\{\hat{\mathbf{q}}_{J, k}^H \mathbf{q}_{J, i}  (\hat{\mathbf{q}}_{j, k}^H \mathbf{q}_{j, i})^H\} & \dots & \mathbb{E}\{|\hat{\mathbf{q}}_{J, k}^H \mathbf{q}_{J, i}|^2\}
    \end{bmatrix},
\end{align}
the $(h, k)$th entry of (\ref{matrixE}) with $h \neq k$ is given by

\begin{align} \label{E_jkjihkhi}
     \mathbb{E}\{& \hat{\mathbf{q}}_{j, k}^H \mathbf{q}_{j, i}  (\hat{\mathbf{q}}_{h, k}^H \mathbf{q}_{h, i})^H\} = \sum_{r_1=1}^R\sum_{r_2=1}^R\sum_{r_3=1}^R\sum_{r_4=1}^R E_{j,k,h,i}^{\mathrm{CROSS-BS}}(\boldsymbol{\Phi}) + \sum_{l \in \mathcal{P}_k \backslash \{k\}} E_{j,l,h,i}^{\mathrm{PC-CROSS-BS}}(\boldsymbol{\Phi}),
\end{align}
where
\begin{align} \nonumber
    E_{j,k,h,i}^{\mathrm{CROSS-BS}}(\boldsymbol{\Phi}) &= \eta_{r_1, j, k}^{(1)} \eta_{r_2, j, i}^{(1)}  \eta_{r_3, h, i}^{(1)}   \eta_{r_4, h, k}^{(1)}
 k_{j,k, h,i}(\boldsymbol{\Phi}) \\\nonumber
   &+ \eta_{r_1, j, k}^{(2)} \eta_{r_2, j, i}^{(1)} \eta_{r_3, h, i}^{(1)} \eta_{r_1, h, k}^{(2)} f_{r_2, j, i}(\boldsymbol{\Phi}_{r_2}) f_{r_3, h, i}^H(\boldsymbol{\Phi}_{r_3})  z_{j,k,h}({r_{1,2,3,1}}) \\\nonumber
   &+ \eta_{r_1, j, k}^{(3)} \eta_{r_1, j, i}^{(3)}  \eta_{r_3, h, i}^{(1)}    \eta_{r_4, h, k}^{(1)}  a_{j, k} f_{r_4, h, k}(\boldsymbol{\Phi}_{r_4}) f_{r_3, h, i}^H(\boldsymbol{\Phi}_{r_3}) c_{r_1, k, i}
 b_{h, h}(r_{3, 4}) \\\nonumber
 &+ \eta_{r_1, j, k}^{(1)}  \eta_{r_2, j, i}^{(2)}  \eta_{r_2, h, i}^{(2)} \eta_{r_4, h, k}^{(1)} f_{r_1, j, k}^H(\boldsymbol{\Phi}_{r_1}) f_{r_4, h, k}(\boldsymbol{\Phi}_{r_4})
 s_{j, h}(r_{1, 2,4}) \\\nonumber
 &+ \eta_{r_1, j, k}^{(1)} \eta_{r_2, j, i}^{(1)}  \eta_{r_3, h, i}^{(3)} \eta_{r_3, h, k}^{(3)} a_{h, k} f_{r_1, j, k}^H(\boldsymbol{\Phi}_{r_1}) f_{r_2, j, i}(\boldsymbol{\Phi}_{r_2})
 b_{j, j}(r_{1, 2}) c_{r_3, i, k} \\\nonumber
   &+ \eta_{r_1, j, k}^{(3)} \eta_{r_1, j, i}^{(3)} \eta_{r_3, h, i}^{(3)} \eta_{r_3, h, k}^{(3)} v_{j, h, k} c_{r_1, k, i} c_{r_3, i, k}  + \eta_{{r_1}, j, k}^{(2)} \eta_{{r_2}, j, i}^{(2)} \eta_{{r_2}, h, i}^{(2)} \eta_{{r_1}, h, k}^{(2)} d_{j,k,h,k}(r_{2,4}) \\\nonumber
  &+ \eta_{{r_1}, j, k}^{(4)}  \eta_{{r_1}, j, i}^{(4)}  \eta_{{r_1}, h, i}^{(4)}  \eta_{{r_1}, h, k}^{(4)}M v_{j, h, k}  + \eta_{{r_1}, j, k}^{(3)}  \eta_{{r_1}, j, i}^{(4)} \eta_{{r_1}, h, i}^{(4)} \eta_{{r_1}, h, k}^{(3)} M  v_{j, h, k} \\ 
  &+ \eta_{{r_1}, j, k}^{(4)} \eta_{{r_1}, j, i}^{(3)} \eta_{{r_1}, h, i}^{(3)}  \eta_{{r_1}, h, k}^{(4)} M  v_{j, h, k},
\end{align}
in which $\boldsymbol{\Phi}$ is the set of all reflecting coefficients and $E_{j,l,h,i}^{\mathrm{PC-CROSS-BS}}(\boldsymbol{\Phi})$ is given by
\begin{align} \nonumber
    E_{j,l,h,i}^{\mathrm{PC-CROSS-BS}}(\boldsymbol{\Phi}) &= \eta_{r_1, j, l}^{(2)} \eta_{r_2, j, i}^{(1)}  \eta_{r_3, h, i}^{(1)}  \eta_{r_1, h, l}^{(2)} f_{r_2, j, i}(\boldsymbol{\Phi}_{r_2}) f_{r_3, h, i}^H(\boldsymbol{\Phi}_{r_3})  z_{j,k,h}({r_{1,2,3,1}}) \\ \nonumber
    &+ \eta_{r_1, j, l}^{(3)} \eta_{r_1, j, i}^{(3)}  \eta_{r_3, h, i}^{(1)} \eta_{r_4, h, k}^{(1)} a_{j, k} f_{r_4, h, k}(\boldsymbol{\Phi}_{r_4}) f_{r_3, h, i}^H(\boldsymbol{\Phi}_{r_3}) c_{r_1, l, i}
b_{h, h}(r_{3, 4}) \\\nonumber
    &+ \eta_{r_1, j, k}^{(1)} \eta_{r_2, j, i}^{(1)}  \eta_{r_3, h, i}^{(3)} \eta_{r_3, h, l}^{(3)} a_{h, k} f_{r_1, j, k}^H(\boldsymbol{\Phi}_{r_1}) f_{r_2, j, i}(\boldsymbol{\Phi}_{r_2})
b_{j, j}(r_{1, 2}) c_{r_3, i, l} \\\nonumber
   &+ \eta_{r_1, j, l}^{(3)} \eta_{r_1, j, i}^{(3)}  \eta_{r_3, h, i}^{(3)} \eta_{r_3, h, l}^{(3)} v_{j, h, k} c_{r_1, k, i} c_{r_3, i, k} + \eta_{{r_1}, j, l}^{(4)}  \eta_{{r_1}, j, i}^{(4)}  \eta_{{r_1}, h, i}^{(4)}  \eta_{{r_1}, h, l}^{(4)}M v_{j, h, k} \\
&+ \eta_{{r_1}, j, l}^{(3)}  \eta_{{r_1}, j, i}^{(4)} \eta_{{r_1}, h, i}^{(4)} \eta_{{r_1}, h, l}^{(3)} M  v_{j, h, k}  + \eta_{{r_1}, j, l}^{(4)} \eta_{{r_1}, j, i}^{(3)} \eta_{{r_1}, h, i}^{(3)}  \eta_{{r_1}, h, l}^{(4)} M
 v_{j, h, k}.
\end{align}
Also, the $j$th diagonal entry of (\ref{matrixE}) could be written as follows
\begin{align} \nonumber
      \mathbb{E}\{|\hat{\mathbf{q}}_{j, k}^H  \mathbf{q}_{j, i}|^2\} &=  E_{j, k, i}^{\mathrm{C}} +  \sum_{r_1=1}^R\sum_{r_2=1}^R\sum_{r_3=1}^R\sum_{r_4=1}^R \Big( E_{j,k,j,i}^{\mathrm{CROSS-BS}}(\boldsymbol{\Phi}) +E_{r, j, k, i}^{\mathrm{CROSS-USER}}(\boldsymbol{\Phi})+ E_{j, k, i}^{\mathrm{NOISE}}(\boldsymbol{\Phi}) +  \\\label{E_jkji}
      &+ \sum_{l \in \mathcal{P}_k \backslash \{k\}}   E_{j,l,j,i}^{\mathrm{PC-CROSS-BS}}(\boldsymbol{\Phi})  +  E_{j, l, k, i}^{\mathrm{CROSS-USER-PC}}(\boldsymbol{\Phi}) 
     \Big),
\end{align}
with 
\begin{align} \label{sub_1}
    E_{j, k, i}^{\mathrm{C}} = (\eta_{j, k}^{(5)} \eta_{j, i}^{(5)})^2  w_{j, k} + \frac{\sigma^2}{\rho \tau_p} (\eta_{ j, i}^{(5)})^2  w_{j, k} +  \sum_{l \in \mathcal{P}_k \backslash \{k\}} \Big( (\eta_{j, l}^{(5)} \eta_{j, i}^{(5)})^2   w_{j, k}  \Big),
\end{align}
\begin{align} \nonumber
E_{r, j, k, i}^{  \mathrm{CROSS-USER}}(&\boldsymbol{\Phi}) =  \eta_{r_1, j, k}^{(1)} (\eta_{r_2, j, i}^{(3)})^2  \eta_{r_4, j, k}^{(1)}  M f_{r_1, j, k}^H(\boldsymbol{\Phi}_r) f_{r_4, j, k}(\boldsymbol{\Phi}_{r_4}) b_{j, j}(r_{1, 4})   \\ \nonumber
      &+ \eta_{r_1, j, k}^{(1)} (\eta_{r_2, j, i}^{(4)})^2  \eta_{r_4, j, k}^{(1)} f_{r_1, j, k}^H(\boldsymbol{\Phi}_r) f_{r_4, j, k}(\boldsymbol{\Phi}_{r_4}) b_{j, j}(r_{1, 4}) \\ \nonumber
     &+ \eta_{r_1, j, k}^{(1)} (\eta_{j, i}^{(5)})^2  \eta_{r_4, j, k}^{(1)} f_{r_1, j, k}^H(\boldsymbol{\Phi}_r) f_{r_4, j, k}(\boldsymbol{\Phi}_{r_4})  b_{j, j}(r_{1, 4}) \\\nonumber
    &+ (\eta_{r_1, j, k}^{(2)} \eta_{r_2, j, i}^{(3)})^2   M^2 h_{r_1,j,k} +  (\eta_{r_1, j, k}^{(2)} \eta_{r_2, j, i}^{(4)})^2  M h_{r_1,j,k} + (\eta_{r_1, j, k}^{(2)} \eta_{j, i}^{(5)})^2 M h_{r_1,j,k} \\ \nonumber
    &+   (\eta_{r_1, j, k}^{(3)})^2 \eta_{r_2, j, i}^{(1)}  \eta_{r_3, j, i}^{(1)} M m_{j, i, j, i}({r_{3, 2}}) w_{j, k}  + (\eta_{r_1, j, k}^{(3)} \eta_{r_2, j, i}^{(2)})^2     M^2 h_{r_2, j, k} \\\nonumber
    &+ (\eta_{r_1, j, k}^{(3)} \eta_{j, i}^{(5)})^2 M  w_{j, k}  + (\eta_{r_1, j, k}^{(4)})^2 \eta_{r_2, j, i}^{(1)} \eta_{r_3, j, i}^{(1)}  M m_{j, i, j, i}({r_{3, 2}}) w_{j, k} \\\nonumber
     &+ (\eta_{j, k}^{(5)})^2   \eta_{r_2, j, i}^{(1)} \eta_{r_3, j, i}^{(1)} m_{j, i, j, i}({r_{3, 2}}) w_{j, k}  + (\eta_{r_1, j, k}^{(4)} \eta_{r_2, j, i}^{(2)})^2   M^2 h_{r_2,j,k} + (\eta_{r_1, j, k}^{(4)} \eta_{j, i}^{(5)})^2  M  w_{j, k} \\ 
     &+  (\eta_{j, k}^{(5)} \eta_{r_1, j, i}^{(2)})^2 M h_{r_1,j,k}    + (\eta_{j, k}^{(5)} \eta_{r_1, j, i}^{(3)})^2  M  w_{j, k} + (\eta_{j, k}^{(5)} \eta_{r_1, j, i}^{(4)})^2  w_{j, k},
\end{align}
and
\begin{align} \nonumber
    E_{r, j, k, i}^{\mathrm{NOISE}}(\boldsymbol{\Phi}) &= 
         \frac{\sigma^2}{\rho \tau_p} \eta_{r_1, j, i}^{(1)} \eta_{r_2, j, i}^{(1)} f_{r_1, j, i}(\boldsymbol{\Phi}_{r_1}) f_{r_2, j, i}^H(\boldsymbol{\Phi}_{r_2})   h_{j,k}(r_{1, 2}) \\ 
     &+ \frac{\sigma^2}{\rho \tau_p} (\eta_{r_1, j, i}^{(2)})^2 M h_{i, j} + \frac{\sigma^2}{\rho \tau_p} (\eta_{r_1, j, i}^{(3)})^2 M  w_{j, k}  + \frac{\sigma^2}{\rho \tau_p} (\eta_{r_1, j, i}^{(4)})^2   w_{j, k},
\end{align}
\begin{align} \nonumber
   E_{j, l, k, i}^{\mathrm{CROSS-USER-PC}}(&\boldsymbol{\Phi}) = 
    (\eta_{r_1, j, l}^{(2)}  \eta_{r_2, j, i}^{(3)})^2 M^2 h_{r_1,j,k}  + (\eta_{r_1, j, l}^{(2)}  \eta_{r_2, j, i}^{(4)})^2 M h_{r_1,j,k} + (\eta_{r_1, j, l}^{(2)} \eta_{j, i}^{(5)})^2  M h_{r_1,j,k} \\ \nonumber
     &+  (\eta_{r_1, j, l}^{(3)})^2 \eta_{r_2, j, i}^{(1)}  \eta_{r_3, j, i}^{(1)} M m_{j, i, j, i}({r_{3, 2}}) w_{j, k}   + (\eta_{r_1, j, l}^{(3)} \eta_{r_2, j, i}^{(2)})^2     M^2 h_{r_2, j, k} \\ \nonumber
     &+  (\eta_{r_1, j, l}^{(3)} \eta_{r_2, j, i}^{(3)})^2    M^2  w_{j, k}  +  (\eta_{r_1, j, l}^{(3)} \eta_{j, i}^{(5)})^2 M  w_{j, k} \\ \nonumber
    &+ (\eta_{r_1, j, l}^{(4)})^2 \eta_{r_2, j, i}^{(1)} \eta_{r_3, j, i}^{(1)}  M m_{j, i, j, i}({r_{3, 2}}) w_{j, k}  +  (\eta_{r_1, j, l}^{(4)} \eta_{r_2, j, i}^{(2)})^2   M^2 h_{r_2,j,k} \\ \nonumber
    &+  (\eta_{r_1, j, l}^{(4)} \eta_{r_2, j, i}^{(4)})^2  M^2  w_{j, k} + (\eta_{r_1, j, l}^{(4)} \eta_{j, i}^{(5)})^2  M  w_{j, k} + (\eta_{j, l}^{(5)})^2   \eta_{r_2, j, i}^{(1)} \eta_{r_3, j, i}^{(1)} m_{j, i, j, i}({r_{3, 2}}) w_{j, k} \\ 
    &+  (\eta_{j, l}^{(5)} \eta_{r_1, j, i}^{(2)})^2 M h_{r_1,j,k}   + (\eta_{j, l}^{(5)} \eta_{r_1, j, i}^{(3)})^2  M  w_{j, k}  + (\eta_{j, l}^{(5)} \eta_{r_1, j, i}^{(4)})^2  w_{j, k}.
\end{align}

 Moreover, the off-diagonal entries of $ \mathbb{E} \{ \mathbf{w}_{k, k} \mathbf{w}_{k, k}^H \}$ are given by
 
 \begin{align} \nonumber
    \mathbb{E}\{ \hat{\mathbf{q}}_{j, k}^H \mathbf{q}_{j, k}  (\hat{\mathbf{q}}_{h, k}^H \mathbf{q}_{h, k})^H\} &= \sum_{r_1=1}^R\sum_{r_2=1}^R\sum_{r_3=1}^R\sum_{r_4=1}^R E_{j,k,h,k}^{\mathrm{CROSS-BS}}(\boldsymbol{\Phi}) + E_{j,k,h}^{\mathrm{SELF-UE}}( \mathbf{r}) \\\label{term_jkjkhkhk}
    &+ \sum_{l \in \mathcal{P}_k \backslash \{k\}}  E_{j,l,h,k}^{\mathrm{PC-CROSS-BS}}(\boldsymbol{\Phi}) + E_{j,l,h,k}^{\mathrm{PC - SELF-UE}}( \mathbf{r}),
\end{align}
where
\begin{align} \nonumber
    E_{j,k,h}^{\mathrm{SELF - UE}}(& \mathbf{r}) = (\eta_{r_1, j, k}^{(2)})^2  \eta_{r_3, h, k}^{(1)} \eta_{r_4, h, k}^{(1)}
M g_{r_1, j} m_{h, k, h, k}({r_{3, 4}}) + \eta_{r_1, j, k}^{(1)}  \eta_{r_2, j, k}^{(1)}  (\eta_{r_3, h, k}^{(2)})^2 M g_{r_3, h} m_{j, k, j, k}({r_{1, 2}}) \\\nonumber
      &+ (\eta_{r_1, j, k}^{(4)})^2   \eta_{r_3, h, k}^{(1)} \eta_{r_4, h, k}^{(1)}
      M a_{j, k}
m_{h, k, h, k}({r_{3, 4}}) + \eta_{r_1, j, k}^{(1)}  \eta_{r_2, j, k}^{(1)} (\eta_{r_3, h, k}^{(4)})^2  M a_{h, k} m_{j, k, j, k}({r_{1, 2}})\\ \nonumber
      &+ (\eta_{j, k}^{(5)})^2 \eta_{r_3, h, k}^{(1)} \eta_{r_4, h, k}^{(1)}
      a_{j, k}
m_{h, k, h, k}({r_{3, 4}}) + \eta_{r_1, j, k}^{(1)}  \eta_{r_2, j, k}^{(1)}  (\eta_{h, k}^{(5)})^2 a_{h, k}
m_{j, k, j, k}({r_{1, 2}}) \\ \nonumber
      &+ (\eta_{j, k}^{(5)}  \eta_{h, k}^{(5)})^2
       v_{j, h, k}  + ( \eta_{r_1, j, k}^{(2)}   \eta_{r_3, h, k}^{(4)})^2 M^2 a_{h, k} g_{r_1, j} + (\eta_{r_1, j, k}^{(4)}  \eta_{r_3, h, k}^{(2)} )^2 M^2 a_{j, k} g_{r_3, j} \\\nonumber
        &+  (\eta_{j, k}^{(5)}  \eta_{r_3, h, k}^{(2)})^2  M g_{r_3, h} a_{j, k} + (\eta_{r_1, j, k}^{(2)} \eta_{h, k}^{(5)})^2 M g_{r_1, k} a_{h, k}  + ( \eta_{r_1, j, k}^{(4)} \eta_{h, k}^{(5)})^2 M v_{j, h, k} \\\nonumber
        &+ ( \eta_{j, k}^{(5)}   \eta_{r_3, h, k}^{(4)})^2 M v_{j, h, k} + (\eta_{j, k}^{(5)}   \eta_{r_3, h, k}^{(3)})^2 M v_{j, h, k} + (\eta_{r_1, j, k}^{(4)}  \eta_{r_3, h, k}^{(3)})^2 M^2
     v_{j, h, k} \\ 
     &+ (\eta_{r_1, j, k}^{(3)}  \eta_{r_3, h, k}^{(4)})^2 M^2
     v_{j, h, k} + (\eta_{r_1, j, k}^{(2)}  \eta_{r_3, h, k}^{(3)})^2
     M^2 a_{h, k} g_{r_1, j} + (\eta_{r_1, j, k}^{(3)}  \eta_{r_3, h, k}^{(2)})^2
     M^2 a_{j, k} g_{r_3, h},
\end{align}
where $\mathbf{r}$ means that the variable is a function of RISs but not the phase shift of the RISs. Furthermore, we have
\begin{align}  \nonumber
     E_{j,l,h,k}^{\mathrm{PC - SELF - UE}}(\mathbf{r}) &= (\eta_{j, k}^{(5)})^2  \eta_{r_3, h, k}^{(3)} \eta_{r_3, h, l}^{(3)} M v_{j, h, k}  + (\eta_{h, k}^{(5)})^2   \eta_{r_3, j, k}^{(3)} \eta_{r_3, j, l}^{(3)}  M v_{j, h, k} \\ \nonumber
     &+  (\eta_{r_1, j, k}^{(4)})^2  \eta_{r_3, h, k}^{(3)} \eta_{r_3, h, l}^{(3)}  M^2
    v_{j, h, k}  + \eta_{r_1, j, l}^{(3)} \eta_{r_1, j, k}^{(3)}   (\eta_{r_3, h, k}^{(4)})^2 M^2
     v_{j, h, k} \\ 
     &+ (\eta_{r_1, j, k}^{(2)})^2  \eta_{r_3, h, k}^{(3)} \eta_{r_3, h, l}^{(3)}
     M^2 a_{h, k} g_{r_1, j}  + \eta_{r_1, j, l}^{(3)} \eta_{r_1, j, k}^{(3)}  (\eta_{r_3, h, k}^{(2)})^2
     M^2 a_{j, k} g_{r_3, h}.
\end{align}
 Additionally, the diagonal entries of $ \mathbb{E} \{ \mathbf{w}_{k, k} \mathbf{w}_{k, k}^H \}$ are as follows
\begin{align} \nonumber
      \mathbb{E}\{&|\hat{\mathbf{q}}_{j, k}^H  \mathbf{q}_{j, k}|^2\} =  E_{j, k, j}^{\mathrm{C}} +  \sum_{r_1=1}^R\sum_{r_2=1}^R\sum_{r_3=1}^R\sum_{r_4=1}^R \Big( E_{j,k,j,k}^{\mathrm{CROSS-BS}}(\boldsymbol{\Phi}) +E_{r, j, k, k}^{\mathrm{CROSS-USER}}(\boldsymbol{\Phi})+ E_{j,k,k}^{\mathrm{SELF-UE}}(\mathbf{r}) \\\label{E_jkjk}
      &+E_{j, k, k}^{\mathrm{NOISE}}(\boldsymbol{\Phi})  + \sum_{l \in \mathcal{P}_k \backslash \{k\}}   E_{j,l,j,k}^{\mathrm{PC-CROSS-BS}}(\boldsymbol{\Phi})  +  E_{j, l, k, k}^{\mathrm{CROSS-USER-PC}}(\boldsymbol{\Phi})  +  E_{j,l,j,k}^{\mathrm{PC - SELF-UE}}(\mathbf{r}).
\end{align}
 The only remaining part to be calculated is $\mathbf{V}_k = \mathrm{diag}\left( \mathbb{E}\{\rVert \mathbf{v}_{1, k}\rVert^2\}, \dots, \mathbb{E}\{\rVert\mathbf{v}_{J, k}\rVert^2 \} \right)$. The $j$th entry of $\mathbf{V}_k$ could be written as
\begin{align}
     \mathbb{E}\{\rVert\mathbf{v}_{j, k}\rVert^2 \} &=  \mathbb{E}\{\rVert  \hat{\mathbf{q}}_{j,  k} \rVert^2 \} = \mathbb{E}\{ \hat{\mathbf{q}}_{j,  k}^H \hat{\mathbf{q}}_{j,  k} \}.
\end{align}
According to the orthogonality property of the LMMSE estimator, we have $\mathbb{E}\{ \hat{\mathbf{q}}_{j,  k}^H \hat{\mathbf{q}}_{j,  k} \} = \mathbb{E}\{ \hat{\mathbf{q}}_{j,  k}^H \mathbf{q}_{j,  k} \}$ where $\mathbb{E}\{ \hat{\mathbf{q}}_{j,  k}^H \mathbf{q}_{j,  k} \}$ is calculated in (\ref{Ewkk1}).

\end{thm}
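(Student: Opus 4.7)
The plan is to evaluate every expectation by substituting the channel decomposition $\mathbf{q}_{j,k}=\sum_{r=1}^{R}\check{\mathbf{g}}_{r,j,k}+\mathbf{h}_{j,k}$ from~(\ref{channelss}) together with the LMMSE expression for $\hat{\mathbf{q}}_{j,k}$ given in Theorem~\ref{lmmse}, and then exploit the mutual independence of the fast-fading components $\tilde{\mathbf{H}}_{r,j},\tilde{\mathbf{g}}_{r,k},\tilde{\mathbf{h}}_{j,k}$ across distinct indices. The backbone identities will be $\mathbb{E}\{\tilde{\mathbf{g}}_{r,k}\tilde{\mathbf{g}}_{r,k}^{H}\}=\mathbf{I}_{M}$, $\mathbb{E}\{\tilde{\mathbf{H}}_{r,j}\mathbf{X}\tilde{\mathbf{H}}_{r,j}^{H}\}=\mathrm{Tr}(\mathbf{X})\,\mathbf{I}_{N}$ and $\mathbb{E}\{\tilde{\mathbf{H}}_{r,j}^{H}\mathbf{Y}\tilde{\mathbf{H}}_{r,j}\}=\mathrm{Tr}(\mathbf{Y})\,\mathbf{I}_{M}$ for deterministic $\mathbf{X},\mathbf{Y}$, plus the zero-mean property of all tilde-terms. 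These identities are exactly what produce the matrix $\mathbf{A}_{j,k}$ and the scalar coefficients $g_{r,j}$, $h_{r,j,k}$, $a_{j,k}$, $v_{j,h,k}$, $w_{j,k}$ appearing in the statement.

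First I would dispose of $\mathbb{E}\{\hat{\mathbf{q}}_{j,k}^{H}\mathbf{q}_{j,k}\}$, the simplest quantity. Write both factors as the sum of the five components $\check{\mathbf{g}}_{r,j,k}^{(1)},\ldots,\check{\mathbf{g}}_{r,j,k}^{(4)},\mathbf{h}_{j,k}$ (plus the LMMSE corrections from interfering pilots and noise), cross-multiply, and discard every term that contains an odd power of an independent zero-mean Gaussian factor. The only surviving pieces are: the LoS$\times$LoS double sum over $(r,r')$, the four diagonal $r=r'$ second-moment contributions (producing the factors $M g_{r,j}$ and $M\,\mathrm{Tr}\{\mathbf{A}_{j,k}^{H}\}$), the direct channel square, and the pilot-contamination LoS-LoS inner products, which is exactly the content of~(\ref{exp_qhq}).

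Next, for the second- and fourth-order quantities $\mathbb{E}\{\hat{\mathbf{q}}_{j,k}^{H}\mathbf{q}_{j,i}(\hat{\mathbf{q}}_{h,k}^{H}\mathbf{q}_{h,i})^{H}\}$ and $\mathbb{E}\{|\hat{\mathbf{q}}_{j,k}^{H}\mathbf{q}_{j,i}|^{2}\}$, I would expand each of the four aggregated channels according to the $(r_{1},r_{2},r_{3},r_{4})$ labelling suggested by the theorem, classify the resulting $5^{4}$ monomials by which of $\tilde{\mathbf{H}}_{r,j}$, $\tilde{\mathbf{g}}_{r,k}$, $\tilde{\mathbf{h}}_{j,k}$ appear, and then keep only monomials in which every random factor appears an even number of times (since independent zero-mean Gaussian factors of odd degree contribute nothing). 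Those surviving monomials split naturally into the groups $E^{\mathrm{CROSS\text{-}BS}}$, $E^{\mathrm{CROSS\text{-}USER}}$, $E^{\mathrm{SELF\text{-}UE}}$, $E^{\mathrm{NOISE}}$ and their pilot-contamination counterparts $E^{\mathrm{PC}\text{-}\cdot}$. The cross-AP case $h\neq j$ is easier because the channels at different APs share no random factors, so only the terms that pair LoS with LoS or that pair two factors through a common deterministic matrix (creating the traces $d_{j,k,h,k}$, $z_{j,k,h}$, $s_{j,h}$) survive; the self-AP case $h=j$ additionally produces the genuinely fourth-order Gaussian contributions, which I would handle by invoking the identity $\mathbb{E}\{\tilde{\mathbf{H}}\mathbf{X}\tilde{\mathbf{H}}^{H}\tilde{\mathbf{H}}\mathbf{Y}\tilde{\mathbf{H}}^{H}\}=\mathrm{Tr}(\mathbf{X})\mathrm{Tr}(\mathbf{Y})\mathbf{I}+\mathrm{Tr}(\mathbf{X}\mathbf{Y})\mathbf{I}$ and analogous fourth-moment formulas for i.i.d.\ complex Gaussian matrices. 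Finally, $\mathbb{E}\{\|\mathbf{v}_{j,k}\|^{2}\}=\mathbb{E}\{\hat{\mathbf{q}}_{j,k}^{H}\hat{\mathbf{q}}_{j,k}\}$ follows immediately from the LMMSE orthogonality principle $\mathbb{E}\{\hat{\mathbf{q}}_{j,k}^{H}\hat{\mathbf{q}}_{j,k}\}=\mathbb{E}\{\hat{\mathbf{q}}_{j,k}^{H}\mathbf{q}_{j,k}\}$, closing the argument.

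The main obstacle is bookkeeping rather than any deep step: in the fourth-order case for $h=j$ there are hundreds of candidate monomials and one must carefully separate (i) which factors are deterministic LoS objects, (ii) which are first-order Gaussian pairings handled by $\mathbb{E}\{\tilde{\mathbf{H}}\mathbf{X}\tilde{\mathbf{H}}^{H}\}$, and (iii) which are genuine fourth-order pairings on the same underlying matrix that require the two-term Gaussian fourth-moment identity above, making sure never to double count an index collision $r_{1}=r_{2}=r_{3}=r_{4}$. I would organise this combinatorially by first conditioning on the LoS/NLoS type of each of the four cascaded channels, reducing the calculation to a finite number of moment-of-moment identities which can then be matched term by term against the definitions of $f_{r,j,k}$, $c_{r,k,i}$, $b_{j,h}$, $k_{j,k,h,i}$, $d_{j,k,h,k}$, $s_{j,h}$ and $h_{j,k}$ stated in the theorem. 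Pilot contamination is then added by the same procedure applied to the residual $\sum_{l\in\mathcal{P}_{k}\setminus\{k\}}\mathbf{A}_{j,k}(\sum_{u=2}^{4}\check{\mathbf{g}}_{j,l}^{(u)}+\mathbf{h}_{j,l})$ and to the noise term $\mathbf{A}_{j,k}\tilde{\mathbf{n}}_{j}$, which yields the $E^{\mathrm{PC}\text{-}\cdot}$ and $E^{\mathrm{NOISE}}$ expressions. Detailed calculations are deferred to the appendix.
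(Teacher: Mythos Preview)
Your proposal is correct and follows essentially the same route as the paper's Appendix~B: expand $\hat{\mathbf{q}}_{j,k}$ and $\mathbf{q}_{j,i}$ into their five deterministic/Gaussian components, form the four-fold product with the $(r_1,r_2,r_3,r_4)$ labelling, retain only those monomials in which every zero-mean Gaussian factor appears in a conjugate pair, and evaluate each survivor via the second-moment identities $\mathbb{E}\{\tilde{\mathbf{g}}\tilde{\mathbf{g}}^{H}\}=\mathbf{I}_{M}$ and $\mathbb{E}\{\tilde{\mathbf{H}}^{H}\mathbf{Y}\tilde{\mathbf{H}}\}=\mathrm{Tr}(\mathbf{Y})\mathbf{I}_{M}$; the pilot-contamination and noise pieces, as well as the final use of LMMSE orthogonality for $\mathbf{V}_{k}$, are also handled identically. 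The only cosmetic difference is that the paper introduces an explicit indicator notation $\mathcal{Q}_{r,j,k}^{(u)}(\cdot)$ to tag which random variables each component carries, whereas you describe the same bookkeeping in words; both amount to the same term-by-term enumeration.
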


\begin{proof}
See Appendix B.
\end{proof}

\section{Phase Shift Design of the RISs}
The purpose of this section is to optimize the phase shifts of the RISs in order to maximize the $\mathrm{SE}$ derived in Theorem \ref{exp_thm}. Unlike conventional designs of RISs based on instantaneous CSI, the SE in Theorem \ref{exp_thm} only depends on statistical CSI, so we do not have to update the phase shift design until the long-term CSI varies. This results in reduced computational complexity of the proposed optimization criterion. Thus, the optimization problem is formulated as maximizing the sum rate of all the users with respect to the phase shift of all the RIS panels. The problem can be formulated as follows
\begin{align} \label{opt_general}
     \max_{\boldsymbol{\Phi}_r}& \hspace{5mm} \sum_{k=1}^K\mathrm{SE}_k \\\nonumber
      \mathrm{s.t.}& \hspace{3mm} \theta_{r, m} \in [0, 2\pi), r \in \mathcal{R}, m = 1, \dots, M.
\end{align}

It is obvious that due to the high complexity of the optimization problem in (\ref{opt_general}), mathematical approaches fail to solve this problem. Alternatively, we use the deep reinforcement approach for solving the optimization problem in (\ref{opt_general}).

\subsection{Using Soft Actor-Critic to solve (\ref{opt_general})}
In this subsection, SAC is used to solve the optimization problem in (\ref{opt_general}). To this end, the SAC is introduced briefly. 

There is a significant problem with sample efficiency when it comes to some of the most successful RL algorithms of recent years, such as Trust Region Policy Optimization (TRPO) \cite{schulman2015trust}, Proximal Policy Optimization (PPO) \cite{schulman2017proximal}, and Asynchronous Actor-Critical Agents (A3C) \cite{mnih2016asynchronous}. This is due to the fact that they learn in an on-policy manner, which means they need completely new samples each time a policy is updated. Using experience replay buffers, Q-learning-based off-policy methods such as Deep Deterministic Policy Gradient (DDPG) \cite{lillicrap2015continuous} and Twin Delayed Deep Deterministic Policy Gradient (TD3PG) \cite{joshi2021twin} can efficiently learn from past samples. In order to ensure the convergence of these methods, however, they have to be tuned carefully because they are extremely sensitive to hyperparameters. The Soft Actor-Critic algorithm builds on the legacy of the latter type of algorithm and adds methods for addressing convergence brittleness.
\subsection{Background}
Consider an infinite-horizon discounted Markov decision process (MDP), defined by the tuple $(\mathcal{S}, \mathcal{A}, P, r, \gamma)$, where $\mathcal{S}$ is the finite set of states, $\mathcal{A}$ is the finite set of actions, $P: \mathcal{S} \times \mathcal{A} \times \mathcal{S} \rightarrow[0, \infty)$
 is the transition probability distribution of the next state $\mathbf{s}_{t+1} \in \mathcal{S}$ given the current state $\mathbf{s}_{t} \in \mathcal{S}$ and action $\mathbf{a}_t \in \mathcal{A}$, Furthermore, $r: \mathcal{S} \rightarrow \mathbb{R}$ is the reward function and $\gamma$ is discount factor of the reward $r$. 
 
SAC is an RL algorithm defined for continuous action spaces. The main feature of SAC is that it uses a modified RL objective function instead of only seeking to maximize the long-term discounted reward. SAC is one of the states of the art reinforcement learning (RL) algorithms of the family of maximum entropy RL \cite{haarnoja2018soft}. As mentioned, the aim of standard RL is to maximize the expected sum of the rewards $\sum_t \mathbb{E}_{(\mathbf{s}_t, \mathbf{a}_t) \sim \rho_\pi}[r(\mathbf{s}_t, \mathbf{a}_t)]$ \cite{sutton2018reinforcement}, where $\rho_\pi$ is the short form of $\rho_\pi(\mathbf{s}_t, \mathbf{a}_t)$ is the state-action marginal of
the trajectory distribution induced by a policy $\pi(\mathbf{a}_t | \mathbf{s}_t)$.
In SAC, the objective function is generalized to the maximum entropy objective which favors stochastic policies by augmenting the objective with the expected entropy of the policy which is
\begin{equation}
    J(\pi) = \sum_{t = 0}^{T} \mathbb{E}_{(\mathbf{s}_t, \mathbf{a}_t) \sim \rho_\pi}[r(\mathbf{s}_t, \mathbf{a}_t) + \alpha \mathcal{H}(\pi(. | \mathbf{s}_t))],
\end{equation}
where
\begin{equation}
    \mathcal{H}(\pi(. | \mathbf{s}_t)) = -\int_{a \in \mathcal{A}} \pi( a | \mathbf{s}_t) \log(\pi( a | \mathbf{s}_t)) da.
\end{equation}
Also, $\alpha$ is the temperature parameter and determines the importance of the entropy term against the reward term. Hence, $\alpha$ controls the stochasticity of the optimal policy. For the rest of this paper, we will omit the temperature explicitly, as it can be subsumed into the reward by scaling it by $\alpha^{-1}$. Note that maximum entropy RL gradually approaches the conventional RL as $\alpha \rightarrow 0$.

A number of benefits can be derived from this objective from both a conceptual and practical standpoint. First, the policy encourages broader exploration, while abandoning clearly unpromising directions. In addition, the policy captures a variety of near-optimal behavior modes. The policy commits the same probability mass to multiple actions in problem settings when multiple actions seem equally attractive.

SAC makes use of three separate neural networks (NNs). A state-value function $V$ parameterized by the NN weight vectors $\boldsymbol{\Psi}$, a soft Q-function $Q$ parameterized by the weights $\boldsymbol{\Theta}$ and a policy function $\pi$ parameterized by the weights $\boldsymbol{\Phi}$. The mentioned function approximations should be trained as follows.
\begin{enumerate}
    \item Value network: The value function is defined as follows
    \begin{align}
        V(\mathbf{s}_t) = \mathbb{E}_{\mathbf{a}_t \sim \pi} \{ Q(\mathbf{s}_t, \mathbf{a}_t) - \log \pi(\mathbf{a}_t \rvert \mathbf{s}_t) \},
    \end{align}
    where $Q(\mathbf{s}_t, \mathbf{a}_t)$ is the Q-function and will be defined in the next part. The value network
    should be trained by minimizing the following error
    \begin{equation} \label{value_new}
        J_V(\boldsymbol{\Psi}) = \mathbb{E}_{\mathbf{s}_t \sim \mathcal{D}} \left[ \frac{1}{2} \left( V_{\boldsymbol{\Psi}}(\mathbf{s}_t) - \mathbb{E}_{\mathbf{a}_t \sim \pi_{\boldsymbol{\Phi}}} \left[ Q_{\boldsymbol{\Theta}}(\mathbf{s}_t, \mathbf{a}_t) - \log \pi_{\boldsymbol{\Theta}}(\mathbf{a}_t \rvert \mathbf{s}_t)\right]  \right)^2 \right].
    \end{equation}
    The meaning of (\ref{value_new}) is that across all the states that are sampled from the replay buffer $\mathcal{D}$, the value network should be trained with respect to the squared difference between the prediction of the value network and the expected prediction of the Q-function plus the entropy of the policy network.
    Furthermore, the below approximation of the gradient of the $J_V(\boldsymbol{\Psi})$ is used to update the parameters of the $V$ function
    \begin{equation} \label{n_v}
        \hat{\nabla}_{\boldsymbol{\Psi}}J_V(\boldsymbol{\Psi}) =  \nabla_{\boldsymbol{\Psi}}V_{\boldsymbol{\Psi}}(\mathbf{s}_t) (V_{\boldsymbol{\Psi}}\left(\mathbf{s}_t) - Q_{\boldsymbol{\Theta}} + \log \pi_{\boldsymbol{\Phi}}(\mathbf{a}_t \rvert \mathbf{s}_t) \right).
    \end{equation}
\item Q-network: The Q-network is trained by minimizing the following error  
\begin{equation}
    J_{Q}(\boldsymbol{\Theta}) = \mathbb{E}_{(\mathbf{s}_t, \mathbf{a}_t) \sim \mathcal{D}} \left[ \frac{1}{2} \left(  Q_{\boldsymbol{\Theta}}(\mathbf{s}_t, \mathbf{a}_t) - \hat{Q}_{\boldsymbol{\Theta}}(\mathbf{s}_t, \mathbf{a}_t) \right)^2 \right],
\end{equation}
 where   
\begin{equation}
    \hat{Q}_{\boldsymbol{\Theta}}(\mathbf{s}_t, \mathbf{a}_t) = r(\mathbf{s}_t, \mathbf{a}_t) + \gamma \mathbb{E}_{\mathbf{s}_{t+1} \sim p} \left[ V_{\bar{\boldsymbol{\Psi}}}(\mathbf{s}_{t+1}) \right],
\end{equation}
which can be optimized with a stochastic gradient
\begin{equation} \label{grad_q}
    \hat{\nabla}_{\boldsymbol{\Theta}} J_{Q}(\boldsymbol{\Theta}) = \nabla_{\boldsymbol{\Theta}} Q_{\boldsymbol{\Theta}}(\mathbf{s}_t, \mathbf{a}_t) \left( Q_{\boldsymbol{\Theta}}(\mathbf{s}_t, \mathbf{a}_t) - r(\mathbf{s}_t, \mathbf{a}_t) - \gamma V_{\bar{\boldsymbol{\Psi}}}(\mathbf{s}_{t+1})  \right).
\end{equation}
In (\ref{grad_q}), target value network $V_{\bar{\boldsymbol{\Psi}}}$ is used for the update, where $\bar{\boldsymbol{\Psi}}$ can be an exponentially moving average of the network weights. This maintains training stability \cite{mnih2015human}.
\item Policy network: The policy network is trained by minimizing the following error
\begin{equation} \label{policy}
    J_{\pi}(\boldsymbol{\Phi}) = \mathbb{E}_{\mathbf{s}_t \sim \mathcal{D}} \left[ D_{\mathrm{KL}} \left( \pi_{\boldsymbol{\Phi}}(. \rvert \mathbf{s}_t) \Bigg{\rVert} \frac{\exp({Q_{\boldsymbol{\Theta}}(\mathbf{s}_t, .))}}{Z_{\boldsymbol{\Theta}}(\mathbf
    {s}_t)}\right) \right],
\end{equation}
where $D_\mathrm{KL}(P \rVert Q)$ is the Kullback–Leibler (KL) divergence between two probability distributions $P$ and $Q$ and is defined as follows \cite{cover1999elements}.
\begin{equation}
    D_\mathrm{KL}(P \rVert Q) = \int_{-\infty}^{+\infty} p(x) \log \left( \frac{p(x)}{q(x)} \right) dx,
\end{equation}
where $p$ and $q$ denote the  probability densities of $P$ and $Q$, respectively. Note that the KL divergence quantifies how much one probability distribution differs from another probability distribution. If two distributions perfectly match, the KL divergence would be $0$, otherwise it can take values between $0$ and $\infty$.

Hence, the aim of the objective function (\ref{policy}) is to make the distribution of the policy function like the distribution of the exponentiation of the Q-function normalized by another function $Z$. 

In order to minimize the objective function, the authors of \cite{haarnoja2018soft} use a reparameterization trick. In this method, the error back-propagation is ensured by making the sampling process differentiable from the policy. The policy is parameterized as follows:
\begin{equation}
    \mathbf{a}_t = f_{\mathbf{\Phi}}(\epsilon_t; \mathbf{s}_t),
\end{equation}
where the epsilon term is a noise vector sampled from a Gaussian distribution.
In this case, the objective function could be written as follows
\begin{equation} \label{trick}
    J_{\pi}(\boldsymbol{\Phi}) = \mathbb{E}_{\mathbf{s}_t \sim \mathcal{D}, \epsilon_t \sim \mathcal{N}(0, \bar{\sigma}_\epsilon)} \left[  \log \pi_{\boldsymbol{\Phi}}(f_{\mathbf{\Phi}}(\epsilon_t; \mathbf{s}_t) \rvert \mathbf{s}_t) - Q_{\boldsymbol{\Theta}}(\mathbf{s}_t, f_{\mathbf{\Phi}}(\epsilon_t; \mathbf{s}_t)) \right].
\end{equation}
The normalization function $Z$ is removed since it is independent of the parameters $\boldsymbol{\Phi}$

The unbiased estimator of the for the gradient of the (\ref{trick}) is given by
\begin{align} \label{n_pi}
    \hat{\nabla}_{\boldsymbol{\Phi}} J_{\pi}(\boldsymbol{\Phi}) &= \nabla_{\boldsymbol{\Phi}} \log \pi_{\boldsymbol{\Phi}}(\mathbf{a}_t \rvert \mathbf{s}_t) + \left( \nabla_{\mathbf{a}_t} \log \pi_{\boldsymbol{\Phi}}(\mathbf{a}_t \rvert \mathbf{s}_t) - \nabla_{\mathbf{a}_t} Q(\mathbf{s}_t, \mathbf{a}_t)   \right) \nabla_{\boldsymbol{\Phi}}f_{\mathbf{\Phi}}(\epsilon_t; \mathbf{s}_t),
\end{align}
where $\mathbf{a}_t$ is evaluated at $f_{\mathbf{\Phi}}(\epsilon_t; \mathbf{s}_t)$.

Note that this algorithm uses two Q-functions to mitigate positive bias in the policy improvement step, which is known to degrade the value-based method's performance. Thus, two Q-functions are parameterized and trained independently and then the minimum Q-function for the value gradient in (\ref{n_v}) and the policy gradient in (\ref{n_pi}) is used.
\end{enumerate}
 \subsection{ Using SAC to solve (\ref{opt_general})}
Generally, SAC is presented as an MDP with observation and action spaces. In the two-time scale RIS design problem, the RISs, and all the UEs in the system are denoted by the environment $\mathcal{E}$, while the agent is the CPU that is able to control the RISs. The following are the key SAC elements employed to solve the optimization  problem (\ref{opt_general}).
\subsubsection{Observation space}
At each timestep $t$, the observation space consists of the phases of all the RIS elements, i.e., $\theta_{r, m}$ where $r \in \mathcal{R}, m \in \mathcal{M}$. The second part of the observation space is the phase part of the expected value of the cascaded channel, i.e., $\tan^{-1} \frac{\mathrm{Im} \{ \mathbf{\check{g}}_{j,  k}^{(1)} \}}{\mathrm{Re} \{ \mathbf{\check{g}}_{j,  k}^{(1)} \}}$ of all the users $j \in \mathcal{J}$ and all the APs $k \in \mathcal{K}$. Hence, the observation shape is $ RM + J K N$.
\subsubsection{Action space}
At each timestep, $t$ the action space is the vector containing the phase parts of the phase shifts of the RISs. Thus, the action shape is $RM$ and the action range is $[0, 2\pi)$. Since $\mathrm{tanh}(.)$ is the activation function for the final layer, which produces the values between $-1$ and $+1$, for converting the result to the desired action range, it is sufficient to set $\mathbf{a}_t = \pi(\mathbf{a}_t^{\prime}+1)$ where $\mathbf{a}_t^{\prime}$ is the output of $\mathrm{tanh}(.)$ activation layer. 
\subsubsection{Reward function} At each timestep $t$ the reward is the sum SE of all the users, i.e.,
$
    r_t = \sum_{k=1}^K \mathrm{SE}_u(t)
$
where $\mathrm{SE}_u(t)$ is the rate of $k$-th user at time step $t$.

The details of the proposed SAC algorithm for the two-time scale passive beamforming are presented in Algorithm~\ref{alg_1}.

\begin{algorithm}[h]
\caption{Soft Actor-Critic}
\label{alg_1}
\SetAlgoLined
 \textbf{Initialization:} Initialize  time, states, actions, and replay buffer $\mathcal{D}$ for storing the random states, action and reward in each time step. 
Initialize parameter vectors $\boldsymbol{\Psi}$, $\bar{\boldsymbol{\Psi}}$, $\boldsymbol{\Theta}$, $\boldsymbol{\Phi}$\;
 \For{\text{each iteration}}{
  Initialize the environment $\mathcal{E}$ and make the initial state $s_0$ \;
  \For{\text{each environment step}}{
  Select the action $\mathbf{a}_t \sim \pi_{\boldsymbol{\Phi}}(\mathbf{s}_t \rvert \mathbf{a}_t)$; \;
  
    Observe next state $s_{t+1}$ and reward $r_t$, then, store transition set
    $\{\mathbf{s}_t, \mathbf{a}_t, r_t, \mathbf{s}_{t+1} \}$ into $\mathcal{D}$; \;
  
  }
  
  \For{\text{each gradient step}}{
  $\boldsymbol{\Psi} \leftarrow \boldsymbol{\Psi} - \lambda_{V} \hat{\nabla}_{\boldsymbol{\Psi}}J_V(\boldsymbol{\Psi})$ where $\hat{\nabla}_{\boldsymbol{\Psi}}J_V(\boldsymbol{\Psi}) $ is obtained in (\ref{n_v});
  
  $\boldsymbol{\Theta}_i \leftarrow \boldsymbol{\Theta}_i - \lambda_{Q} \hat{\nabla}_{\boldsymbol{\Theta}_i} J_{Q}(\boldsymbol{\Theta}_i)$ for $i \in \{1, 2 \}$ where $\hat{\nabla}_{\boldsymbol{\Theta}_i} J_{Q}(\boldsymbol{\Theta}_i)$ is obtained in (\ref{grad_q});
  
  $\boldsymbol{\Phi} \leftarrow \boldsymbol{\Phi} - \lambda_{\pi}  \hat{\nabla}_{\boldsymbol{\Phi}} J_{\pi}(\boldsymbol{\Phi})$ where $ \hat{\nabla}_{\boldsymbol{\Phi}} J_{\pi}(\boldsymbol{\Phi})$ is obtained in (\ref{n_pi});
  
  $\bar{\boldsymbol{\Psi}} \leftarrow \bar{\tau}\bar{\boldsymbol{\Psi}} + (1-\bar{\tau}\bar{\boldsymbol{\Psi}})$;
  }
 }
\end{algorithm}


\section{Simulation Results}
In this section, the simulation results are provided to validate the effectiveness of the two-time scale approach. We consider a cell-free network like the one shown in Fig.~\ref{fig:sim_setup}. In this setup, $J = 3$ APs are considered each of which is equipped with $N = 4$ antennas communicating with $J = 3$ UEs. The LoS link between the APs and the UEs is blocked and hence the transmission is done with the assistance of $R = 3$ RISs each of which is equipped with $M = 30$ elements. All the simulation parameters are listed in Table~\ref{table:para} and Table~\ref{table:rl}\footnote{The source code of the paper will be published online when the paper gets accepted.}. For generating the channels, the large-scale path loss is calculated as $\gamma_{j, k} = 10^{-3} (d_{j, k})^{- \bar{\gamma} }$, $ \beta_{r, j} = 10^{-3} (d_{r, j})^{- \bar{\beta} }$, and $\alpha_{r, k} = 10^{-3} (d_{r, k})^{- \bar{\alpha} }$ where $d_{j, k}$ denotes the distance between the $j$-th AP and the $k$-th UE, $d_{r, j}$ is the distance between the $r$-th RIS and the $j$th AP and $d_{r, k}$ represents the distance between the $r$th RIS and the UE. Moreover, the azimuth and elevation AoA and AoD
of all the links are generated randomly from $[0, 2\pi)$ and $[0, \pi)$, respectively.

\begin{figure}[t]
    \centering
    \includegraphics[scale=0.14]{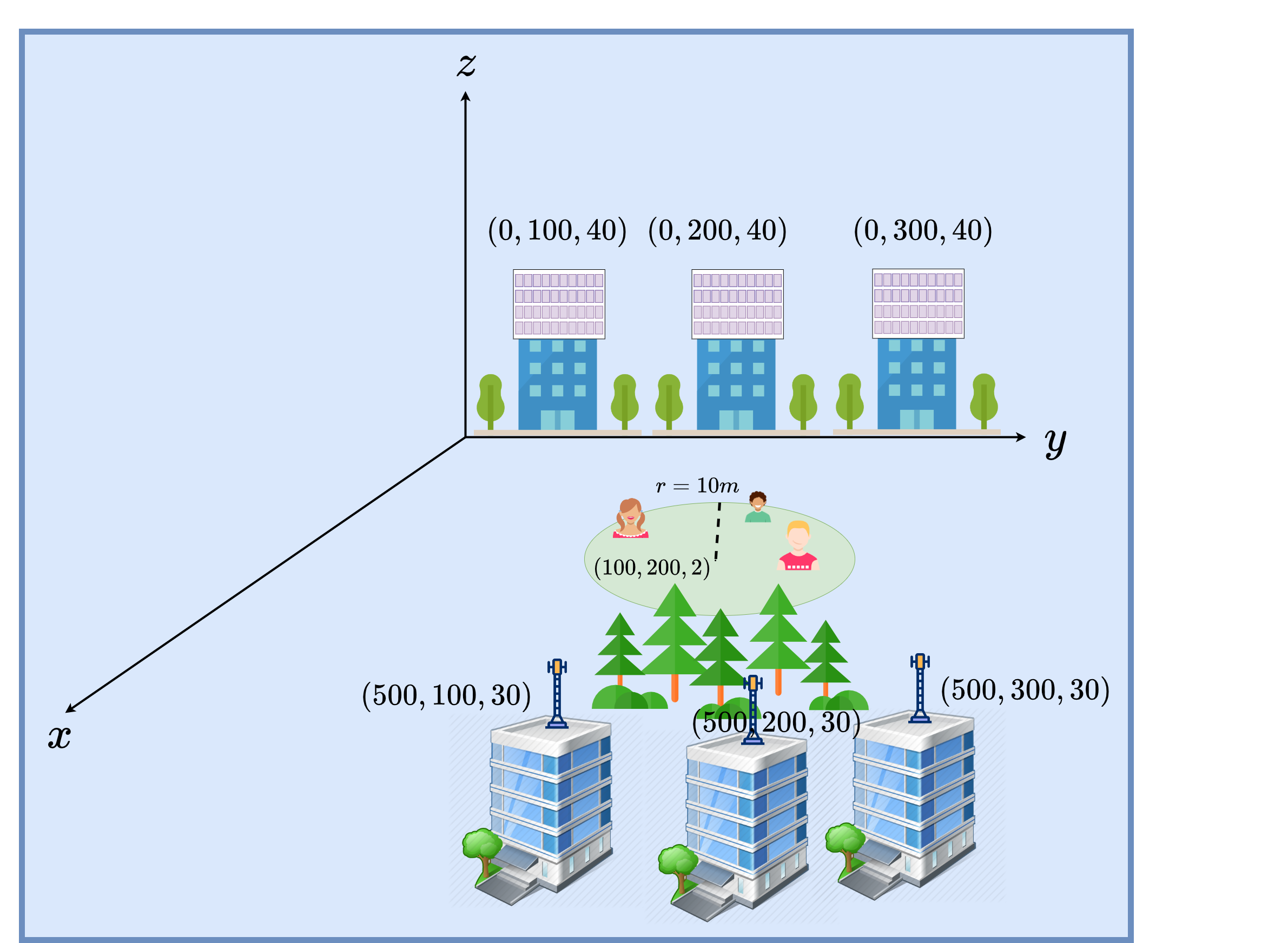}
    \caption{Simulation setup}
    \label{fig:sim_setup}
\end{figure}

\begin{table}[t]
\caption{Simulation parameters} 
\centering 
\begin{tabular}{c c} 
\hline\hline 
Parameter & Value \\ [0.5ex] 
\hline 
Number of APs & $J = 3$  \\
Number of antennas of each AP & $N = 4$ \\
Number of users & $K = 4$\\
Number of RISs & $R = 3$ \\
Number of elements of each RIS & $M = 30$ \\
Coherence time duration & $\tau_c = 200$ \\
pilot training duration & $\tau_p = 4$ \\
Rician factors  & $\epsilon_{r, k} = \kappa_{r, j} = 10 , \forall r,j,k$ \\
  Noise power & $\sigma^2 = -104 \mathrm{dBm}$ \\
  AP-UE link path loss exponent & $ \bar{\gamma} = 4$ \\
AP-RIS link path loss exponent & $ \bar{\beta} = 2.5 $ \\
RIS-UE link path loss exponent & $ \bar{\alpha} = 2$ \\
Transmitted power & $\rho = 0\mathrm{dBm}$ \\
  [1ex]  
\hline  
\end{tabular}
\label{table:para} 
\end{table}

\begin{table}[ht]
\caption{Reinforcement Learning  Hyperparameters} 
\centering 
\begin{tabular}{l r } 
\hline\hline 
Parameter & Value \\ [0.5ex] 
\hline 
Discount factor $\gamma$ & $0.995$  \\
Number of first fully connected layer & $128$ \\
 Number of second fully connected layer & $128$ \\
Learning rate & $0.0003$ \\
Replay buffer size & $4000$ \\
batch size & $32$ \\
Maximum time-steps of each episode & $50$ \\
Number of episodes & $2000$ \\
Activation function for hidden layers & ReLU \\
Activation function for output layer & $tanh(.)$ \\
target smoothing coefficient $\bar{\tau}$ & $0.005$ \\
Reward scale factor $\alpha$ & $2$ \\
gradient steps & $1$ \\
optimizer & Adam \\
 [1ex]
\hline 
\end{tabular}
\label{table:rl} 
\end{table}

\subsection{Quality of LMMSE Channel Estimation}
First of all, we investigate the performance of the LMMSE channel estimation scheme. The 
normalized mean squared error ($\mathrm{NMSE}$) for the $k$th user is defined as follows
\begin{equation}
    \mathrm{NMSE}_k = \frac{\mathrm{Tr} \{\mathrm{Cov}\{ \mathbf{q}_{j, k} - \hat{\mathbf{q}}_{j, k}, \mathbf{q}_{j, k} - \hat{\mathbf{q}}_{j, k} \}\}}{\mathrm{Tr} \{\mathrm{Cov}\{ \mathbf{q}_{j, k}, \mathbf{q}_{j, k} \}\}}
\end{equation}

Fig.~\ref{fig:nmse} shows the average $\mathrm{NMSE}$ of all the users as a function of the number of RIS elements. It is observed from  Fig.~\ref{fig:nmse} that the $\mathrm{NMSE}$ is a decreasing function with respect to the number of elements and thus, by increasing the number of RIS elements, the $\mathrm{NMSE}$ tends to zero. This shows the effectiveness of the RIS in enhancing channel estimation efficiency. Furthermore, in a pure LoS scenario, i.e., when $\kappa_{r, j} = \epsilon_{r, k} \rightarrow \infty$, the $\mathrm{NMSE}$ is independent of the number of RIS elements. This is because, in this case, all the channels are deterministic and deploying more elements on the RIS will not introduce additional errors. Moreover, for the case with $\tau_p = 4, K = 5$, i.e., when two users are sharing the same pilot sequence, the $\mathrm{NMSE}$ won't decrease with the number of the RIS elements. 

\begin{figure}[t]
    \centering
    \includegraphics[scale=0.78]{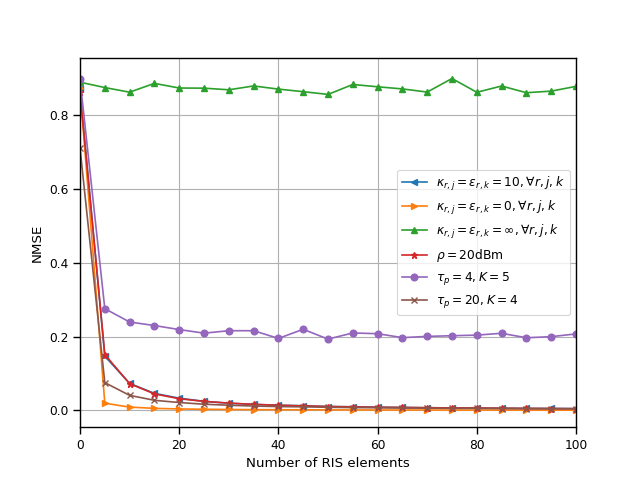}
    \caption{Average $\mathrm{NMSE}$ of all the users versus the number of RIS elements}
    \label{fig:nmse}
\end{figure}

\subsection{The effect of the number of RIS elements}

\begin{figure}[t]
    \centering
    \includegraphics[scale=0.60]{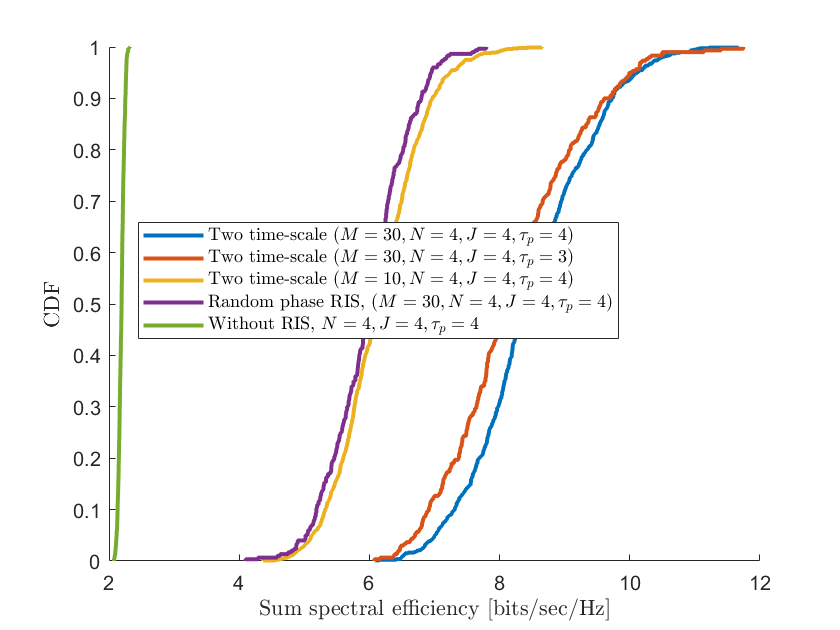}
    \caption{The cumulative distribution of the uplink sum SE for the different numbers of RIS elements.}
    \label{fig:cdf_ris}
\end{figure}

Fig.~\ref{fig:cdf_ris} compares the
cumulative distribution of the sum SE of the users for the cases with $M = 30$ and $M = 10$. For both cases, it is revealed that deploying the RIS panels causes a significant improvement in both median and 95\%-likely performance. Furthermore, the sum SE of the case without RIS is much more
concentrated around its median, compared with either case with optimized RIS phase or random RIS phase. On the other hand, for both cases with $M = 30$ and $M = 10$, UEs with better channel conditions
get better performance with two-time scale methods. Moreover, due to the pilot contamination effect, the sum SE of the users drops in comparison with the case in which there is no pilot contamination effect, yet, its performance is still significantly better than the random phase RIS case. 
\begin{figure}[t]
    \centering
\includegraphics[scale=0.60]{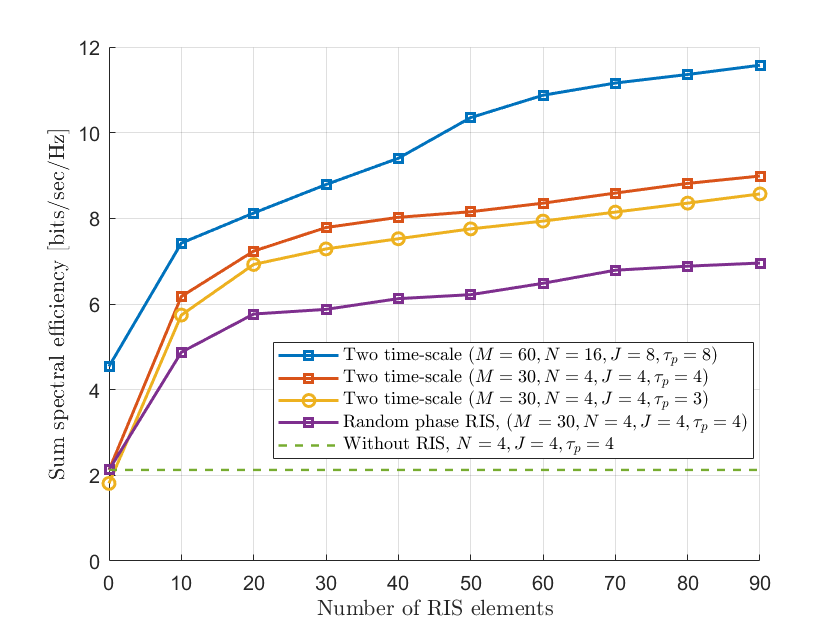}
    \caption{Sum SE of all the users versus the number of RIS elements}
    \label{fig:ris_element}
\end{figure}

In Fig.~\ref{fig:ris_element}, the performance of the two-time scale scheme versus the number of RIS elements is presented.
As revealed from Fig.~\ref{fig:ris_element}, increasing the number of elements results in an improvement of the sum SE of the users. 
Also, in the case with random phase RISs, performance is significantly better than the case without RIS, however, its performance is still very poor in comparison with the optimized RIS phases.

\subsection{The effect of the number of AP antennas}
The effect of the number of AP antennas is studied in this subsection. For the simulations, the number of RIS elements of all the panels is set to be $M = 60$ and the remaining parameters are set unchanged.

\begin{figure}[t]
    \centering
\includegraphics[scale=0.60]{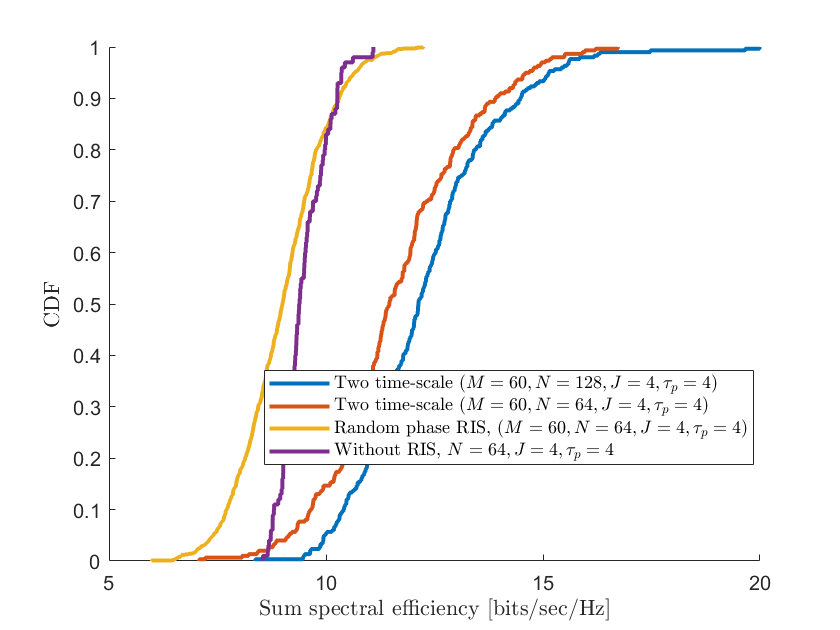}
    \caption{The cumulative distribution of the uplink sum SE for the different number of AP antennas.}
    \label{fig:cdf_ap}
\end{figure}

Fig.~\ref{fig:cdf_ap} demonstrates the cumulative distribution of the considered cell-free system with $N = 128$ and $N = 64$ number of antennas. It is seen from Fig.~\ref{fig:cdf_ap} that by increasing the number of AP antennas, the sum data rate of the users improved significantly. Also, in terms of median and in 95\%-likely, the performance of the proposed two-timescale scheme significantly outperforms the cases with random phase and without RIS. Also, similar to Fig.~\ref{fig:cdf_ris}, the optimized phase shift case has more variance than the others which shows the users with higher channel gain are highly likely to achieve considerably higher SE. Moreover, the performance of the case without RIS outperforms the random phase RIS case which shows the importance of the careful design of the RIS phase shifts. 

\begin{figure}[t]
    \centering
\includegraphics[scale=0.60]{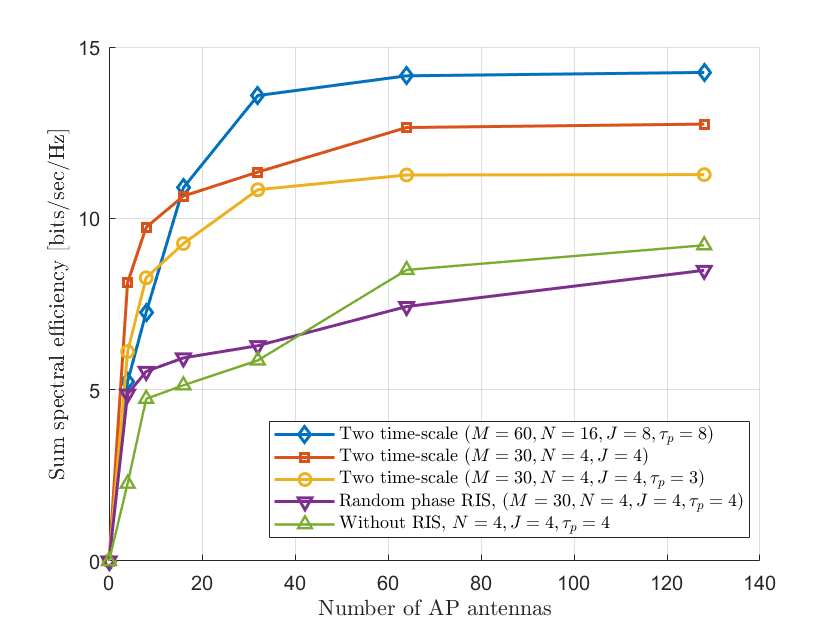}
    \caption{Sum SE of all the users versus the number of AP antennas.}
    \label{fig:ap_element_compare}
\end{figure}
Fig.~\ref{fig:ap_element_compare} shows the performance of the proposed two-timescale design as a function of the number of AP antennas. As it is observed from Fig.~\ref{fig:ap_element_compare}, increasing the number of AP elements will result in considerable sum SE improvement in the system, however, the sum SE will not increase without bound, since the interference of the other users will also increase as well. Lastly, it is noted that, when the number of AP antennas is high, the sum data rate of the case without RIS outperforms the case with random phase RIS, however, the optimized two-timescale RIS design performs better even in a high AP antenna regime.

\subsection{The effect of the transmit power}
The effect of the transmission power on the system performance is studied in this subsection.
In this scenario, the number of RIS elements is set to be $M = 30$ and the number of AP antennas is assumed to be $N = 4$. The rest of the parameters are the same as before.

\begin{figure}[t]
    \centering
\includegraphics[scale=0.60]{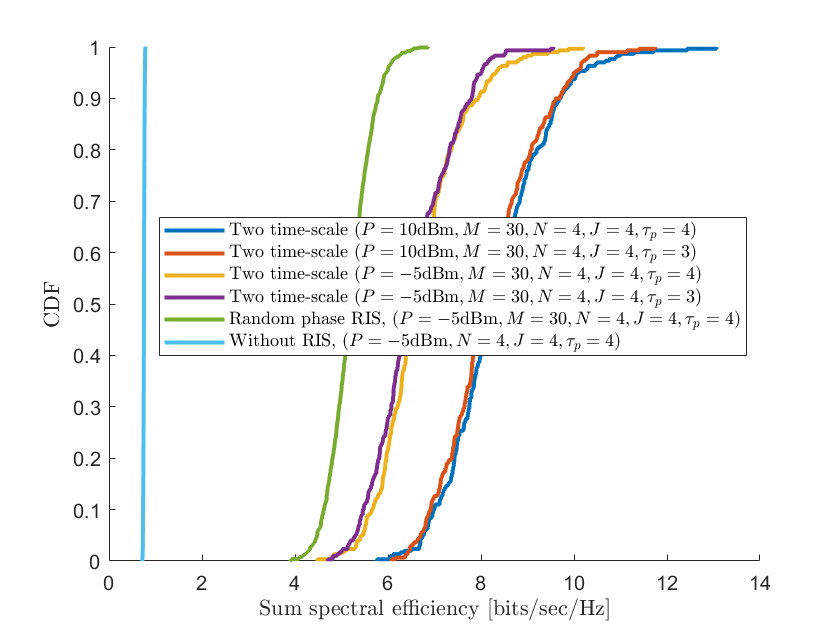}
    \caption{The cumulative distribution of the cell-free RIS-aided system for different values of transmit powers}
    \label{fig:cdf_tr}
\end{figure}

Fig.~\ref{fig:cdf_tr} shows the CDF of the system model under two different transmit powers of $\rho = -5 \mathrm{dBm}$ and $\rho = 10\mathrm{dBm}$. 
It is revealed from Fig.~\ref{fig:cdf_tr} that in the low power regime, the case without RIS panel, achieves a fairly low rate both in 95\% likely and median performance metrics. Also, as in the same previous scenarios, the case with optimized RIS phases has a significant improvement in terms of sum SE.

\begin{figure}[t]
    \centering
\includegraphics[scale=0.60]{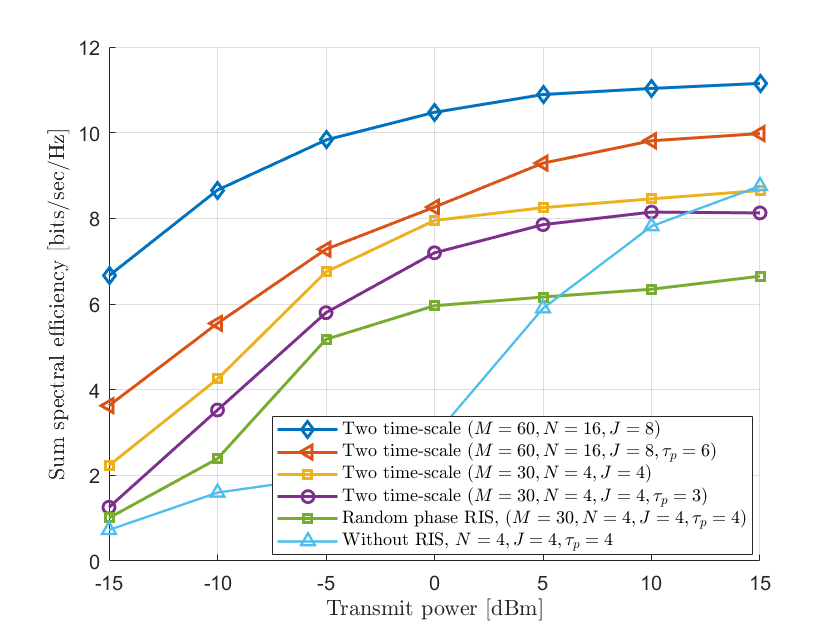}
    \caption{Average sum-rate of all the users versus the transmission power}
    \label{fig:tr_compare}
\end{figure}

In Fig.~\ref{fig:tr_compare}, the sum SE is plotted as a function of transmission power. It reveals from Fig.~\ref{fig:tr_compare} by increasing the transmit power, the sum SE of the users increases as well. Additionally, for the case without RIS, increasing the power will result in a larger SE than the optimized phase. Due to multi-user interference introduced by RIS, as power increases, the sum SE can no longer grow unbounded.

\subsection{Comparison to fully centralized and single-cell
scenarios}
A fully centralized scenario is the most advanced case where each AP sends their received data as well as the pilot signals to the CPU for detection. In this case, the CPU is responsible for channel estimation and final detection. Also, the CPU does the channel estimation based on the received pilot at the APs. In this scenario, the received signal at the CPU is given by
\begin{equation}
   \underbrace{\begin{bmatrix}
\mathbf{y}_1  \\
\vdots \\
\mathbf{y}_j \\
\vdots \\
\mathbf{y}_J
\end{bmatrix}}_{\triangleq  \mathbf{y}}
 = \sqrt{\rho_s}\sum_{i = 1}^{K} 
\underbrace{\begin{bmatrix}
\mathbf{q}_{1, i}  \\
\vdots \\
\mathbf{q}_{j, i} \\
\vdots \\
\mathbf{q}_{J, i}
\end{bmatrix}}_{\triangleq \mathbf{q}_i}
 x_i + 
\underbrace{\begin{bmatrix}
\mathbf{\check{n}}_1  \\
\vdots \\
\mathbf{\check{n}}_j \\
\vdots \\
\mathbf{\check{n}}_J
\end{bmatrix}}_{\triangleq \mathbf{n}}.
    \label{received}
\end{equation}
Using the pilot signal received by the APs, the CPU can calculate the channel estimation and feeds it back to the APs. For the comparison with the proposed method, it is assumed that the CPU has the perfect I-CSI. Next, for using the MRC method at the CPU, the CPU picks a combining vector for user $k$ such that $\mathbf{v}_k =  \mathbf{q}_k$. The instantaneous, SINR is given by
\begin{equation}
    \mathrm{SINR}_k^{\mathrm{i}} = \frac{\rho |\mathbf{v}_k^H  \mathbf{q}_k |^2}{  \rho \sum_{i=1, i \neq k}^{K}  |\mathbf{v}_k^H  \mathbf{q}_i|^2 + \sigma^2\mathbf{v}_k^H \mathbf{I}_{KJ} \mathbf{v}_k}.
    \label{sinr_i}
\end{equation}
Furthermore, the instantaneous SE for user $k$ is given by
\begin{align} \label{SE_i}
     \mathrm{SE}_k^{\mathrm{i}} = \left(1 - \frac{\tau_p}{\tau_c} \right) \log_2\left(1 + \mathrm{SINR}_k^{\mathrm{i}} \right).
\end{align}
Next, the phase shift of the RISs is the solution to the following optimization problem:
\begin{align} \label{opt_general_i}
     \max_{\boldsymbol{\Phi}_r}& \hspace{5mm} \sum_{k=1}^K\mathrm{SE}_k^{\mathrm{i}}  \\\nonumber
      \mathrm{s.t.}& \hspace{3mm} \theta_{r, m} \in [0, 2\pi), r \in \mathcal{R}, m = 1, \dots, M.
\end{align}
The problem in (\ref{opt_general_i}) is solved using the SAC method.

Next, for comparison with the single-cell scenario, in Fig.~\ref{fig:sim_setup}, the RIS at the point $(0, 200, 400)$ is chosen as the cellular RIS with $M_\mathrm{c} = 3M$ and the AP at the point $(500, 200, 30)$ is chosen as the cellular base station with $N_\mathrm{c} = 3N$ antennas. Finally, the phase shifts of the RIS are designed with the same setup as the cell-free case with one RIS and one AP, i.e., $J = R = 1$.

\begin{figure}[t]
    \centering
\includegraphics[scale=0.60]{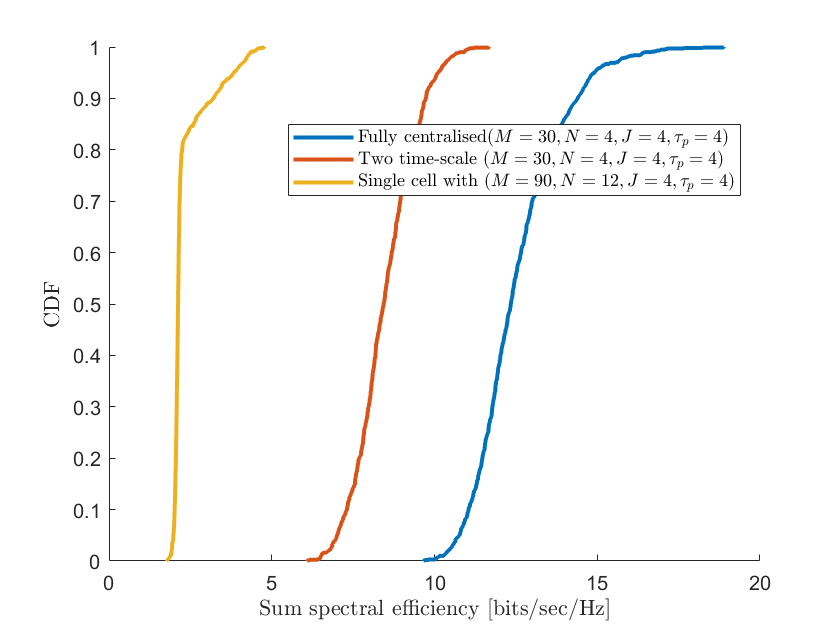}
    \caption{Sum SE of the single cell and cell-free systems.}
    \label{fig:levwel}
\end{figure}

Fig.~\ref{fig:levwel} shows the CDF of the single-cell and fully centralized scenarios. It is observed from Fig.~\ref{fig:levwel} the fully centralized scenario outperforms the other two significantly. The reason is in the I-CSI, as well as the S-CSI is  available at the CPU and the phase shifts of the RISs are designed with the global information of all the systems. Also, the two-timescale scenario outperforms the cellular case significantly.

\section{Conclusion}
This paper explored the two-timescale transmission design for cell-free massive MIMO systems with RIS support, considering channel estimation errors. An LMMSE estimator was proposed to estimate the aggregated instantaneous CSI, which was used by the APs for MRC to detect the user's signal locally. The detected signal is fed back to the CPU for final detection. Since the CPU has only the S-CSI, the final detection is based on the S-CSI. The achievable SE of the users is derived in closed form and is used by the CPU to design the phase shifts of the RIS based on the S-CSI through the SAC method. This method has reduced the signalling overhead for channel estimation significantly. 

\section*{Appendix A}
Before starting the proof of Theorem 1, the following Lemma is required. 
\begin{lem}\label{lem1}
    For user $k \in \mathcal{K}$, the mean vector and covariance matrices needed for the LMMSE estimator are given by
    \begin{align}
        \mathbb{E}\{ \mathbf{q}_{j,  k} \} &= \sum_{r = 1}^{R} \left( \eta_{r, j, k}^{(1)} \Bar{\mathbf{H}}_{r, j}  \boldsymbol{\Phi}_r  \Bar{\mathbf{g}}_{r, k}  \right) = \sum_{r=1}^{R}\mathbf{\check{g}}_{r,  j,  k}^{(1)} = \mathbf{\check{g}}_{j,  k}^{(1)}\\
    \mathbb{E}\{ \mathbf{y}_{j,  k} \} &= \sum_{r = 1}^{R} \left( \eta_{r, j, k}^{(1)} \Bar{\mathbf{H}}_{r, j}  \boldsymbol{\Phi}_r  \Bar{\mathbf{g}}_{r, k}  + \sum_{l \in \mathcal{P}_k \backslash \{k\}} \eta_{r, j, l}^{(1)} \Bar{\mathbf{H}}_{r, j}  \boldsymbol{\Phi}_r  \Bar{\mathbf{g}}_{r, l}  \right) \\ 
    &= \mathbf{\check{g}}_{j,  k}^{(1)} + \sum_{l \in \mathcal{P}_k \backslash \{k\}} \mathbf{\check{g}}_{j,  l}^{(1)}
    \end{align}
\begin{align}
\mathrm{Cov} \{\mathbf{q}_{j,  k}, \mathbf{q}_{j,  k} \} &=  \xi_{j,  k} \mathbf{a}_N \mathbf{a}_N^H + \mu_{j,  k} \mathbf{I}_N  \\
    \mathrm{Cov} \{\mathbf{y}_{j,  k}, \mathbf{y}_{j,  k} \} &= \mathrm{Cov} \{\mathbf{q}_{j,  k}, \mathbf{q}_{j,  k} \} + 2 \sum_{l \in \mathcal{P}_k \backslash \{k\}} \mathrm{Cov} \{\mathbf{q}_{j,  k}, \mathbf{q}_{j,  l} \} \\ \nonumber
    &+ \sum_{l \in \mathcal{P}_k \backslash \{k\}} \mathrm{Cov} \{ \mathbf{q}_{j,  l},  \mathbf{q}_{j,  l} \} + \frac{\sigma^2}{\rho \tau_p} \mathbf{I}_N \\
     &= \omega_{j,  k} \mathbf{a}_N \mathbf{a}_N^H + \nu_{j,  k} \mathbf{I}_N \\
    \mathrm{Cov} \{\mathbf{q}_{j,  k}, \mathbf{y}_{j,  k}   \} &= \mathrm{Cov} \{\mathbf{q}_{j,  k}, \mathbf{q}_{j,  k}    \} + \sum_{l \in \mathcal{P}_k \backslash \{k\}} \mathrm{Cov} \{\mathbf{q}_{j,  k}, \mathbf{q}_{j,  l} \}\\
    &=\xi_{j,  k} \mathbf{a}_N \mathbf{a}_N^H + \chi_{j,  k} \mathbf{I}_N \\
\mathrm{Cov} \{\mathbf{q}_{j,  k},\mathbf{q}_{j,  l} \} &=  \delta_{j,  k, l} \mathbf{I}_N \label{cov_qml_qml}
\end{align}
where
\begin{align}
    \mathbf{\check{g}}_{j,  k}^{(1)} &\triangleq  \sum_{r=1}^{R} \mathbf{\check{g}}_{r, j,  k}^{(1)} \\
    \xi_{r, j,  k} &\triangleq M(\eta_{r, j, k}^{(2)})^2, \\
    \mu_{j,  k} &\triangleq \sum_{r=1}^R ( M ( (\eta_{r, j, k}^{(3)})^2 + (\eta_{r, j, k}^{(4)})^2 ) + (\eta_{j, k}^{(5)})^2 ), \\
     \delta_{j,  k, l} &\triangleq \sum_{r = 1}^R \eta_{r, j, k}^{(3)} \eta_{r, j, l}^{(3)} \mathrm{Tr}\{ \Bar{\mathbf{g}}_{r, k} \Bar{\mathbf{g}}_{r, l}^H  \}, \\
     \omega_{r, j,  k} &\triangleq \xi_{r, j,  k} + \sum_{l \in \mathcal{P}_k \backslash \{k\}} \xi_{r, j,  l}, \\
     \nu_{j,  k} &\triangleq  \mu_{j,  k} +  \sum_{l \in \mathcal{P}_k \backslash \{k\}}\mu_{j,  l} + \sum_{l \in \mathcal{P}_k \backslash \{k\}} \delta_{j,  k, l} + \frac{\sigma^2}{\rho \tau_p}, \\
     \chi_{j,  k} &\triangleq \mu_{j,  k} + \sum_{l \in \mathcal{P}_k \backslash \{k\}}\delta_{j,  k, l}
\end{align}

\end{lem}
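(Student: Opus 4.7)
\textbf{Proof plan for Lemma~\ref{lem1}.} The plan is to reduce every quantity to closed form by exploiting (i)~the additive decomposition in~(\ref{channelss}), (ii)~mutual independence of the NLoS Gaussian factors $\Tilde{\mathbf{H}}_{r,j}$, $\Tilde{\mathbf{g}}_{r,k}$ and $\Tilde{\mathbf{h}}_{j,k}$, (iii)~their independence across distinct RIS indices $r\neq r'$, and (iv)~the unitarity of $\boldsymbol{\Phi}_r$. First I would compute $\mathbb{E}\{\mathbf{q}_{j,k}\}$ by applying linearity of expectation to~(\ref{channelss}): since the four terms $\mathbf{\check{g}}_{r,j,k}^{(2)},\mathbf{\check{g}}_{r,j,k}^{(3)},\mathbf{\check{g}}_{r,j,k}^{(4)}$ and $\mathbf{h}_{j,k}$ each contain at least one zero-mean Gaussian factor, only the fully deterministic LoS cascade $\mathbf{\check{g}}_{r,j,k}^{(1)}$ survives, giving $\mathbb{E}\{\mathbf{q}_{j,k}\}=\mathbf{\check{g}}_{j,k}^{(1)}$. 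The formula for $\mathbb{E}\{\mathbf{y}_{j,k}\}$ is then immediate from~(\ref{y_m_k}), using $\mathbb{E}\{\Tilde{\mathbf{n}}_j\}=\mathbf{0}$ together with the pilot-sharing decomposition.

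Next, for $\mathrm{Cov}\{\mathbf{q}_{j,k},\mathbf{q}_{j,k}\}$ I would write the zero-mean fluctuation as $\mathbf{q}_{j,k}-\mathbb{E}\{\mathbf{q}_{j,k}\}=\sum_{r}\bigl(\mathbf{\check{g}}_{r,j,k}^{(2)}+\mathbf{\check{g}}_{r,j,k}^{(3)}+\mathbf{\check{g}}_{r,j,k}^{(4)}\bigr)+\mathbf{h}_{j,k}$. All cross-terms vanish by the independence hierarchy mentioned above, leaving only the diagonal contributions, each of which I would evaluate with the two standard Gaussian identities $\mathbb{E}\{\Tilde{\mathbf{H}}_{r,j}\mathbf{A}\Tilde{\mathbf{H}}_{r,j}^H\}=\mathrm{Tr}(\mathbf{A})\mathbf{I}_N$ for deterministic $\mathbf{A}\in\mathbb{C}^{M\times M}$, and $\mathbb{E}\{\Tilde{\mathbf{g}}_{r,k}\Tilde{\mathbf{g}}_{r,k}^H\}=\mathbf{I}_M$, combined with $\boldsymbol{\Phi}_r\boldsymbol{\Phi}_r^H=\mathbf{I}_M$ and $\|\mathbf{a}_{M,r,j}\|^2=M$. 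The $\mathbf{\check{g}}^{(2)}$ contribution, using $\Bar{\mathbf{H}}_{r,j}=\mathbf{a}_{N,r,j}\mathbf{a}_{M,r,j}^H$, produces the rank-one piece $\xi_{r,j,k}\mathbf{a}_{N,r,j}\mathbf{a}_{N,r,j}^H$; the $\mathbf{\check{g}}^{(3)}$, $\mathbf{\check{g}}^{(4)}$ and $\mathbf{h}$ contributions all reduce to scaled identities that aggregate into $\mu_{j,k}\mathbf{I}_N$.

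For the cross-covariance (\ref{cov_qml_qml}) with $l\in\mathcal{P}_k\setminus\{k\}$, the key observation is that $\Tilde{\mathbf{H}}_{r,j}$ depends only on the AP/RIS indices and is therefore \emph{shared} across users, whereas $\Tilde{\mathbf{g}}_{r,k},\Tilde{\mathbf{g}}_{r,l}$ and $\Tilde{\mathbf{h}}_{j,k},\Tilde{\mathbf{h}}_{j,l}$ are independent zero-mean and are independent of $\Tilde{\mathbf{H}}_{r,j}$. Checking each pairing among $\{\mathbf{\check{g}}^{(2)},\mathbf{\check{g}}^{(3)},\mathbf{\check{g}}^{(4)},\mathbf{h}\}$ for $k$ against the same list for $l$, only the $\mathbf{\check{g}}^{(3)}_{r,j,k}\times(\mathbf{\check{g}}^{(3)}_{r,j,l})^H$ pair survives; applying the $\Tilde{\mathbf{H}}$-identity with $\mathbf{A}=\boldsymbol{\Phi}_r\Bar{\mathbf{g}}_{r,k}\Bar{\mathbf{g}}_{r,l}^H\boldsymbol{\Phi}_r^H$ and using $\boldsymbol{\Phi}_r^H\boldsymbol{\Phi}_r=\mathbf{I}_M$ yields $\delta_{j,k,l}\mathbf{I}_N$ after summing over $r$. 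The remaining formulas for $\mathrm{Cov}\{\mathbf{y}_{j,k},\mathbf{y}_{j,k}\}$ and $\mathrm{Cov}\{\mathbf{q}_{j,k},\mathbf{y}_{j,k}\}$ then follow by expanding $\mathbf{y}_{j,k}=\mathbf{q}_{j,k}+\sum_{l\in\mathcal{P}_k\setminus\{k\}}\mathbf{q}_{j,l}+\Tilde{\mathbf{n}}_j$, distributing bilinearly, and using $\mathrm{Cov}\{\Tilde{\mathbf{n}}_j,\Tilde{\mathbf{n}}_j\}=(\sigma^2/(\rho\tau_p))\mathbf{I}_N$.

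The main obstacle I expect is bookkeeping rather than conceptual difficulty: one must systematically verify that every one of the many possible cross-terms (across four stochastic components, two users, and all pairs $(r,r')$ of RIS indices) indeed vanishes, and must correctly attribute each factor of $M$, $\|\Bar{\mathbf{g}}_{r,k}\|^2$, or $\mathrm{Tr}(\cdot)$ to the appropriate application of the $\Tilde{\mathbf{H}}\mathbf{A}\Tilde{\mathbf{H}}^H$ identity. Particular care is needed to confirm that cross-RIS terms such as $\mathbb{E}\{\Tilde{\mathbf{H}}_{r,j}\boldsymbol{\Phi}_r\Tilde{\mathbf{g}}_{r,k}\Tilde{\mathbf{g}}_{r',k}^H\boldsymbol{\Phi}_{r'}^H\Tilde{\mathbf{H}}_{r',j}^H\}$ vanish for $r\neq r'$, and to correctly cancel the LoS$\times$LoS outer product $\mathbf{\check{g}}_{r,j,k}^{(1)}(\mathbf{\check{g}}_{r',j,l}^{(1)})^H$ against the mean product $\mathbb{E}\{\mathbf{q}_{j,k}\}\mathbb{E}\{\mathbf{q}_{j,l}\}^H$ when forming the covariance.
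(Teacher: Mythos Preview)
Your proposal is correct and follows essentially the same approach as the paper: both proofs subtract the deterministic LoS cascade to isolate the zero-mean fluctuation, expand the resulting outer products term-by-term using independence of $\Tilde{\mathbf{H}}_{r,j}$, $\Tilde{\mathbf{g}}_{r,k}$, $\Tilde{\mathbf{h}}_{j,k}$, apply the identity $\mathbb{E}\{\Tilde{\mathbf{H}}_{r,j}\mathbf{A}\Tilde{\mathbf{H}}_{r,j}^H\}=\mathrm{Tr}(\mathbf{A})\mathbf{I}_N$ and the unitarity of $\boldsymbol{\Phi}_r$, and then assemble $\mathrm{Cov}\{\mathbf{y}_{j,k},\mathbf{y}_{j,k}\}$ and $\mathrm{Cov}\{\mathbf{q}_{j,k},\mathbf{y}_{j,k}\}$ by bilinear expansion over the pilot-sharing decomposition. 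Your write-up is in fact slightly more explicit than the paper's in flagging the cross-RIS vanishing and the rank-one structure $\Bar{\mathbf{H}}_{r,j}=\mathbf{a}_{N,r,j}\mathbf{a}_{M,r,j}^H$, but the substance is identical.
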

\begin{proof}

Recalling the definition of (\ref{channelss}) and noting that $\Tilde{\mathbf{H}}_{r, j}$, $\Tilde{\mathbf{g}}_{r, k}$, $\Tilde{\mathbf{h}}_{j,  k}$, and $\mathbf{\check{n}}_j$ are all independent of each other and having zero-mean entries, we have
\begin{align}
    \mathbb{E}\{ \mathbf{y}_{j,  k} \} =  \mathbb{E}\{ \mathbf{q}_{j,  k}  \} + \sum_{l \in \mathcal{P}_k \backslash \{k\}} \mathbb{E}\{ \mathbf{q}_{j,  l} \} \mathbb{E}\{ \boldsymbol{\varrho}_k^H \boldsymbol{\varrho}_l \} + \mathbb{E}\{  \Tilde{\mathbf{n}}_j \},
    \label{exp_y}
\end{align}
from (\ref{exp_y}), it is obvious that $\mathbb{E}\{  \mathbf{\check{N}}_j \frac{\boldsymbol{\varrho}_k}{\sqrt{\tau_p}} \} = \boldsymbol{0}^{N \times 1}$ and $\mathbb{E}\{ \boldsymbol{\varrho}_k^H \boldsymbol{\varrho}_l \} = 1$. Based on (\ref{channelss}), for calculation of $\mathbb{E}\{ \mathbf{q}_{j,  k}  \}$, we have
\begin{align}
\label{E_q}
    \mathbb{E}\{ \mathbf{q}_{j,  k}  \} &= \sum_{r = 1}^{R} \mathbb{E} \{ \mathbf{q}_{r,  j,  k}  \} = \sum_{r = 1}^{R} \mathbb{E} \{ \mathbf{\check{g}}_{r,  j,  k}^{(1)} \}  \\ \nonumber
    &= \sum_{r = 1}^{R} \left( \eta_{r, j, k}^{(1)} \Bar{\mathbf{H}}_{r, j}  \boldsymbol{\Phi}_r  \Bar{\mathbf{g}}_{r, k}  \right).
\end{align}
Based on (\ref{E_q}), $ \mathbb{E}\{ \mathbf{y}_{j,  k} \}$ can be written as
\begin{align} \label{exp_ymk}
     \mathbb{E}\{ \mathbf{y}_{j,  k} \} &= \mathbb{E}\{ \mathbf{q}_{j,  k}  \} + \sum_{l \in \mathcal{P}_k \backslash \{k\}}  \mathbb{E}\{ \mathbf{q}_{j,  l}  \} \\ \nonumber
     &=\sum_{r = 1}^{R} \left( \eta_{r, j, k}^{(1)} \Bar{\mathbf{H}}_{r, j}  \boldsymbol{\Phi}_r  \Bar{\mathbf{g}}_{r, k}  + \sum_{l \in \mathcal{P}_k \backslash \{k\}} \eta_{r, j, l}^{(1)} \Bar{\mathbf{H}}_{r, j}  \boldsymbol{\Phi}_r  \Bar{\mathbf{g}}_{r, l}  \right).
\end{align}

The covariance matrix between the unknown channel $ \mathbf{q}_{j,  k}$ and the observation vector $\mathbf{y}_{j,  k}$ can be written as
\begin{align}
    \mathrm{Cov} \{\mathbf{q}_{j,  k}, \mathbf{y}_{j,  k}   \} &= \mathbb{E}\Big\{ \Big( \mathbf{q}_{j,  k} - \mathbb{E}\{ \mathbf{q}_{j,  k}\} \Big)\Big( \mathbf{y}_{j,  k} - \mathbb{E}\{ \mathbf{y}_{j,  k} \}  \Big)^H \Big\}. 
\end{align}
Based on (\ref{y_m_k}) and (\ref{exp_ymk}), we have
\begin{align}
    \mathrm{Cov} \{\mathbf{q}_{j,  k}, \mathbf{y}_{j,  k}   \} &= \mathbb{E} \left\{
    \begin{array}{l}
          \Big( \mathbf{q}_{j,  k} - \mathbb{E}\{ \mathbf{q}_{j,  k}\} \Big) \Big( \mathbf{q}_{j,  k} + \sum_{l \in \mathcal{P}_k \backslash \{k\}} \mathbf{q}_{j,  l} \\
         + \Tilde{\mathbf{n}}_j -\mathbb{E}\{ \mathbf{q}_{j,  k}  \} - \sum_{l \in \mathcal{P}_k \backslash \{k\}}  \mathbb{E}\{ \mathbf{q}_{j,  l} \} \Big)^H
    \end{array} \right\} \\
    &=\mathbb{E} \left\{
    \begin{array}{l}
    \Big( \mathbf{q}_{j,  k} - \mathbb{E}\{ \mathbf{q}_{j,  k}\} \Big) \Big( \mathbf{q}_{j,  k} - \mathbb{E}\{ \mathbf{q}_{j,  k}\} \Big)^H \\ 
    + \Big( \mathbf{q}_{j,  k} - \mathbb{E}\{ \mathbf{q}_{j,  k}\} \Big) \left(\mathbf{\check{n}}_j  \frac{\boldsymbol{\varrho}_k}{\sqrt{\rho \tau_p}} \right)^H \\ \nonumber
    + \Big( \mathbf{q}_{j,  k} - \mathbb{E}\{ \mathbf{q}_{j,  k}\} \Big) \Big( \sum_{l \in \mathcal{P}_k \backslash \{k\}} \mathbf{q}_{j,  l}  -  \sum_{l \in \mathcal{P}_k \backslash \{k\}}  \mathbb{E}\{ \mathbf{q}_{j,  l} \}   \Big)^H
    \end{array} \right\} \\ \nonumber
    & = \mathrm{Cov} \{\mathbf{q}_{j,  k}, \mathbf{q}_{j,  k}    \} + \mathrm{Cov} \{\mathbf{q}_{j,  k}, \sum_{l \in \mathcal{P}_k \backslash \{k\}} \mathbf{q}_{j,  l} \} \\ \nonumber
    &= \mathrm{Cov} \{\mathbf{q}_{j,  k}, \mathbf{q}_{j,  k}    \} + \sum_{l \in \mathcal{P}_k \backslash \{k\}} \mathrm{Cov} \{\mathbf{q}_{j,  k}, \mathbf{q}_{j,  l} \}.
\end{align}

For the calculation of $\mathrm{Cov} \{\mathbf{q}_{j,  k}, \mathbf{q}_{j,  k} \}$, we have 
\begin{align}
    \mathrm{Cov} \{\mathbf{q}_{j,  k}, \mathbf{q}_{j,  k} \} &= \mathbb{E}\Bigg\{ \Big(  \mathbf{q}_{j,  k} - \mathbb{E}\{ \mathbf{q}_{j,  k}\} \Big) \Big( \mathbf{q}_{j,  k} - \mathbb{E}\{ \mathbf{q}_{j,  k}\} \Big)^H \Bigg\} \\ \nonumber
    &= \mathbb{E}  \left\{
    \begin{array}{l}
    \sum_{r = 1}^R \Bigg[ \Bigg( \eta_{r, j, k}^{(2)} \Bar{\mathbf{H}}_{r, j} \boldsymbol{\Phi}_r \Tilde{\mathbf{g}}_{r, k}
    +\eta_{r, j, k}^{(3)} \Tilde{\mathbf{H}}_{r, j} \boldsymbol{\Phi}_r \Bar{\mathbf{g}}_{r, k} \\ \nonumber
    + \eta_{r, j, k}^{(4)} \Tilde{\mathbf{H}}_{r, j} \boldsymbol{\Phi}_r \Tilde{\mathbf{g}}_{r, k} + \eta_{j, k}^{(5)} \Tilde{\mathbf{h}}_{j,  k} \Bigg) \Bigg( \eta_{r, j, k}^{(2)} \Tilde{\mathbf{g}}_{r, k}^H \boldsymbol{\Phi}_r^H \Bar{\mathbf{H}}_{r, j}^H \\ \nonumber
    + \eta_{r, j, k}^{(3)} \Bar{\mathbf{g}}_{r, k}^H  \boldsymbol{\Phi}_r^H \Tilde{\mathbf{H}}_{r, j}^H + \eta_{r, j, k}^{(4)} \Tilde{\mathbf{g}}_{r, k}^H \boldsymbol{\Phi}_r^H \Tilde{\mathbf{H}}_{r, j}^H + \eta_{j, k}^{(5)} \Tilde{\mathbf{h}}_{j,  k}^H \Bigg) \Bigg]
    \end{array}
    \right\} \\ \nonumber
    &=\mathbb{E} \left\{
    \begin{array}{l}
    \sum_{r = 1}^R \Bigg[ (\eta_{r, j, k}^{(2)})^2 \Bar{\mathbf{H}}_{r, j} \boldsymbol{\Phi}_r \Tilde{\mathbf{g}}_{r, k} \Tilde{\mathbf{g}}_{r, k}^H \boldsymbol{\Phi}_r^H \Bar{\mathbf{H}}_{r, j}^H \\+ (\eta_{r, j, k}^{(3)})^2 \Tilde{\mathbf{H}}_{r, j} \boldsymbol{\Phi}_r \Bar{\mathbf{g}}_{r, k} \Bar{\mathbf{g}}_{r, k}^H  \boldsymbol{\Phi}_r^H \Tilde{\mathbf{H}}_{r, j}^H \\ \nonumber
    +  (\eta_{r, j, k}^{(4)})^2 \Tilde{\mathbf{H}}_{r, j} \boldsymbol{\Phi}_r \Tilde{\mathbf{g}}_{r, k} \Tilde{\mathbf{g}}_{r, k}^H \boldsymbol{\Phi}_r^H \Tilde{\mathbf{H}}_{r, j}^H + (\eta_{j, k}^{(5)})^2 \Tilde{\mathbf{h}}_{j,  k} \Tilde{\mathbf{h}}_{j,  k}^H \Bigg]
    \end{array}
    \right\} \\ \nonumber
    & = \sum_{r = 1}^R \Big(  (M(\eta_{r, j, k}^{(2)})^2 \mathbf{a}_{N,r,j} \mathbf{a}_{N,r,j}^H +  \left( M \left( (\eta_{r, j, k}^{(3)})^2 + (\eta_{r, j, k}^{(4)})^2 \right) + (\eta_{j, k}^{(5)})^2 \right) \mathbf{I}_N \Big).
\end{align}

Similarly, for $\mathrm{Cov} \{\mathbf{y}_{j,  k}, \mathbf{y}_{j,  k} \}$ we have
\begin{align} \nonumber
    \mathrm{Cov} \{\mathbf{y}_{j,  k}, \mathbf{y}_{j,  k} \} &= \mathbb{E}\Big\{ \Big( \mathbf{y}_{j,  k} - \mathbb{E}\{ \mathbf{y}_{j,  k}\} \Big)\Big( \mathbf{y}_{j,  k} - \mathbb{E}\{ \mathbf{y}_{j,  k} \}  \Big)^H \Big\}  \\ \nonumber
    &= \mathbb{E} \left\{
    \begin{array}{l}
     \Big( \mathbf{q}_{j,  k} + \sum_{l \in \mathcal{P}_k \backslash \{k\}} \mathbf{q}_{j,  l}  + \Tilde{\mathbf{n}}_j - \mathbb{E}\{ \mathbf{q}_{j,  k}  \} \\ - \sum_{l \in \mathcal{P}_k \backslash \{k\}}  \mathbb{E}\{ \mathbf{q}_{j,  l}  \} \Big) \Big( \mathbf{q}_{j,  k} + \sum_{l \in \mathcal{P}_k \backslash \{k\}} \mathbf{q}_{j,  l} \\ + \Tilde{\mathbf{n}}_j - \mathbb{E}\{ \mathbf{q}_{j,  k}  \} - \sum_{l \in \mathcal{P}_k \backslash \{k\}}  \mathbb{E}\{ \mathbf{q}_{j,  l}  \} \}  \Big)^H 
    \end{array} \right\} \\ \label{cov_yy}
    &= \mathrm{Cov} \{\mathbf{q}_{j,  k}, \mathbf{q}_{j,  k} \} + 2\sum_{l \in \mathcal{P}_k \backslash \{k\}} \mathrm{Cov} \{\mathbf{q}_{j,  k}, \mathbf{q}_{j,  l} \} \\ \nonumber
    &+ \sum_{l \in \mathcal{P}_k \backslash \{k\}} \mathrm{Cov} \{\mathbf{q}_{j,  l},  \mathbf{q}_{j,  l} \} + \frac{\sigma^2}{\rho \tau_p} \mathbf{I}_N.
\end{align}

Next, $\mathrm{Cov} \{\mathbf{q}_{j,  k}, \mathbf{q}_{j,  l} \}$ could be calculated as follows
\begin{align}
    \mathrm{Cov} \{&\mathbf{q}_{j,  k},\mathbf{q}_{j,  l} \} =   \mathbb{E}\Bigg\{ \Big(  \mathbf{q}_{j,  k} - \mathbb{E}\{ \mathbf{q}_{j,  k}\} \Big) \Big( \mathbf{q}_{j,  l} - \mathbb{E}\{ \mathbf{q}_{j,  l}\} \Big)^H \Bigg\} \\ \nonumber
    &=   \mathbb{E}  \left\{
    \begin{array}{l}
    \sum_{r = 1}^R  \Bigg[ \Bigg( \eta_{r, j, k}^{(2)} \Bar{\mathbf{H}}_{r, j} \boldsymbol{\Phi}_r \Tilde{\mathbf{g}}_{r, k}
    +\eta_{r, j, k}^{(3)} \Tilde{\mathbf{H}}_{r, j} \boldsymbol{\Phi}_r \Bar{\mathbf{g}}_{r, k} \\ \nonumber
    + \eta_{r, j, k}^{(4)} \Tilde{\mathbf{H}}_{r, j} \boldsymbol{\Phi}_r \Tilde{\mathbf{g}}_{r, k} + \eta_{j, k}^{(5)} \Tilde{\mathbf{h}}_{j,  k} \Bigg) \Bigg( \eta_{r, j, l}^{(2)} \Tilde{\mathbf{g}}_{r, l}^H \boldsymbol{\Phi}_r^H \Bar{\mathbf{H}}_{r, j}^H \\ \nonumber
    +  \eta_{r, j, l}^{(3)} \Bar{\mathbf{g}}_{r, l}^H  \boldsymbol{\Phi}_r^H \Tilde{\mathbf{H}}_{r, j}^H + \eta_{r, j, l}^{(4)} \Tilde{\mathbf{g}}_{r, l}^H \boldsymbol{\Phi}_r^H \Tilde{\mathbf{H}}_{r, j}^H + \eta_{j, l}^{(5)} \Tilde{\mathbf{h}}_{j,  l}^H \Bigg) \Bigg]
    \end{array} \right\} \\ \nonumber
    &=   \mathbb{E} \left\{
    \begin{array}{l}
    \sum_{r = 1}^R  \Bigg[ \eta_{r, j, k}^{(3)} \eta_{r, j, l}^{(3)} \Tilde{\mathbf{H}}_{r, j} \boldsymbol{\Phi}_r \Bar{\mathbf{g}}_{r, k} \Bar{\mathbf{g}}_{r, l}^H  \boldsymbol{\Phi}_r^H \Tilde{\mathbf{H}}_{r, j}^H \Bigg] 
    \end{array}
    \right\} \\ \nonumber
    & = \sum_{r = 1}^R \eta_{r, j, k}^{(3)} \eta_{r, j, l}^{(3)} \mathrm{Tr}( \Bar{\mathbf{g}}_{r, k} \Bar{\mathbf{g}}_{r, l}^H  ) \mathbf{I}_N.
\end{align}
\end{proof}

The LMMSE estimator of the channel $\mathbf{q}_{j,  k}$ based on the observation $\mathbf{y}_{j,  k}$, can be written as \cite[Chapter 12.5]{kay1993fundamentals}
\begin{equation}
    \hat{\mathbf{q}}_{j,  k} = \mathbb{E}\{ \mathbf{q}_{j,  k}\} + \mathrm{Cov} \{ \mathbf{y}_{j,  k}, \mathbf{q}_{j,  k} \} \mathrm{Cov}^{-1} \{ \mathbf{y}_{j,  k}, \mathbf{y}_{j,  k} \} (\mathbf{y}_{j,  k} - \mathbb{E}\{ \mathbf{y}_{j,  k} \}).
\end{equation}
To this end, we begin with the calculation of $\mathrm{Cov}^{-1} \{ \mathbf{y}_{j,  k}, \mathbf{y}_{j,  k} \}$ in (\ref{cov_yy}).
First, $\mathrm{Cov}\{ \mathbf{y}_{j,  k}, \mathbf{y}_{j,  k} \}$ can be written in a compact form as follows
\begin{align}
    \mathrm{Cov}\{ \mathbf{y}_{j,  k}, \mathbf{y}_{j,  k} \} = \sum_{r=1}^R \omega_{r, j,  k} \mathbf{a}_{N, r, j} \mathbf{a}_{N, r, j}^H + \nu_{j,  k }  \mathbf{I}_N ,
\end{align}
where 
\begin{align}
    \xi_{r, j,  k} &\triangleq M(\eta_{r, j, k}^{(2)})^2, \\
    \mu_{j,  k} &\triangleq \sum_{r=1}^R ( M ( (\eta_{r, j, k}^{(3)})^2 + (\eta_{r, j, k}^{(4)})^2 ) + (\eta_{j, k}^{(5)})^2 ), \\
     \delta_{j,  k, l} &\triangleq \sum_{r = 1}^R \eta_{r, j, k}^{(3)} \eta_{r, j, l}^{(3)} \mathrm{Tr}\{ \Bar{\mathbf{g}}_{r, k} \Bar{\mathbf{g}}_{r, l}^H  \}, \\
     \omega_{r, j,  k} &\triangleq \xi_{r, j,  k} + \sum_{l \in \mathcal{P}_k \backslash \{k\}} \xi_{r, j,  l}, \\
     \nu_{j,  k} &\triangleq  \mu_{j,  k} +  \sum_{l \in \mathcal{P}_k \backslash \{k\}}\mu_{j,  l} + 2 \sum_{l \in \mathcal{P}_k \backslash \{k\}} \delta_{j,  k, l} + \frac{\sigma^2}{\rho \tau_p}, \\
     \chi_{j,  k} &\triangleq \mu_{j,  k} + \sum_{l \in \mathcal{P}_k \backslash \{k\}}\delta_{j,  k, l}.
\end{align}
Furthermore, we have
\begin{align}
    \mathrm{Cov} \{ \mathbf{y}_{j,  k}, \mathbf{q}_{j,  k} \} = \sum_{r=1}^R   \xi_{r, j,  k} \mathbf{a}_{N, r, j} \mathbf{a}_{N, r, j}^H + \chi_{j,  k} \mathbf{I}_N.
\end{align}
Next, we arrive at
\begin{align} \nonumber
    \mathrm{Cov} \{ \mathbf{y}_{j,  k}, & \mathbf{q}_{j,  k} \}  \mathrm{Cov}^{-1} \{ \mathbf{y}_{j,  k}, \mathbf{y}_{j,  k} \} \\  
    &= \Big(\sum_{r=1}^R   \xi_{r, j,  k} \mathbf{a}_{N, r, j} \mathbf{a}_{N, r, j}^H + \chi_{j,  k} \mathbf{I}_N \Big)\Big( \sum_{r=1}^R \omega_{r, j,  k} \mathbf{a}_{N, r, j} \mathbf{a}_{N, r, j}^H + \nu_{j,  k }  \mathbf{I}_N  \Big)^{-1} \triangleq \mathbf{A}_{j,  k} = \mathbf{A}_{j,  k}^H .
\end{align}
Then, the estimated channel can be calculated as
\begin{align}
    \hat{\mathbf{q}}_{j,  k} &= \mathbf{\check{g}}_{j,  k}^{(1)} +  \mathbf{A}_{j,  k} \left( \mathbf{y}_{j,  k} -  \mathbf{\check{g}}_{j,  k}^{(1)} - \sum_{l \in \mathcal{P}_k \backslash \{k\}} \mathbf{\check{g}}_{j,  l}^{(1)} \right) \\ 
    &= \mathbf{A}_{r, j,  k} \mathbf{y}_{j,  k} + (\mathbf{I}_N - \mathbf{A}_{j,  k})\mathbf{\check{g}}_{j,  k}^{(1)} - \sum_{l \in \mathcal{P}_k \backslash \{k\}} \mathbf{A}_{r, j,  k} \mathbf{\check{g}}_{j,  l}^{(1)}.
    \label{thisis}
\end{align}
Additionally, (\ref{thisis}) can be further expanded as follows
\begin{align} \label{est_q}
    \hat{\mathbf{q}}_{j,  k} &= \mathbf{A}_{j,  k} \left( \sum_{u = 1}^{4} \mathbf{\check{g}}_{j,  k}^{(u)} +  \mathbf{h}_{j,  k} + \sum_{l \in \mathcal{P}_k \backslash \{k\}} \left( \sum_{u = 1}^{4} \mathbf{\check{g}}_{j,  l}^{(u)} + \mathbf{h}_{j,  l} \right) + \Tilde{\mathbf{n}}_j \right) \\ \nonumber
    &+ (\mathbf{I}_N - \mathbf{A}_{j,  k})\mathbf{\check{g}}_{j,  k}^{(1)} - \sum_{l \in \mathcal{P}_k \backslash \{k\}} \mathbf{A}_{j,  k} \mathbf{\check{g}}_{j,  l}^{(1)} \\ \nonumber
    &= \mathbf{\check{g}}_{j,  k}^{(1)} + \sum_{u = 2}^{4} \mathbf{A}_{j,  k}\mathbf{\check{g}}_{j,  k}^{(u)} + \mathbf{A}_{j,  k} \mathbf{h}_{j,  k} + \sum_{l \in \mathcal{P}_k \backslash \{k\}} \mathbf{A}_{j,  k} \left( \sum_{u = 2}^{4}  \mathbf{\check{g}}_{j,  l}^{(u)} +  \mathbf{h}_{j,  l} \right) \\ \nonumber
    &+  \mathbf{A}_{j,  k} \Tilde{\mathbf{n}}_j.
\end{align}
Furthermore, the channel estimation expression in (\ref{est_q}) could be written in a more simpler form as follows
\begin{align} \label{est_r}
     \hat{\mathbf{q}}_{j,  k} &= \sum_{r=1}^R \Bigg( \eta_{r, j, k}^{(1)} \Bar{\mathbf{H}}_{r, j}  \boldsymbol{\Phi}_r  \Bar{\mathbf{g}}_{r, k} +\eta_{r, j, k}^{(2)}  \mathbf{A}_{j,  k}  \Bar{\mathbf{H}}_{r, j} \boldsymbol{\Phi}_r \Tilde{\mathbf{g}}_{r, k} \\\nonumber
     &+ \eta_{r, j, k}^{(3)} \mathbf{A}_{j,  k} \Tilde{\mathbf{H}}_{r, j} \boldsymbol{\Phi}_r \Bar{\mathbf{g}}_{r, k} + \eta_{r, j, k}^{(4)} \mathbf{A}_{j,  k} \Tilde{\mathbf{H}}_{r, j} \boldsymbol{\Phi}_r \Tilde{\mathbf{g}}_{r, k} + \eta_{j, k}^{(5)} \mathbf{A}_{j,  k} \Tilde{\mathbf{h}}_{j, k} \\\nonumber
     &+ \sum_{l \in \mathcal{P}_k \backslash \{k\}} \Bigg(  \eta_{r, j, l}^{(2)} \mathbf{A}_{j,  k} \Bar{\mathbf{H}}_{r, j} \boldsymbol{\Phi}_r \Tilde{\mathbf{g}}_{r, l} +  \eta_{r, j, l}^{(3)} \mathbf{A}_{r, j,  k} \Tilde{\mathbf{H}}_{r, j} \boldsymbol{\Phi}_r \Bar{\mathbf{g}}_{r, l} \\\nonumber
     &+  \eta_{r, j, l}^{(4)} \mathbf{A}_{r, j,  k} \Tilde{\mathbf{H}}_{r, j} \boldsymbol{\Phi}_r \Tilde{\mathbf{g}}_{r, l} +  \eta_{j, l}^{(5)} \mathbf{A}_{r, j,  k} \Tilde{\mathbf{h}}_{j, l} \Bigg) + \mathbf{A}_{j,  k} \Tilde{\mathbf{n}}_j \Bigg).
\end{align}

\section*{Appendix B}
\subsection{Derivation of $ \mathbb{E} \{\hat{\mathbf{q}}_{j, k}^H \mathbf{q}_{j, k}\}$}
We first begin with finding the $\mathbb{E}\{ \mathbf{w}_{k, k} \}$ as follows
\begin{align}\nonumber
    \mathbb{E} \{ \mathbf{w}_{k, k} \} &=  \mathbb{E}\left\{ \left[\hat{\mathbf{q}}_{1, k}^H \mathbf{q}_{1, k}, \dots, \hat{\mathbf{q}}_{J, k}^H \mathbf{q}_{J, k}  \right]^T \right\} \\ 
    &=  \Big[\mathbb{E} \{\hat{\mathbf{q}}_{1, k}^H \mathbf{q}_{1, k}\}, \dots, \mathbb{E}\{\hat{\mathbf{q}}_{J, k}^H \mathbf{q}_{J, k}\}  \Big]^T. \label{Ewkk} 
\end{align}
Next, we move on to calculate the $j$th element of (\ref{Ewkk}) where the other elements are similar
\begin{align} \nonumber 
    \mathbb{E} \{\hat{\mathbf{q}}_{j, k}^H \mathbf{q}_{j, k}\} &= \sum_{r=1}^R  \Bigg(\mathbb{E} \Bigg\{ \Bigg(  \eta_{r, j, k}^{(1)} \Bar{\mathbf{H}}_{r, j}  \boldsymbol{\Phi}_r  \Bar{\mathbf{g}}_{r, k} +\eta_{r, j, k}^{(2)}  \mathbf{A}_{j,  k}  \Bar{\mathbf{H}}_{r, j} \boldsymbol{\Phi}_r \Tilde{\mathbf{g}}_{r, k} \\\nonumber
     &+ \eta_{r, j, k}^{(3)} \mathbf{A}_{j,  k} \Tilde{\mathbf{H}}_{r, j} \boldsymbol{\Phi}_r \Bar{\mathbf{g}}_{r, k} + \eta_{r, j, k}^{(4)} \mathbf{A}_{j,  k} \Tilde{\mathbf{H}}_{r, j} \boldsymbol{\Phi}_r \Tilde{\mathbf{g}}_{r, k} + \eta_{j, k}^{(5)} \mathbf{A}_{j,  k} \Tilde{\mathbf{h}}_{j, k} \\\nonumber
     &+ \sum_{l \in \mathcal{P}_k \backslash \{k\}} \Bigg(  \eta_{r, j, l}^{(2)} \mathbf{A}_{j,  k} \Bar{\mathbf{H}}_{r, j} \boldsymbol{\Phi}_r \Tilde{\mathbf{g}}_{r, l} +  \eta_{r, j, l}^{(3)} \mathbf{A}_{r, j,  k} \Tilde{\mathbf{H}}_{r, j} \boldsymbol{\Phi}_r \Bar{\mathbf{g}}_{r, l} \\\nonumber
     &+  \eta_{r, j, l}^{(4)} \mathbf{A}_{r, j,  k} \Tilde{\mathbf{H}}_{r, j} \boldsymbol{\Phi}_r \Tilde{\mathbf{g}}_{r, l} +  \eta_{j, l}^{(5)} \mathbf{A}_{r, j,  k} \Tilde{\mathbf{h}}_{j, l} \Bigg) + \mathbf{A}_{j,  k} \Tilde{\mathbf{n}}_j \Bigg)^H
     \sum_{r^\prime=1}^R\Bigg( \eta_{r^\prime, j, k}^{(1)} \Bar{\mathbf{H}}_{r^\prime, j}  \boldsymbol{\Phi}_{r^\prime}  \Bar{\mathbf{g}}_{r^\prime, k} \\ \nonumber
     &+  \eta_{r^\prime, j, k}^{(2)} \Bar{\mathbf{H}}_{r^\prime, j} \boldsymbol{\Phi}_r^\prime \Tilde{\mathbf{g}}_{r^\prime, k} + \eta_{r^\prime, j, k}^{(3)} \Tilde{\mathbf{H}}_{r^\prime, j} \boldsymbol{\Phi}_{r^\prime} \Bar{\mathbf{g}}_{r^\prime, k} + \eta_{r^\prime, j, k}^{(4)} \Tilde{\mathbf{H}}_{r^\prime, j} \boldsymbol{\Phi}_{r^\prime} \Tilde{\mathbf{g}}_{r^\prime, k} + \eta_{j, k}^{(5)} \Tilde{\mathbf{h}}_{j, k} 
     \Bigg) \Bigg\} \Bigg) \\ \nonumber
     &= \sum_{r=1}^R  \Bigg( \sum_{r^\prime=1}^R \Big(\eta_{r, j, k}^{(1)} \eta_{r^\prime, j, k}^{(1)} \Bar{\mathbf{g}}_{r, k}^H \boldsymbol{\Phi}_r^H \Bar{\mathbf{H}}_{r, j}^H \Bar{\mathbf{H}}_{r, j}  \boldsymbol{\Phi}_r  \Bar{\mathbf{g}}_{r, k} \Big) + (\eta_{r, j, k}^{(2)})^2  \mathbb{E}\{ \Tilde{\mathbf{g}}_{r, k}^H \boldsymbol{\Phi}_r^H \Bar{\mathbf{H}}_{r, j}^H \mathbf{A}_{j,  k}^H \Bar{\mathbf{H}}_{r, j} \boldsymbol{\Phi}_r \Tilde{\mathbf{g}}_{r, k}   \} \\\nonumber
     &+ (\eta_{r, j, k}^{(3)})^2 \Bar{\mathbf{g}}_{r, k}^H \boldsymbol{\Phi}_r^H \mathbb{E}\{\Tilde{\mathbf{H}}_{r, j}^H \mathbf{A}_{j,  k}^H \Tilde{\mathbf{H}}_{r, j}\} \boldsymbol{\Phi}_r \Bar{\mathbf{g}}_{r, k}     + (\eta_{r, j, k}^{(4)})^2 \mathbb{E}\{\Tilde{\mathbf{g}}_{r, k}^H \boldsymbol{\Phi}_r^H \Tilde{\mathbf{H}}_{r, j}^H \mathbf{A}_{j,  k}^H \Tilde{\mathbf{H}}_{r, j} \boldsymbol{\Phi}_r \Tilde{\mathbf{g}}_{r, k}\} \\ \nonumber
     &+ (\eta_{j, k}^{(5)})^2 \mathbb{E}\{\Tilde{\mathbf{h}}_{j, k}^H \mathbf{A}_{j,  k}^H  \Tilde{\mathbf{h}}_{j, k} \} + \sum_{l \in \mathcal{P}_k \backslash \{k\}} \Big( \eta_{r, j, k}^{(3)} \eta_{r, j, l}^{(3)} \mathbb{E}\{\Bar{\mathbf{g}}_{r, l}^H \boldsymbol{\Phi}_r^H \Tilde{\mathbf{H}}_{r, j}^H \mathbf{A}_{j,  k}^H \Tilde{\mathbf{H}}_{r, j} \boldsymbol{\Phi}_r \Bar{\mathbf{g}}_{r, k} \} \Big) \Bigg) \\\nonumber
     &= \sum_{r=1}^R  \Bigg(   \sum_{r^\prime=1}^R \Big(\eta_{r, j, k}^{(1)} \eta_{r^\prime, j, k}^{(1)} \Bar{\mathbf{g}}_{r, k}^H \boldsymbol{\Phi}_r^H \Bar{\mathbf{H}}_{r, j}^H \Bar{\mathbf{H}}_{r^\prime, j}  \boldsymbol{\Phi}_{r^\prime}  \Bar{\mathbf{g}}_{r^\prime, k} \Big) + (\eta_{r, j, k}^{(2)})^2 M g_{r, j}\\ \nonumber
     &+  (\eta_{r, j, k}^{(3)})^2 M \mathrm{Tr}\{ \mathbf{A}_{j,  k}^H \}+ (\eta_{r, j, k}^{(4)})^2 M \mathrm{Tr}\{ \mathbf{A}_{j,  k}^H \} + (\eta_{j, k}^{(5)})^2 \mathrm{Tr}\{ \mathbf{A}_{j,  k}^H \} \\ \label{exp_qhq} 
     &+ \sum_{l \in \mathcal{P}_k \backslash \{k\}} \Big( \eta_{r, j, k}^{(3)} \eta_{r, j, l}^{(3)}  \mathrm{Tr}\{ \mathbf{A}_{j,  k}^H \}\Bar{\mathbf{g}}_{r, l}^H  \Bar{\mathbf{g}}_{r, k} \Big) \Bigg),
\end{align}
 where
\begin{align}
    g_{r, j} &\triangleq  \mathbf{a}_{N, r, j}^H \mathbf{A}_{j,  k} \mathbf{a}_{N, r, j} = g_{r, j}^H.
\end{align}

\subsection{Derivation of $  \mathbb{E}\{\hat{\mathbf{q}}_{j, k}^H \mathbf{q}_{j, i}  (\hat{\mathbf{q}}_{h, k}^H \mathbf{q}_{h, i})^H\}    $}

Next, we will focus on the derivation of $  \mathbb{E}\{\hat{\mathbf{q}}_{j, k}^H \mathbf{q}_{j, i}  (\hat{\mathbf{q}}_{h, k}^H \mathbf{q}_{h, i})^H\}    $ as follows
\begin{align} \nonumber
 \mathbb{E}\{\hat{\mathbf{q}}_{j, k}^H & \mathbf{q}_{j, i}  (\hat{\mathbf{q}}_{h, k}^H \mathbf{q}_{h, i})^H\} = \mathbb{E}\Bigg\{\sum_{r_1=1}^R 
 \Bigg(
 \eta_{r_1, j, k}^{(1)} \Bar{\mathbf{H}}_{r_1, j}  \boldsymbol{\Phi}_{r_1}  \Bar{\mathbf{g}}_{{r_1}, k} +\eta_{{r_1}, j, k}^{(2)}  \mathbf{A}_{j,  k}  \Bar{\mathbf{H}}_{r, j} \boldsymbol{\Phi}_r \Tilde{\mathbf{g}}_{r, k} \\\nonumber
     &+ \eta_{{r_1}, j, k}^{(3)} \mathbf{A}_{j,  k} \Tilde{\mathbf{H}}_{{r_1}, j} \boldsymbol{\Phi}_{r_1} \Bar{\mathbf{g}}_{{r_1}, k} + \eta_{{r_1}, j, k}^{(4)} \mathbf{A}_{j,  k} \Tilde{\mathbf{H}}_{{r_1}, j} \boldsymbol{\Phi}_{r_1} \Tilde{\mathbf{g}}_{{r_1}, k} + \eta_{j, k}^{(5)} \mathbf{A}_{j,  k} \Tilde{\mathbf{h}}_{j, k} \\\nonumber
     &+ \sum_{l \in \mathcal{P}_k \backslash \{k\}} \Bigg(  \eta_{{r_1}, j, l}^{(2)} \mathbf{A}_{j,  k} \Bar{\mathbf{H}}_{{r_1}, j} \boldsymbol{\Phi}_{r_1} \Tilde{\mathbf{g}}_{{r_1}, l} +  \eta_{{r_1}, j, l}^{(3)} \mathbf{A}_{{r_1}, j,  k} \Tilde{\mathbf{H}}_{{r_1}, j} \boldsymbol{\Phi}_{r_1} \Bar{\mathbf{g}}_{{r_1}, l} \\\nonumber
     &+  \eta_{{r_1}, j, l}^{(4)} \mathbf{A}_{{r_1}, j,  k} \Tilde{\mathbf{H}}_{{r_1}, j} \boldsymbol{\Phi}_{r_1} \Tilde{\mathbf{g}}_{{r_1}, l} +  \eta_{j, l}^{(5)} \mathbf{A}_{j,  k} \Tilde{\mathbf{h}}_{j, l} \Bigg) + \mathbf{A}_{j,  k} \Tilde{\mathbf{n}}_j 
     \Bigg)^H \\ \nonumber
     & \times
     \sum_{r_2=1}^R \Bigg( \eta_{{r_2}, j, i}^{(1)} \Bar{\mathbf{H}}_{{r_2}, j}  \boldsymbol{\Phi}_{r_2}  \Bar{\mathbf{g}}_{{r_2}, i} +  \eta_{{r_2}, j, i}^{(2)} \Bar{\mathbf{H}}_{{r_2}, j} \boldsymbol{\Phi}_{r_2} \Tilde{\mathbf{g}}_{{r_2}, i} + \eta_{{r_2}, j, i}^{(3)} \Tilde{\mathbf{H}}_{{r_2}, j} \boldsymbol{\Phi}_{r_2} \Bar{\mathbf{g}}_{{r_2}, i} \\ \nonumber
     &+ \eta_{{r_2}, j, i}^{(4)} \Tilde{\mathbf{H}}_{{r_2}, j} \boldsymbol{\Phi}_{r_2} \Tilde{\mathbf{g}}_{{r_2}, i} + \eta_{j, i}^{(5)} \Tilde{\mathbf{h}}_{j, i} 
     \Bigg) \\ \nonumber
     &\times 
     \sum_{r_3=1}^R\Bigg( \eta_{{r_3}, h, i}^{(1)} \Bar{\mathbf{H}}_{{r_3}, h}  \boldsymbol{\Phi}_{r_3}  \Bar{\mathbf{g}}_{{r_3}, i} +  \eta_{{r_3}, h, i}^{(2)} \Bar{\mathbf{H}}_{{r_3}, h} \boldsymbol{\Phi}_{r_3} \Tilde{\mathbf{g}}_{{r_3}, i} + \eta_{{r_3}, h, i}^{(3)} \Tilde{\mathbf{H}}_{{r_3}, h} \boldsymbol{\Phi}_{r_3} \Bar{\mathbf{g}}_{{r_3}, i} \\ \nonumber
     &+ \eta_{{r_3}, h, i}^{(4)} \Tilde{\mathbf{H}}_{{r_3}, h} \boldsymbol{\Phi}_{r_3} \Tilde{\mathbf{g}}_{{r_3}, i} + \eta_{j, i}^{(5)} \Tilde{\mathbf{h}}_{h, i}  
     \Bigg)^H \\ \nonumber
     & \times 
      \sum_{r_4=1}^R\Bigg(
\eta_{{r_4}, h, k}^{(1)} \Bar{\mathbf{H}}_{{r_4}, h}  \boldsymbol{\Phi}_{r_4}  \Bar{\mathbf{g}}_{{r_4}, k} +\eta_{{r_4}, h, k}^{(2)}  \mathbf{A}_{h,  k}  \Bar{\mathbf{H}}_{{r_4}, h} \boldsymbol{\Phi}_{r_4} \Tilde{\mathbf{g}}_{{r_4}, k} \\\nonumber
     &+ \eta_{{r_4}, h, k}^{(3)} \mathbf{A}_{h,  k} \Tilde{\mathbf{H}}_{{r_4}, h} \boldsymbol{\Phi}_{r_4} \Bar{\mathbf{g}}_{{r_4}, k} + \eta_{{r_4}, h, k}^{(4)} \mathbf{A}_{h,  k} \Tilde{\mathbf{H}}_{{r_4}, h} \boldsymbol{\Phi}_{r_4} \Tilde{\mathbf{g}}_{{r_4}, k} + \eta_{h, k}^{(5)} \mathbf{A}_{h,  k} \Tilde{\mathbf{h}}_{h, k} \\\nonumber
     &+ \sum_{l \in \mathcal{P}_k \backslash \{k\}} \Bigg(  \eta_{{r_4}, h, l}^{(2)} \mathbf{A}_{h,  k} \Bar{\mathbf{H}}_{r, h} \boldsymbol{\Phi}_{r_4} \Tilde{\mathbf{g}}_{{r_4}, l} +  \eta_{{r_4}, h, l}^{(3)} \mathbf{A}_{h,  k} \Tilde{\mathbf{H}}_{{r_4}, h} \boldsymbol{\Phi}_{r_4} \Bar{\mathbf{g}}_{{r_4}, l} \\ \label{offdiag}
     &+  \eta_{{r_4}, h, l}^{(4)} \mathbf{A}_{h,  k} \Tilde{\mathbf{H}}_{{r_4}, h} \boldsymbol{\Phi}_{r_4} \Tilde{\mathbf{g}}_{{r_4}, l} +  \eta_{h, l}^{(5)} \mathbf{A}_{h,  k} \Tilde{\mathbf{h}}_{h, l} \Bigg) + \mathbf{A}_{h,  k} \Tilde{\mathbf{n}}_j\Bigg)
     \Bigg\}. 
\end{align}
We can write (\ref{offdiag}) in a more compact form as follows
\begin{align} \nonumber
    \mathbb{E}\{\hat{\mathbf{q}}_{j, k}^H &  \mathbf{q}_{j, i}  (\hat{\mathbf{q}}_{h, k}^H \mathbf{q}_{h, i})^H\} \\ \nonumber
    &=  \Bigg( \sum_{r_1=1}^R \Big( \sum_{u_1 = 2}^{4} \mathbf{A}_{j,  k} \mathbf{\check{g}}_{r_1,  j,  k}^{(u_1)} + \mathbf{\check{g}}_{r_1,  j,  k}^{(1)} +  \sum_{l \in \mathcal{P}_k \backslash \{k\}} \sum_{u_1 = 2}^{4} \mathbf{A}_{j,  k} \mathbf{\check{g}}_{r_1,  j,  l}^{(u_1)} \Big) \\\nonumber
    &+ \eta_{j, k}^{(5)} \mathbf{A}_{j,  k} \Tilde{\mathbf{h}}_{j, k} + \eta_{j, l}^{(5)} \mathbf{A}_{j,  k} \Tilde{\mathbf{h}}_{j, l} + \mathbf{A}_{j,  k} \Tilde{\mathbf{n}}_j \Bigg)^H \\ \nonumber
    & \times \Bigg( \sum_{r_2=1}^R\sum_{u_2 = 1}^{4} \mathbf{\check{g}}_{r_2,  j,  i}^{(u_2)} + \mathbf{h}_{j, i} \Bigg) \\ \nonumber
    & \times \Bigg( \sum_{r_3=1}^R\sum_{u_3 = 1}^{4} \mathbf{\check{g}}_{r_3,  h,  i}^{(u_3)} + \mathbf{h}_{h, i} \Bigg)^H \\\nonumber
    & \times \Bigg( \sum_{r_4=1}^R \Big( \sum_{u_4 = 2}^{4} \mathbf{A}_{h,  k} \mathbf{\check{g}}_{r_4,  h,  k}^{(u_4)} + \mathbf{\check{g}}_{r_4,  h,  k}^{(1)} +  \sum_{l \in \mathcal{P}_k \backslash \{k\}} \sum_{u_4 = 2}^{4} \mathbf{A}_{h,  k} \mathbf{\check{g}}_{r_4,  h,  l}^{(u_4)} \Big) \\ \label{new_off}
    &+ \eta_{h, l}^{(5)} \mathbf{A}_{h,  k} \Tilde{\mathbf{h}}_{h, l} + \eta_{h, k}^{(5)} \mathbf{A}_{h,  k} \Tilde{\mathbf{h}}_{h, k} + \mathbf{A}_{h,  k} \Tilde{\mathbf{n}}_h \Bigg).
\end{align}

Next, in order to find the non-zero terms (\ref{new_off}), we will write the channel model in (\ref{channelss}) in the following way:
\begin{align} \nonumber
    \mathbf{q}_{j, k} &= \eta_{r, j, k}^{(1)} \Bar{\mathbf{H}}_{r, j}  \boldsymbol{\Phi}_r  \Bar{\mathbf{g}}_{r, k} +  \eta_{r, j, k}^{(2)} \Bar{\mathbf{H}}_{r, j} \boldsymbol{\Phi}_r \Tilde{\mathbf{g}}_{r, k} \\ \nonumber
    &+ \eta_{r, j, k}^{(3)} \Tilde{\mathbf{H}}_{r, j} \boldsymbol{\Phi}_r \Bar{\mathbf{g}}_{r, k} + \eta_{r, j, k}^{(4)} \Tilde{\mathbf{H}}_{r, j} \boldsymbol{\Phi}_r \Tilde{\mathbf{g}}_{r, k} + \eta_{j, k}^{(5)} \Tilde{\mathbf{h}}_{j, k} \\ \label{channelss}
    &=  \mathcal{Q}_{r, j, k}^{(1)}(.)  + \mathcal{Q}_{r, j, k}^{(2)}(\Tilde{\mathbf{g}}_{r, k}) + \mathcal{Q}_{r, j, k}^{(3)}(\Tilde{\mathbf{H}}_{r, j}) + \mathcal{Q}_{r, j, k}^{(4)}(\Tilde{\mathbf{g}}_{r, k}, \Tilde{\mathbf{H}}_{r, j}) + \mathcal{Q}_{j, k}^{(5)}(\Tilde{\mathbf{h}}_{j, k}),
\end{align}
where $ \mathcal{Q}_{r, j, k}^{(u)}(\mathbf{x}) $ is random variable indicator and shows that the $u$th term of the channel is a function of a the random variable $\mathbf{x}$. Furthermore, $\mathcal{Q}_{r, j, k}^{(1)}(.)$ indicates that the first term, i.e., $\eta_{r, j, k}^{(1)} \Bar{\mathbf{H}}_{r, j}  \boldsymbol{\Phi}_r  \Bar{\mathbf{g}}_{r, k}$ does not contain a random variable. Hence, we can rewrite (\ref{new_off}) as follows

\begin{align} \nonumber
    \mathbb{E}\{& \hat{\mathbf{q}}_{j, k}^H \mathbf{q}_{j, i}  (\hat{\mathbf{q}}_{h, k}^H \mathbf{q}_{h, i})^H\} \\ \nonumber
    &= \mathbb{E} \Bigg\{  \Bigg( \sum_{r_1=1}^R \Big( \mathcal{Q}_{r_1, j, k}^{(1)}(.)  + \hat{\mathcal{Q}}_{r_1, j, k}^{(2)}(\Tilde{\mathbf{g}}_{r_1, k}) + \hat{\mathcal{Q}}_{r_1, j, k}^{(3)}(\Tilde{\mathbf{H}}_{r_1, j}) + \hat{\mathcal{Q}}_{r_1, j, k}^{(4)}(\Tilde{\mathbf{g}}_{r_1, k}, \Tilde{\mathbf{H}}_{r_1, j}) + \hat{\mathcal{Q}}_{j, k}^{(5)}(\Tilde{\mathbf{h}}_{j, k}) + \\ \nonumber
    & + \sum_{l \in \mathcal{P}_k \backslash \{k\}} \big( \hat{\mathcal{Q}}_{r_1, j, l}^{(2)}(\Tilde{\mathbf{g}}_{r_1, l}) + \hat{\mathcal{Q}}_{r_1, j, l}^{(3)}(\Tilde{\mathbf{H}}_{r_1, j}) + \hat{\mathcal{Q}}_{r_1, j, l}^{(4)}(\Tilde{\mathbf{g}}_{r_1, l}, \Tilde{\mathbf{H}}_{r_1, j}) + \hat{\mathcal{Q}}_{j, l}^{(5)}(\Tilde{\mathbf{h}}_{j, l}) \big) + \mathbf{A}_{j,  k} \Tilde{\mathbf{n}}_j  \Big) \Bigg)^H \\ \nonumber
    & \times \Bigg( \sum_{r_2=1}^R \mathcal{Q}_{r_2, j, i}^{(1)}(.)  + \mathcal{Q}_{r_2, j, i}^{(2)}(\Tilde{\mathbf{g}}_{r_2, i}) + \mathcal{Q}_{r_2, j, i}^{(3)}(\Tilde{\mathbf{H}}_{r_2, j}) + \mathcal{Q}_{r_2, j, i}^{(4)}(\Tilde{\mathbf{g}}_{r_2, i}, \Tilde{\mathbf{H}}_{r_2, j}) + \mathcal{Q}_{j, i}^{(5)}(\Tilde{\mathbf{h}}_{j, i}) \Bigg) \\ \nonumber
    & \times \Bigg( \sum_{r_3=1}^R \mathcal{Q}_{r_3, h, i}^{(1)}(.)  + \mathcal{Q}_{r_3, h, i}^{(2)}(\Tilde{\mathbf{g}}_{r_3, i}) + \mathcal{Q}_{r_3, h, i}^{(3)}(\Tilde{\mathbf{H}}_{r_3, h}) + \mathcal{Q}_{r_3, h, i}^{(4)}(\Tilde{\mathbf{g}}_{r_3, i}, \Tilde{\mathbf{H}}_{r_3, h}) + \mathcal{Q}_{h, i}^{(5)}(\Tilde{\mathbf{h}}_{h, i}) \Bigg)^H \\  \nonumber
    & \times \Bigg( \sum_{r_4=1}^R \Big( \mathcal{Q}_{r_4, h, k}^{(1)}(.)  + \hat{\mathcal{Q}}_{r_4, h, k}^{(2)}(\Tilde{\mathbf{g}}_{r_4, k}) + \hat{\mathcal{Q}}_{r_4, h, k}^{(3)}(\Tilde{\mathbf{H}}_{r_4, h}) + \hat{\mathcal{Q}}_{r_4, h, k}^{(4)}(\Tilde{\mathbf{g}}_{r_4, k}, \Tilde{\mathbf{H}}_{r_4, h}) + \hat{\mathcal{Q}}_{h, k}^{(5)}(\Tilde{\mathbf{h}}_{h, k}) \\ \label{new_off_Q}
    & + \sum_{l \in \mathcal{P}_k \backslash \{k\}} \big( \hat{\mathcal{Q}}_{r_4, h, l}^{(2)}(\Tilde{\mathbf{g}}_{r_4, l}) + \hat{\mathcal{Q}}_{r_4, h, l}^{(3)}(\Tilde{\mathbf{H}}_{r_4, h}) + \hat{\mathcal{Q}}_{r_4, h, l}^{(4)}(\Tilde{\mathbf{g}}_{r_4, l}, \Tilde{\mathbf{H}}_{r_4, h}) + \hat{\mathcal{Q}}_{h, l}^{(5)}(\Tilde{\mathbf{h}}_{h, k})\big)  + \mathbf{A}_{h,  k} \Tilde{\mathbf{n}}_h \Big) \Bigg) \Bigg\},
\end{align}
where $\hat{\mathcal{Q}}_{r, j, k}^{(u)}(\mathbf{x}) = \mathbf{A}_{j,  k}\mathcal{Q}_{r, j, j}^{(u)}(\mathbf{x})$.
Thanks to the in-dependency of the channels, the majority of the expectations in (\ref{new_off_Q}) are zero. Since there are four expressions multiplied by each other, we write the non-zero entries one-by-one as follows. 

First of all, the expectations of the non-random terms are always non-zero and could be written as follows
\begin{align} \nonumber
    \mathbb{E} \{ & \mathcal{Q}_{r_1, j, k}^{(1)}(.)  \mathcal{Q}_{r_2, j, i}^{(1)}(.) \mathcal{Q}_{r_3, h, i}^{(1)}(.) \mathcal{Q}_{r_4, h, k}^{(1)}(.)\} = \\ \label{cross_1_f}
    & \eta_{r_1, j, k}^{(1)} \eta_{r_2, j, i}^{(1)}  \eta_{r_3, h, i}^{(1)}   \eta_{r_4, h, k}^{(1)}
  \Bar{\mathbf{g}}_{r_1, k}^H \boldsymbol{\Phi}_{r_1}^H \Bar{\mathbf{H}}_{r_1, j}^H
  \Bar{\mathbf{H}}_{r_2, j}  \boldsymbol{\Phi}_{r_2}  \Bar{\mathbf{g}}_{r_2, i}
  \Bar{\mathbf{g}}_{r_3, i}^H \boldsymbol{\Phi}_{r_3}^H \Bar{\mathbf{H}}_{r_3, h}^H   
   \Bar{\mathbf{H}}_{r_4, h}  \boldsymbol{\Phi}_{r_4}  \Bar{\mathbf{g}}_{r_4, k}.
\end{align}

The next non-zero term happens when the case $\hat{\mathcal{Q}}_{r_1, j, k}^{(2)}(\Tilde{\mathbf{g}}_{r_1, k})  \mathcal{Q}_{r_2, j, i}^{(1)}(.) \mathcal{Q}_{r_3, h, i}^{(1)}(.) \hat{\mathcal{Q}}_{r_4, h, k}^{(2)}(\Tilde{\mathbf{g}}_{r_4, k})$ occurs with $r_1 = r_4$ and it could be written as
\begin{align} \nonumber
    \mathbb{E} \{ & \hat{\mathcal{Q}}_{r_1, j, k}^{(2)}(\Tilde{\mathbf{g}}_{r_1, k})  \mathcal{Q}_{r_2, j, i}^{(1)}(.) \mathcal{Q}_{r_3, h, i}^{(1)}(.) \hat{\mathcal{Q}}_{r_4, h, k}^{(2)}(\Tilde{\mathbf{g}}_{r_4, k})\} = \\ \nonumber
    &\mathbb{E}\{ \Tilde{\mathbf{g}}_{r_1, k}^H  \boldsymbol{\Phi}_{r_1}^H \Bar{\mathbf{H}}_{r_1, j}^H \mathbf{A}_{j,  k}^H
    \Bar{\mathbf{H}}_{r_2, j}  \boldsymbol{\Phi}_{r_2}  \Bar{\mathbf{g}}_{r_2, i}
    \Bar{\mathbf{g}}_{r_3, i}^H \boldsymbol{\Phi}_{r_3}^H \Bar{\mathbf{H}}_{r_3, h}^H \mathbf{A}_{h,  k}
 \Bar{\mathbf{H}}_{r_1, h} \boldsymbol{\Phi}_{r_1} \Tilde{\mathbf{g}}_{r_1, k} \}  \\ \label{cross_2}
 &= \eta_{r_1, j, k}^{(2)} \eta_{r_2, j, i}^{(1)} \eta_{r_3, h, i}^{(1)} \eta_{r_1, h, k}^{(2)} f_{r_2, j, i}(\boldsymbol{\Phi}_{r_2}) f_{r_3, h, i}^H(\boldsymbol{\Phi}_{r_3})  \mathrm{Tr}\{\Bar{\mathbf{H}}_{r_1, j}^H \mathbf{A}_{j,  k}^H \mathbf{a}_{N, r_2, j}  \mathbf{a}_{N, r_3, h}^H \mathbf{A}_{h,  k} \Bar{\mathbf{H}}_{r_1, h} \}.
\end{align}

The next non-zero term happens in the case $\hat{\mathcal{Q}}_{r_1, j, k}^{(3)}(\Tilde{\mathbf{H}}_{r_1, j}) \mathcal{Q}_{r_2, j, i}^{(3)}(\Tilde{\mathbf{H}}_{r_2, j}) \mathcal{Q}_{r_3, h, i}^{(1)}(.) \mathcal{Q}_{r_4, h, k}^{(1)}(.) $ with $r_1 = r_2$, hence, we arrive at
\begin{align} \nonumber
\mathbb{E}\{ & \hat{\mathcal{Q}}_{r_1, j, k}^{(3)}(\Tilde{\mathbf{H}}_{r_1, j}) \mathcal{Q}_{r_2, j, i}^{(3)}(\Tilde{\mathbf{H}}_{r_2, j}) \mathcal{Q}_{r_3, h, i}^{(1)}(.) \mathcal{Q}_{r_4, h, k}^{(1)}(.) \}  \\ \nonumber
 & =  \eta_{r_1, j, k}^{(3)} \eta_{r_2, j, i}^{(3)}  \eta_{r_3, h, i}^{(1)}    \eta_{r_4, h, k}^{(1)} 
\Bar{\mathbf{g}}_{r_1, k}^H \boldsymbol{\Phi}_{r_1}^H \mathbb{E}\{ \Tilde{\mathbf{H}}_{r_1, j}^H \mathbf{A}_{j,  k}^H
 \Tilde{\mathbf{H}}_{r_2, j} \} \boldsymbol{\Phi}_{r_2} \Bar{\mathbf{g}}_{r_2, i} 
 \Bar{\mathbf{g}}_{r_3, i}^H \boldsymbol{\Phi}_{r_3}^H \Bar{\mathbf{H}}_{r_3, h}^H
 \Bar{\mathbf{H}}_{r_4, h}  \boldsymbol{\Phi}_{r_4}  \Bar{\mathbf{g}}_{r_4, k} \\ \nonumber
 &= \eta_{r_1, j, k}^{(3)} \eta_{r_1, j, i}^{(3)}  \eta_{r_3, h, i}^{(1)}    \eta_{r_4, h, k}^{(1)}  \mathrm{Tr}\{ \mathbf{A}_{j,  k} \}
 \Bar{\mathbf{g}}_{r_1, k}^H \boldsymbol{\Phi}_{r_1}^H  \boldsymbol{\Phi}_{r_1} \Bar{\mathbf{g}}_{r_1, i} 
 \Bar{\mathbf{g}}_{r_3, i}^H \boldsymbol{\Phi}_{r_3}^H \Bar{\mathbf{H}}_{r_3, h}^H
 \Bar{\mathbf{H}}_{r_4, h}  \boldsymbol{\Phi}_{r_4}  \Bar{\mathbf{g}}_{r_4, k} \\\label{cross_3}
 &=\eta_{r_1, j, k}^{(3)} \eta_{r_1, j, i}^{(3)}  \eta_{r_3, h, i}^{(1)}    \eta_{r_4, h, k}^{(1)}  \mathrm{Tr}\{ \mathbf{A}_{j,  k} \} f_{r_4, h, k}(\boldsymbol{\Phi}_{r_4}) f_{r_3, h, i}^H(\boldsymbol{\Phi}_{r_3})
 \Bar{\mathbf{g}}_{r_1, k}^H  \Bar{\mathbf{g}}_{r_1, i}  \mathbf{a}_{N, r_3, h}^H  \mathbf{a}_{N, r_4, h}.
\end{align}

The next non-zero terms appear when $ \mathcal{Q}_{r_1, j, k}^{(1)}(.) \mathcal{Q}_{r_2, j, i}^{(2)}(\Tilde{\mathbf{g}}_{r_2, i})  \mathcal{Q}_{r_3, h, i}^{(2)}(\Tilde{\mathbf{g}}_{r_3, i}) \mathcal{Q}_{r_4, h, k}^{(1)}(.) $ with $r_2 = r_3$ and could be written as follows:
\begin{align} \nonumber
    \mathbb{E}\{ & \mathcal{Q}_{r_1, j, k}^{(1)}(.) \mathcal{Q}_{r_2, j, i}^{(2)}(\Tilde{\mathbf{g}}_{r_2, i})  \mathcal{Q}_{r_3, h, i}^{(2)}(\Tilde{\mathbf{g}}_{r_3, i}) \mathcal{Q}_{r_4, h, k}^{(1)}(.)  \} \\ \nonumber
    &= \eta_{r_1, j, k}^{(1)}  \eta_{r_2, j, i}^{(2)}  \eta_{r_3, h, i}^{(2)} \eta_{r_4, h, k}^{(1)} 
\Bar{\mathbf{g}}_{r_1, k}^H  \boldsymbol{\Phi}_{r_1} ^H  \Bar{\mathbf{H}}_{r_1, j} ^H
\Bar{\mathbf{H}}_{r_2, j} \boldsymbol{\Phi}_{r_2} \mathbb{E}\{ \Tilde{\mathbf{g}}_{r_2, i} \Tilde{\mathbf{g}}_{r_3, i}^H \} \boldsymbol{\Phi}_{r_3}^H \Bar{\mathbf{H}}_{r_3, h} ^H
 \Bar{\mathbf{H}}_{r_4, h}  \boldsymbol{\Phi}_{r_4}  \Bar{\mathbf{g}}_{r_4, k}  \\\label{cross_4}
 &= \eta_{r_1, j, k}^{(1)}  \eta_{r_2, j, i}^{(2)}  \eta_{r_2, h, i}^{(2)} \eta_{r_4, h, k}^{(1)} f_{r_1, j, k}^H(\boldsymbol{\Phi}_{r_1}) f_{r_4, h, k}(\boldsymbol{\Phi}_{r_4})
 \mathbf{a}_{N, r_1, j}^H  \Bar{\mathbf{H}}_{r_2, j} \Bar{\mathbf{H}}_{r_2, h} ^H
 \mathbf{a}_{N, r_4, h}.
\end{align}

The next term that needs to be calculated is the term when $\mathcal{Q}_{r_1, j, k}^{(1)}(.) \mathcal{Q}_{r_2, j, i}^{(1)}(.) \mathcal{Q}_{r_3, h, i}^{(3)}(\Tilde{\mathbf{H}}_{r_3, h})\hat{\mathcal{Q}}_{r_4, h, k}^{(3)}(\Tilde{\mathbf{H}}_{r_4, h})$ happens with $r_3 = r_4$ and given as follows:
\begin{align} \nonumber
\mathbb{E}\{ & \mathcal{Q}_{r_1, j, k}^{(1)}(.) \mathcal{Q}_{r_2, j, i}^{(1)}(.) \mathcal{Q}_{r_3, h, i}^{(3)}(\Tilde{\mathbf{H}}_{r_3, h})\hat{\mathcal{Q}}_{r_4, h, k}^{(3)}(\Tilde{\mathbf{H}}_{r_4, h}) \} \\ \nonumber
&= \eta_{r_1, j, k}^{(1)} \eta_{r_2, j, i}^{(1)}  \eta_{r_3, h, i}^{(3)} \eta_{r_4, h, k}^{(3)}
\Bar{\mathbf{g}}_{r_1, k}^H \boldsymbol{\Phi}_{r_1}^H \Bar{\mathbf{H}}_{r_1, j} ^H
   \Bar{\mathbf{H}}_{r_2, j}  \boldsymbol{\Phi}_{r_2}  \Bar{\mathbf{g}}_{r_2, i}
   \Bar{\mathbf{g}}_{r_3, i}^H \boldsymbol{\Phi}_{r_3}^H \mathbb{E}\{ \Tilde{\mathbf{H}}_{r_3, h}^H 
     \mathbf{A}_{h,  k} \Tilde{\mathbf{H}}_{r_4, h} \} \boldsymbol{\Phi}_{r_4} \Bar{\mathbf{g}}_{r_4, k} \\ \label{cross_5}
     &= \eta_{r_1, j, k}^{(1)} \eta_{r_2, j, i}^{(1)}  \eta_{r_3, h, i}^{(3)} \eta_{r_3, h, k}^{(3)} \mathrm{Tr}\{ \mathbf{A}_{h,  k} \} f_{r_1, j, k}^H(\boldsymbol{\Phi}_{r_1}) f_{r_2, j, i}(\boldsymbol{\Phi}_{r_2})
\mathbf{a}_{N, r_1, j}^H  \mathbf{a}_{N, r_2, j} 
   \Bar{\mathbf{g}}_{r_3, i}^H  \Bar{\mathbf{g}}_{r_3, k}.
\end{align}

Next, the other non-zero term is when $ \hat{\mathcal{Q}}_{r_1, j, k}^{(3)}(\Tilde{\mathbf{H}}_{r_1, j}) \mathcal{Q}_{r_2, j, i}^{(3)}(\Tilde{\mathbf{H}}_{r_2, j})  \mathcal{Q}_{r_3, h, i}^{(3)}(\Tilde{\mathbf{H}}_{r_3, h}) \hat{\mathcal{Q}}_{r_4, h, k}^{(3)}(\Tilde{\mathbf{H}}_{r_4, h})$ with $r_1 = r_2$ and $r_3 = r_4$ and is given by
\begin{align} \nonumber
    \mathbb{E}\{ & \hat{\mathcal{Q}}_{r_1, j, k}^{(3)}(\Tilde{\mathbf{H}}_{r_1, j}) \mathcal{Q}_{r_2, j, i}^{(3)}(\Tilde{\mathbf{H}}_{r_2, j})  \mathcal{Q}_{r_3, h, i}^{(3)}(\Tilde{\mathbf{H}}_{r_3, h}) \hat{\mathcal{Q}}_{r_4, h, k}^{(3)}(\Tilde{\mathbf{H}}_{r_4, h}) \} \\ \nonumber
    &= \eta_{r_1, j, k}^{(3)} \eta_{r_2, j, i}^{(3)} \eta_{r_3, h, i}^{(3)} \eta_{r_4, h, k}^{(3)}
\Bar{\mathbf{g}}_{r_1, k}^H \boldsymbol{\Phi}_{r_1}^H \mathbb{E}\{  \Tilde{\mathbf{H}}_{r_1, j}^H \mathbf{A}_{j,  k}^H
 \Tilde{\mathbf{H}}_{r_2, j} \}  \boldsymbol{\Phi}_{r_2} \Bar{\mathbf{g}}_{r_2, i} \Bar{\mathbf{g}}_{r_3, i}^H \boldsymbol{\Phi}_{r_3}^H \mathbb{E}\{ \Tilde{\mathbf{H}}_{r_3, h}^H 
     \mathbf{A}_{h,  k} \Tilde{\mathbf{H}}_{r_4, h} \} \boldsymbol{\Phi}_{r_4} \Bar{\mathbf{g}}_{r_4, k} \\\label{cross_6}
     &= \eta_{r_1, j, k}^{(3)} \eta_{r_1, j, i}^{(3)} \eta_{r_3, h, i}^{(3)} \eta_{r_3, h, k}^{(3)} \mathrm{Tr}\{ \mathbf{A}_{j,  k} \} \mathrm{Tr}\{ \mathbf{A}_{h,  k} \}
\Bar{\mathbf{g}}_{r_1, k}^H  \Bar{\mathbf{g}}_{r_1, i} \Bar{\mathbf{g}}_{r_3, i}^H \Bar{\mathbf{g}}_{r_3, k}.
\end{align}

The next non-zero term is when $ \hat{\mathcal{Q}}_{r_1, j, k}^{(2)}(\Tilde{\mathbf{g}}_{r_1, k}) \mathcal{Q}_{r_2, j, i}^{(2)}(\Tilde{\mathbf{g}}_{r_2, i}) \mathcal{Q}_{r_3, h, i}^{(2)}(\Tilde{\mathbf{g}}_{r_3, i}) \hat{\mathcal{Q}}_{r_4, h, k}^{(2)}(\Tilde{\mathbf{g}}_{r_4, k})$ with $r_1 = r_4$ and $r_2 = r_3$ and is given by
\begin{align} \nonumber
     \mathbb{E}\{ & \hat{\mathcal{Q}}_{r_1, j, k}^{(2)}(\Tilde{\mathbf{g}}_{r_1, k}) \mathcal{Q}_{r_2, j, i}^{(2)}(\Tilde{\mathbf{g}}_{r_2, i}) \mathcal{Q}_{r_3, h, i}^{(2)}(\Tilde{\mathbf{g}}_{r_3, i}) \hat{\mathcal{Q}}_{r_4, h, k}^{(2)}(\Tilde{\mathbf{g}}_{r_4, k}) \} \\\nonumber
     &=  \eta_{{r_1}, j, k}^{(2)} \eta_{{r_2}, j, i}^{(2)} \eta_{{r_3}, h, i}^{(2)} \eta_{{r_4}, h, k}^{(2)}
\mathbb{E}\{\Tilde{\mathbf{g}}_{r_1, k}^H \boldsymbol{\Phi}_r^H \Bar{\mathbf{H}}_{r_1, j}^H  \mathbf{A}_{j,  k}^H \Bar{\mathbf{H}}_{{r_2}, j} \boldsymbol{\Phi}_{r_2} \mathbb{E}\{ \Tilde{\mathbf{g}}_{{r_2}, i}
\Tilde{\mathbf{g}}_{{r_3}, i}^H \} \boldsymbol{\Phi}_{r_3}^H  \Bar{\mathbf{H}}_{{r_3}, h}^H 
  \mathbf{A}_{h,  k}  \Bar{\mathbf{H}}_{{r_4}, h} \boldsymbol{\Phi}_{r_4} \Tilde{\mathbf{g}}_{{r_4}, k} \} \\\nonumber
  &= \eta_{{r_1}, j, k}^{(2)} \eta_{{r_2}, j, i}^{(2)} \eta_{{r_2}, h, i}^{(2)} \eta_{{r_1}, h, k}^{(2)}
\mathbb{E}\{\Tilde{\mathbf{g}}_{r_1, k}^H \boldsymbol{\Phi}_r^H \Bar{\mathbf{H}}_{r_1, j}^H  \mathbf{A}_{j,  k}^H \Bar{\mathbf{H}}_{{r_2}, j}   \Bar{\mathbf{H}}_{{r_2}, h}^H 
  \mathbf{A}_{h,  k}  \Bar{\mathbf{H}}_{{r_1}, h} \boldsymbol{\Phi}_{r_1} \Tilde{\mathbf{g}}_{{r_1}, k} \} \\\label{cross_7}
  &= \eta_{{r_1}, j, k}^{(2)} \eta_{{r_2}, j, i}^{(2)} \eta_{{r_2}, h, i}^{(2)} \eta_{{r_1}, h, k}^{(2)} \mathrm{Tr} \{  \Bar{\mathbf{H}}_{r_1, j}^H  \mathbf{A}_{j,  k}^H \Bar{\mathbf{H}}_{{r_2}, j}   \Bar{\mathbf{H}}_{{r_2}, h}^H 
  \mathbf{A}_{h,  k}  \Bar{\mathbf{H}}_{{r_1}, h}  \}.
\end{align}

The next non-zero term is $\hat{\mathcal{Q}}_{r_1, j, k}^{(4)}(\Tilde{\mathbf{g}}_{r_1, k}, \Tilde{\mathbf{H}}_{r_1, j})  \mathcal{Q}_{r_2, j, i}^{(4)}(\Tilde{\mathbf{g}}_{r_2, i}, \Tilde{\mathbf{H}}_{r_2, j}) \mathcal{Q}_{r_3, h, i}^{(4)}(\Tilde{\mathbf{g}}_{r_3, i}, \Tilde{\mathbf{H}}_{r_3, h})  \hat{\mathcal{Q}}_{r_4, h, k}^{(4)}(\Tilde{\mathbf{g}}_{r_4, k}, \Tilde{\mathbf{H}}_{r_4, h})$ with $r_1 = r_2 = r_3 = r_4$ and could be written as follows:
\begin{align} \nonumber
\mathbb{E}\{& \hat{\mathcal{Q}}_{r_1, j, k}^{(4)}(\Tilde{\mathbf{g}}_{r_1, k}, \Tilde{\mathbf{H}}_{r_1, j})  \mathcal{Q}_{r_2, j, i}^{(4)}(\Tilde{\mathbf{g}}_{r_2, i}, \Tilde{\mathbf{H}}_{r_2, j}) \mathcal{Q}_{r_3, h, i}^{(4)}(\Tilde{\mathbf{g}}_{r_3, i}, \Tilde{\mathbf{H}}_{r_3, h})  \hat{\mathcal{Q}}_{r_4, h, k}^{(4)}(\Tilde{\mathbf{g}}_{r_4, k}, \Tilde{\mathbf{H}}_{r_4, h})  \} \\ \label{cross_8}
 &= \eta_{{r_1}, j, k}^{(4)}  \eta_{{r_1}, j, i}^{(4)}  \eta_{{r_1}, h, i}^{(4)}  \eta_{{r_1}, h, k}^{(4)}M \mathrm{Tr}\{ \mathbf{A}_{j,  k}^H \} \mathrm{Tr}\{ \mathbf{A}_{h,  k} \}.
\end{align}

The next term to be calculated is $\hat{\mathcal{Q}}_{r_1, j, k}^{(3)}(\Tilde{\mathbf{H}}_{r_1, j}) \mathcal{Q}_{r_2, j, i}^{(4)}(\Tilde{\mathbf{g}}_{r_2, i}, \Tilde{\mathbf{H}}_{r_2, j}) \mathcal{Q}_{r_3, h, i}^{(4)}(\Tilde{\mathbf{g}}_{r_3, i}, \Tilde{\mathbf{H}}_{r_3, h}) \hat{\mathcal{Q}}_{r_4, h, k}^{(3)}(\Tilde{\mathbf{H}}_{r_4, h})$ and is given as follows:
\begin{align} \nonumber
    \mathbb{E}\{& \hat{\mathcal{Q}}_{r_1, j, k}^{(3)}(\Tilde{\mathbf{H}}_{r_1, j}) \mathcal{Q}_{r_2, j, i}^{(4)}(\Tilde{\mathbf{g}}_{r_2, i}, \Tilde{\mathbf{H}}_{r_2, j}) \mathcal{Q}_{r_3, h, i}^{(4)}(\Tilde{\mathbf{g}}_{r_3, i}, \Tilde{\mathbf{H}}_{r_3, h}) \hat{\mathcal{Q}}_{r_4, h, k}^{(3)}(\Tilde{\mathbf{H}}_{r_4, h}) \} \\\label{cross_9}
 &= \eta_{{r_1}, j, k}^{(3)}  \eta_{{r_1}, j, i}^{(4)} \eta_{{r_1}, h, i}^{(4)} \eta_{{r_1}, h, k}^{(3)} M  \mathrm{Tr}\{ \mathbf{A}_{j,  k}^H\} \mathrm{Tr}\{ \mathbf{A}_{h,  k}\}.
\end{align}
The next part is derivation of the $\hat{\mathcal{Q}}_{r_1, j, k}^{(4)}(\Tilde{\mathbf{g}}_{r_1, k}, \Tilde{\mathbf{H}}_{r_1, j}) \mathcal{Q}_{r_2, j, i}^{(3)}(\Tilde{\mathbf{H}}_{r_2, j}) \mathcal{Q}_{r_3, h, i}^{(3)}(\Tilde{\mathbf{H}}_{r_3, h}) \hat{\mathcal{Q}}_{r_4, h, k}^{(4)}(\Tilde{\mathbf{g}}_{r_4, k}, \Tilde{\mathbf{H}}_{r_4, h})$ and is given as follows:
\begin{align} \nonumber
    \mathbb{E}\{& \hat{\mathcal{Q}}_{r_1, j, k}^{(4)}(\Tilde{\mathbf{g}}_{r_1, k}, \Tilde{\mathbf{H}}_{r_1, j}) \mathcal{Q}_{r_2, j, i}^{(3)}(\Tilde{\mathbf{H}}_{r_2, j}) \mathcal{Q}_{r_3, h, i}^{(3)}(\Tilde{\mathbf{H}}_{r_3, h}) \hat{\mathcal{Q}}_{r_4, h, k}^{(4)}(\Tilde{\mathbf{g}}_{r_4, k}, \Tilde{\mathbf{H}}_{r_4, h})\} \\ \label{cross_10}
 &= \eta_{{r_1}, j, k}^{(4)} \eta_{{r_1}, j, i}^{(3)} \eta_{{r_1}, h, i}^{(3)}  \eta_{{r_1}, h, k}^{(4)} M
 \mathrm{Tr}\{ \mathbf{A}_{j,  k}^H\} \mathrm{Tr}\{ \mathbf{A}_{h,  k}\}.
\end{align}

\subsection*{The Pilot Contamination part of $ \mathbb{E}\{ \hat{\mathbf{q}}_{j, k}^H \mathbf{q}_{j, i}  (\hat{\mathbf{q}}_{h, k}^H \mathbf{q}_{h, i})^H\}$ }

So far, all the non-zero terms of the signal part have been calculated. Next, we move on to the pilot contamination part, the first non-zero term is when $\hat{\mathcal{Q}}_{r_1, j, l}^{(2)}(\Tilde{\mathbf{g}}_{r_1, l})  \mathcal{Q}_{r_2, j, i}^{(1)}(.)  \mathcal{Q}_{r_4, h, k}^{(1)}(.) \hat{\mathcal{Q}}_{r_4, h, l}^{(2)}(\Tilde{\mathbf{g}}_{r_4, l})$ with $r_1 = r_4$. 
\begin{align} \nonumber
    \mathbb{E}\{ & \hat{\mathcal{Q}}_{r_1, j, l}^{(2)}(\Tilde{\mathbf{g}}_{r_1, l})  \mathcal{Q}_{r_2, j, i}^{(1)}(.)  \mathcal{Q}_{r_4, h, k}^{(1)}(.) \hat{\mathcal{Q}}_{r_4, h, l}^{(2)}(\Tilde{\mathbf{g}}_{r_4, l}) \} \\\nonumber
    &= \eta_{r_1, j, l}^{(2)} \eta_{r_2, j, i}^{(1)}  \eta_{r_3, h, i}^{(1)}  \eta_{r_4, h, l}^{(2)} 
 \mathbb{E}\{ \Tilde{\mathbf{g}}_{r_1, l}^H \boldsymbol{\Phi}_{r_1}^H \Bar{\mathbf{H}}_{r_1, j}^H \mathbf{A}_{j,  k}^H
  \Bar{\mathbf{H}}_{r_2, j}  \boldsymbol{\Phi}_{r_2}  \Bar{\mathbf{g}}_{r_2, i}
 \Bar{\mathbf{g}}_{r_3, i}^H \boldsymbol{\Phi}_{r_3}^H \Bar{\mathbf{H}}_{r_3, h}^H
\mathbf{A}_{h,  k} \Bar{\mathbf{H}}_{r_4, h} \boldsymbol{\Phi}_{r_4} \Tilde{\mathbf{g}}_{r_4, l} \} \\ \label{pc_cross_1}
&= \eta_{r_1, j, l}^{(2)} \eta_{r_2, j, i}^{(1)}  \eta_{r_3, h, i}^{(1)}  \eta_{r_1, h, l}^{(2)} f_{r_2, j, i}(\boldsymbol{\Phi}_{r_2}) f_{r_3, h, i}^H(\boldsymbol{\Phi}_{r_3})  \mathrm{Tr}\{\Bar{\mathbf{H}}_{r_1, j}^H \mathbf{A}_{j,  k}^H \mathbf{a}_{N, r_2, j}  \mathbf{a}_{N, r_3, h}^H \mathbf{A}_{h,  k} \Bar{\mathbf{H}}_{r_1, h} \}.
\end{align}
The next one is when $\hat{\mathcal{Q}}_{r_1, j, l}^{(3)}(\Tilde{\mathbf{H}}_{r_1, j})  \mathcal{Q}_{r_2, j, i}^{(3)}(\Tilde{\mathbf{H}}_{r_2, j}) \mathcal{Q}_{r_3, h, i}^{(1)}(.)  \mathcal{Q}_{r_4, h, k}^{(1)}(.)$ happens with $r_1 = r_2$ and is as follows
\begin{align} \nonumber
    \mathbb{E}\{& \hat{\mathcal{Q}}_{r_1, j, l}^{(3)}(\Tilde{\mathbf{H}}_{r_1, j})  \mathcal{Q}_{r_2, j, i}^{(3)}(\Tilde{\mathbf{H}}_{r_2, j}) \mathcal{Q}_{r_4, h, i}^{(1)}(.)  \mathcal{Q}_{r_4, h, k}^{(1)}(.) \} \\ \nonumber
    &= \eta_{r_1, j, l}^{(3)} \eta_{r_2, j, i}^{(3)}  \eta_{r_3, h, i}^{(1)} \eta_{r_4, h, k}^{(1)}
\Bar{\mathbf{g}}_{r_1, l}^H \boldsymbol{\Phi}_{r_1} ^H  \mathbb{E}\{ \Tilde{\mathbf{H}}_{r_1, j} ^H \mathbf{A}_{j,  k}^H
 \Tilde{\mathbf{H}}_{r_2, j} \} \boldsymbol{\Phi}_{r_2} \Bar{\mathbf{g}}_{r_2, i}
 \Bar{\mathbf{g}}_{r_3, i}^H \boldsymbol{\Phi}_{r_3}  ^H \Bar{\mathbf{H}}_{r_3, h} ^H
    \Bar{\mathbf{H}}_{r_4, h}  \boldsymbol{\Phi}_{r_4}  \Bar{\mathbf{g}}_{r_4, k} \\ \label{pc_cross_2}
    &= \eta_{r_1, j, l}^{(3)} \eta_{r_1, j, i}^{(3)}  \eta_{r_3, h, i}^{(1)} \eta_{r_4, h, k}^{(1)} \mathrm{Tr}\{ \mathbf{A}_{j,  k} \} f_{r_4, h, k}(\boldsymbol{\Phi}_{r_4}) f_{r_3, h, i}^H(\boldsymbol{\Phi}_{r_3})
\Bar{\mathbf{g}}_{r_1, l}^H  \Bar{\mathbf{g}}_{r_1, i}
   \mathbf{a}_{N, r_3, h}^H  
    \mathbf{a}_{N, r_1, h}.
\end{align}
The next non-zero term happens when $\mathcal{Q}_{r_1, j, k}^{(1)}(.) \mathcal{Q}_{r_2, j, i}^{(1)}(.)  \mathcal{Q}_{r_3, h, i}^{(3)}(\Tilde{\mathbf{H}}_{r_3, h})\hat{\mathcal{Q}}_{r_4, h, l}^{(3)}(\Tilde{\mathbf{H}}_{r_4, h})$ appears with $r_3 = r_4$ and can be written as follows:
\begin{align} \nonumber
    \mathbb{E}\{& \mathcal{Q}_{r_1, j, k}^{(1)}(.) \mathcal{Q}_{r_2, j, i}^{(1)}(.)  \mathcal{Q}_{r_3, h, i}^{(3)}(\Tilde{\mathbf{H}}_{r_3, h})\hat{\mathcal{Q}}_{r_4, j, l}^{(3)}(\Tilde{\mathbf{H}}_{r_4, j})  \} \\ \nonumber
    &= \eta_{r_1, j, k}^{(1)} \eta_{r_2, j, i}^{(1)}  \eta_{r_3, h, i}^{(3)} \eta_{r_4, h, l}^{(3)}
\Bar{\mathbf{g}}_{r_1, k}^H \boldsymbol{\Phi}_{r_1}^H \Bar{\mathbf{H}}_{r_1, j} ^H
   \Bar{\mathbf{H}}_{r_2, j}  \boldsymbol{\Phi}_{r_2}  \Bar{\mathbf{g}}_{r_2, i}
   \Bar{\mathbf{g}}_{r_3, i}^H \boldsymbol{\Phi}_{r_3}^H \mathbb{E}\{ \Tilde{\mathbf{H}}_{r_3, h}^H 
    \mathbf{A}_{h,  k} \Tilde{\mathbf{H}}_{r_4, h} \} \boldsymbol{\Phi}_{r_4} \Bar{\mathbf{g}}_{r_4, l} \\\label{pc_cross_3}
    &= \eta_{r_1, j, k}^{(1)} \eta_{r_2, j, i}^{(1)}  \eta_{r_3, h, i}^{(3)} \eta_{r_3, h, l}^{(3)} \mathrm{Tr}\{ \mathbf{A}_{h,  k} \} f_{r_1, j, k}^H(\boldsymbol{\Phi}_{r_1}) f_{r_2, j, i}(\boldsymbol{\Phi}_{r_2})
 \mathbf{a}_{N, r_1, j}^H 
   \mathbf{a}_{N, r_2, j} 
   \Bar{\mathbf{g}}_{r_3, i}^H  \Bar{\mathbf{g}}_{r_3, l}.
\end{align}
The next non-zero term is $\hat{\mathcal{Q}}_{r_1, j, l}^{(3)}(\Tilde{\mathbf{H}}_{r_1, j}) \mathcal{Q}_{r_2, j, i}^{(3)}(\Tilde{\mathbf{H}}_{r_2, j}) \mathcal{Q}_{r_3, h, i}^{(3)}(\Tilde{\mathbf{H}}_{r_3, h}) \hat{\mathcal{Q}}_{r_4, h, l}^{(3)}(\Tilde{\mathbf{H}}_{r_4, h})$ with $r_1 = r_2$ and $r_3 = r_4$
\begin{align} \nonumber
    \mathbb{E}\{ & \hat{\mathcal{Q}}_{r_1, j, l}^{(3)}(\Tilde{\mathbf{H}}_{r_1, j}) \mathcal{Q}_{r_2, j, i}^{(3)}(\Tilde{\mathbf{H}}_{r_2, j}) \mathcal{Q}_{r_3, h, i}^{(3)}(\Tilde{\mathbf{H}}_{r_3, h}) \hat{\mathcal{Q}}_{r_4, h, l}^{(3)}(\Tilde{\mathbf{H}}_{r_4, h}) \} \\\nonumber
    &= \eta_{r_1, j, l}^{(3)} \eta_{r_2, j, i}^{(3)}  \eta_{r_3, h, i}^{(3)} \eta_{r_4, h, l}^{(3)}
\Bar{\mathbf{g}}_{r_1, l}^H \boldsymbol{\Phi}_{r_1} ^H  \mathbb{E}\{ \Tilde{\mathbf{H}}_{r_1, j} ^H \mathbf{A}_{j,  k}^H
 \Tilde{\mathbf{H}}_{r_2, j} \} \boldsymbol{\Phi}_{r_2} \Bar{\mathbf{g}}_{r_2, i}
   \Bar{\mathbf{g}}_{r_3, i}^H \boldsymbol{\Phi}_{r_3}^H \mathbb{E}\{ \Tilde{\mathbf{H}}_{r_3, h}^H 
    \mathbf{A}_{h,  k} \Tilde{\mathbf{H}}_{r_4, h} \} \boldsymbol{\Phi}_{r_4} \Bar{\mathbf{g}}_{r_4, l} \\\label{pc_cross_4}
    &= \eta_{r_1, j, l}^{(3)} \eta_{r_1, j, i}^{(3)}  \eta_{r_3, h, i}^{(3)} \eta_{r_3, h, l}^{(3)} \mathrm{Tr}\{ \mathbf{A}_{j,  k} \} \mathrm{Tr}\{ \mathbf{A}_{h,  k} \}
\Bar{\mathbf{g}}_{r_1, k}^H  \Bar{\mathbf{g}}_{r_1, i} \Bar{\mathbf{g}}_{r_3, i}^H \Bar{\mathbf{g}}_{r_3, k}.
\end{align}

The next term to be calculated is as follows:
\begin{align} \nonumber
    \mathbb{E}\{ & \hat{\mathcal{Q}}_{r_1, j, l}^{(4)}(\Tilde{\mathbf{g}}_{r_1, l}, \Tilde{\mathbf{H}}_{r_1, j})  \mathcal{Q}_{r_2, j, i}^{(4)}(\Tilde{\mathbf{g}}_{r_2, i}, \Tilde{\mathbf{H}}_{r_2, j})  \mathcal{Q}_{r_3, h, i}^{(4)}(\Tilde{\mathbf{g}}_{r_3, i}, \Tilde{\mathbf{H}}_{r_3, h})  \hat{\mathcal{Q}}_{r_4, h, l}^{(4)}(\Tilde{\mathbf{g}}_{r_4, l}, \Tilde{\mathbf{H}}_{r_4, h}) \} \\\nonumber
    &= \eta_{{r_1}, j, l}^{(4)}  \eta_{{r_2}, j, i}^{(4)}  \eta_{{r_3}, h, i}^{(4)}  \eta_{{r_4}, h, l}^{(4)}
\mathbb{E}\{\Tilde{\mathbf{g}}_{{r_1}, l}^H \boldsymbol{\Phi}_{r_1}^H \Tilde{\mathbf{H}}_{{r_1}, j}^H \mathbf{A}_{j,  k}^H
 \Tilde{\mathbf{H}}_{{r_2}, j} \boldsymbol{\Phi}_{r_2} \Tilde{\mathbf{g}}_{{r_2}, i}
 \Tilde{\mathbf{g}}_{{r_3}, i}^H \boldsymbol{\Phi}_{r_3}^H \Tilde{\mathbf{H}}_{{r_3}, h}^H
 \mathbf{A}_{h,  k} \Tilde{\mathbf{H}}_{{r_4}, h} \boldsymbol{\Phi}_{r_4} \Tilde{\mathbf{g}}_{{r_4}, l} \} \\\label{pc_cross_5}
 &= \eta_{{r_1}, j, l}^{(4)}  \eta_{{r_1}, j, i}^{(4)}  \eta_{{r_1}, h, i}^{(4)}  \eta_{{r_1}, h, l}^{(4)}M \mathrm{Tr}\{ \mathbf{A}_{j,  k}^H \} \mathrm{Tr}\{ \mathbf{A}_{h,  k} \}.
\end{align}

The next pilot contamination term is given by
\begin{align} \nonumber
    \mathbb{E}\{& \hat{\mathcal{Q}}_{r_1, j, l}^{(3)}(\Tilde{\mathbf{H}}_{r_1, j}) \mathcal{Q}_{r_2, j, i}^{(4)}(\Tilde{\mathbf{g}}_{r_2, i}, \Tilde{\mathbf{H}}_{r_2, j}) \mathcal{Q}_{r_3, h, i}^{(4)}(\Tilde{\mathbf{g}}_{r_3, i}, \Tilde{\mathbf{H}}_{r_3, h}) \hat{\mathcal{Q}}_{r_4, h, l}^{(3)}(\Tilde{\mathbf{H}}_{r_4, h}) \} \\\label{pc_cross_6}
 &= \eta_{{r_1}, j, l}^{(3)}  \eta_{{r_1}, j, i}^{(4)} \eta_{{r_1}, h, i}^{(4)} \eta_{{r_1}, h, l}^{(3)} M  \mathrm{Tr}\{ \mathbf{A}_{j,  k}^H\} \mathrm{Tr}\{ \mathbf{A}_{h,  k}\}.
\end{align}
The last part of the pilot contamination part is given as follows:
\begin{align} \nonumber
    \mathbb{E}\{& \hat{\mathcal{Q}}_{r_1, j, l}^{(4)}(\Tilde{\mathbf{g}}_{r_1, l}, \Tilde{\mathbf{H}}_{r_1, j}) \mathcal{Q}_{r_2, j, i}^{(3)}(\Tilde{\mathbf{H}}_{r_2, j}) \mathcal{Q}_{r_3, h, i}^{(3)}(\Tilde{\mathbf{H}}_{r_3, h}) \hat{\mathcal{Q}}_{r_4, h, l}^{(4)}(\Tilde{\mathbf{g}}_{r_4, l}, \Tilde{\mathbf{H}}_{r_4, h})\} \\ \label{pc_cross_7}
 &= \eta_{{r_1}, j, l}^{(4)} \eta_{{r_1}, j, i}^{(3)} \eta_{{r_1}, h, i}^{(3)}  \eta_{{r_1}, h, l}^{(4)} M
 \mathrm{Tr}\{ \mathbf{A}_{j,  k}^H\} \mathrm{Tr}\{ \mathbf{A}_{h,  k}\}.
\end{align}

Finally, by summing up the calculated terms we will arrive at
\begin{align} \label{E_jkjihkhi}
     \mathbb{E}\{& \hat{\mathbf{q}}_{j, k}^H \mathbf{q}_{j, i}  (\hat{\mathbf{q}}_{h, k}^H \mathbf{q}_{h, i})^H\} = \sum_{r_1=1}^R\sum_{r_2=1}^R\sum_{r_3=1}^R\sum_{r_4=1}^R E_{j,k,h,i}^{\mathrm{CROSS-BS}}(\boldsymbol{\Phi}) + \sum_{l \in \mathcal{P}_k \backslash \{k\}} E_{j,l,h,i}^{\mathrm{PC-CROSS-BS}}(\boldsymbol{\Phi}),
\end{align}
where
\begin{align} \nonumber
    E_{j,k,h,i}^{\mathrm{CROSS}}(\boldsymbol{\Phi}) &= \eta_{r_1, j, k}^{(1)} \eta_{r_2, j, i}^{(1)}  \eta_{r_3, h, i}^{(1)}   \eta_{r_4, h, k}^{(1)}
  \Bar{\mathbf{g}}_{r_1, k}^H \boldsymbol{\Phi}_{r_1}^H \Bar{\mathbf{H}}_{r_1, j}^H
  \Bar{\mathbf{H}}_{r_2, j}  \boldsymbol{\Phi}_{r_2}  \Bar{\mathbf{g}}_{r_2, i}
  \Bar{\mathbf{g}}_{r_3, i}^H \boldsymbol{\Phi}_{r_3}^H \Bar{\mathbf{H}}_{r_3, h}^H   
   \Bar{\mathbf{H}}_{r_4, h}  \boldsymbol{\Phi}_{r_4}  \Bar{\mathbf{g}}_{r_4, k} \\\nonumber
   &+ \eta_{r_1, j, k}^{(2)} \eta_{r_2, j, i}^{(1)} \eta_{r_3, h, i}^{(1)} \eta_{r_1, h, k}^{(2)} f_{r_2, j, i}(\boldsymbol{\Phi}_{r_2}) f_{r_3, h, i}^H(\boldsymbol{\Phi}_{r_3})  \mathrm{Tr}\{\Bar{\mathbf{H}}_{r_1, j}^H \mathbf{A}_{j,  k}^H \mathbf{a}_{N, r_2, j}  \mathbf{a}_{N, r_3, h}^H \mathbf{A}_{h,  k} \Bar{\mathbf{H}}_{r_1, h} \} \\\nonumber
   &+ \eta_{r_1, j, k}^{(3)} \eta_{r_1, j, i}^{(3)}  \eta_{r_3, h, i}^{(1)}    \eta_{r_4, h, k}^{(1)}  \mathrm{Tr}\{ \mathbf{A}_{j,  k} \} f_{r_4, h, k}(\boldsymbol{\Phi}_{r_4}) f_{r_3, h, i}^H(\boldsymbol{\Phi}_{r_3})
 \Bar{\mathbf{g}}_{r_1, k}^H  \Bar{\mathbf{g}}_{r_1, i}  \mathbf{a}_{N, r_3, h}^H  \mathbf{a}_{N, r_4, h} \\\nonumber
 &+ \eta_{r_1, j, k}^{(1)}  \eta_{r_2, j, i}^{(2)}  \eta_{r_2, h, i}^{(2)} \eta_{r_4, h, k}^{(1)} f_{r_1, j, k}^H(\boldsymbol{\Phi}_{r_1}) f_{r_4, h, k}(\boldsymbol{\Phi}_{r_4})
 \mathbf{a}_{N, r_1, j}^H  \Bar{\mathbf{H}}_{r_2, j} \Bar{\mathbf{H}}_{r_2, h} ^H
 \mathbf{a}_{N, r_4, h} \\\nonumber
 &+ \eta_{r_1, j, k}^{(1)} \eta_{r_2, j, i}^{(1)}  \eta_{r_3, h, i}^{(3)} \eta_{r_3, h, k}^{(3)} \mathrm{Tr}\{ \mathbf{A}_{h,  k} \} f_{r_1, j, k}^H(\boldsymbol{\Phi}_{r_1}) f_{r_2, j, i}(\boldsymbol{\Phi}_{r_2})
\mathbf{a}_{N, r_1, j}^H  \mathbf{a}_{N, r_2, j} 
   \Bar{\mathbf{g}}_{r_3, i}^H  \Bar{\mathbf{g}}_{r_3, k} \\\nonumber
   &+ \eta_{r_1, j, k}^{(3)} \eta_{r_1, j, i}^{(3)} \eta_{r_3, h, i}^{(3)} \eta_{r_3, h, k}^{(3)} \mathrm{Tr}\{ \mathbf{A}_{j,  k} \} \mathrm{Tr}\{ \mathbf{A}_{h,  k} \}
\Bar{\mathbf{g}}_{r_1, k}^H  \Bar{\mathbf{g}}_{r_1, i} \Bar{\mathbf{g}}_{r_3, i}^H \Bar{\mathbf{g}}_{r_3, k} \\\nonumber
&+ \eta_{{r_1}, j, k}^{(2)} \eta_{{r_2}, j, i}^{(2)} \eta_{{r_2}, h, i}^{(2)} \eta_{{r_1}, h, k}^{(2)} \mathrm{Tr} \{  \Bar{\mathbf{H}}_{r_1, j}^H  \mathbf{A}_{j,  k}^H \Bar{\mathbf{H}}_{{r_2}, j}   \Bar{\mathbf{H}}_{{r_2}, h}^H 
  \mathbf{A}_{h,  k}  \Bar{\mathbf{H}}_{{r_1}, h}  \} \\\nonumber
  &+ \eta_{{r_1}, j, k}^{(4)}  \eta_{{r_1}, j, i}^{(4)}  \eta_{{r_1}, h, i}^{(4)}  \eta_{{r_1}, h, k}^{(4)}M \mathrm{Tr}\{ \mathbf{A}_{j,  k}^H \} \mathrm{Tr}\{ \mathbf{A}_{h,  k} \} \\\nonumber
  &+ \eta_{{r_1}, j, k}^{(3)}  \eta_{{r_1}, j, i}^{(4)} \eta_{{r_1}, h, i}^{(4)} \eta_{{r_1}, h, k}^{(3)} M  \mathrm{Tr}\{ \mathbf{A}_{j,  k}^H\} \mathrm{Tr}\{ \mathbf{A}_{h,  k}\} \\ 
  &+ \eta_{{r_1}, j, k}^{(4)} \eta_{{r_1}, j, i}^{(3)} \eta_{{r_1}, h, i}^{(3)}  \eta_{{r_1}, h, k}^{(4)} M
 \mathrm{Tr}\{ \mathbf{A}_{j,  k}^H\} \mathrm{Tr}\{ \mathbf{A}_{h,  k}\},
\end{align}
and
\begin{align} \nonumber
    E_{j,l,h,i}^{\mathrm{PC-CROSS}}(\boldsymbol{\Phi}) &= \eta_{r_1, j, l}^{(2)} \eta_{r_2, j, i}^{(1)}  \eta_{r_3, h, i}^{(1)}  \eta_{r_1, h, l}^{(2)} f_{r_2, j, i}(\boldsymbol{\Phi}_{r_2}) f_{r_3, h, i}^H(\boldsymbol{\Phi}_{r_3})  \mathrm{Tr}\{\Bar{\mathbf{H}}_{r_1, j}^H \mathbf{A}_{j,  k}^H \mathbf{a}_{N, r_2, j}  \mathbf{a}_{N, r_3, h}^H \mathbf{A}_{h,  k} \Bar{\mathbf{H}}_{r_1, h} \} \\ \nonumber
    &+ \eta_{r_1, j, l}^{(3)} \eta_{r_1, j, i}^{(3)}  \eta_{r_3, h, i}^{(1)} \eta_{r_4, h, k}^{(1)} \mathrm{Tr}\{ \mathbf{A}_{j,  k} \} f_{r_4, h, k}(\boldsymbol{\Phi}_{r_4}) f_{r_3, h, i}^H(\boldsymbol{\Phi}_{r_3})
\Bar{\mathbf{g}}_{r_1, l}^H  \Bar{\mathbf{g}}_{r_1, i}
   \mathbf{a}_{N, r_3, h}^H  
    \mathbf{a}_{N, r_1, h} \\\nonumber
    &+ \eta_{r_1, j, k}^{(1)} \eta_{r_2, j, i}^{(1)}  \eta_{r_3, h, i}^{(3)} \eta_{r_3, h, l}^{(3)} \mathrm{Tr}\{ \mathbf{A}_{h,  k} \} f_{r_1, j, k}^H(\boldsymbol{\Phi}_{r_1}) f_{r_2, j, i}(\boldsymbol{\Phi}_{r_2})
 \mathbf{a}_{N, r_1, j}^H 
   \mathbf{a}_{N, r_2, j} 
   \Bar{\mathbf{g}}_{r_3, i}^H  \Bar{\mathbf{g}}_{r_3, l} \\\nonumber
   &+ \eta_{r_1, j, l}^{(3)} \eta_{r_1, j, i}^{(3)}  \eta_{r_3, h, i}^{(3)} \eta_{r_3, h, l}^{(3)} \mathrm{Tr}\{ \mathbf{A}_{j,  k} \} \mathrm{Tr}\{ \mathbf{A}_{h,  k} \}
\Bar{\mathbf{g}}_{r_1, k}^H  \Bar{\mathbf{g}}_{r_1, i} \Bar{\mathbf{g}}_{r_3, i}^H \Bar{\mathbf{g}}_{r_3, k} \\\nonumber
&+ \eta_{{r_1}, j, l}^{(4)}  \eta_{{r_1}, j, i}^{(4)}  \eta_{{r_1}, h, i}^{(4)}  \eta_{{r_1}, h, l}^{(4)}M \mathrm{Tr}\{ \mathbf{A}_{j,  k}^H \} \mathrm{Tr}\{ \mathbf{A}_{h,  k} \} \\\nonumber
&+ \eta_{{r_1}, j, l}^{(3)}  \eta_{{r_1}, j, i}^{(4)} \eta_{{r_1}, h, i}^{(4)} \eta_{{r_1}, h, l}^{(3)} M  \mathrm{Tr}\{ \mathbf{A}_{j,  k}^H\} \mathrm{Tr}\{ \mathbf{A}_{h,  k}\}\\ 
&+ \eta_{{r_1}, j, l}^{(4)} \eta_{{r_1}, j, i}^{(3)} \eta_{{r_1}, h, i}^{(3)}  \eta_{{r_1}, h, l}^{(4)} M
 \mathrm{Tr}\{ \mathbf{A}_{j,  k}^H\} \mathrm{Tr}\{ \mathbf{A}_{h,  k}\}.
\end{align}

\subsection{Derivation of $  \mathbb{E}\{\hat{\mathbf{q}}_{j, k}^H \mathbf{q}_{j, k}  (\hat{\mathbf{q}}_{h, k}^H \mathbf{q}_{h, k})^H\} $}
Next, we will focus on the derivation of $  \mathbb{E}\{\hat{\mathbf{q}}_{j, k}^H \mathbf{q}_{j, k}  (\hat{\mathbf{q}}_{h, k}^H \mathbf{q}_{h, k})^H\} $, i.e., the off-diagonal entries of (\ref{matrixE}) with $i = k$. The result obtained in (\ref{E_jkjihkhi}) is still valid for the case with $i = k$ plus some extra terms that are zero for the case when $i \neq k$ which are not zero when $i = k$. 

In what follows we will focus on non-zero terms of (\ref{offdiag}) with $i = k$ as follows. The new non-zero terms are listed as follows which the calculation is omitted here for bravery. 

\begin{align} \nonumber
    \mathbb{E}\{ & \hat{\mathcal{Q}}_{r_1, j, k}^{(2)}(\Tilde{\mathbf{g}}_{r_1, k}) \mathcal{Q}_{r_2, j, k}^{(2)}(\Tilde{\mathbf{g}}_{r_2, k}) \mathcal{Q}_{r_3, h, k}^{(1)}(.) \mathcal{Q}_{r_4, h, k}^{(1)}(.) \} \\ \label{self_term_new_1}
    &= (\eta_{r_1, j, k}^{(2)})^2  \eta_{r_3, h, k}^{(1)} \eta_{r_4, h, k}^{(1)}
M g_{r_1, j}
     \Bar{\mathbf{g}}_{r_3, k}^H \boldsymbol{\Phi}_{r_3}^H \Bar{\mathbf{H}}_{r_3, h}^H
     \Bar{\mathbf{H}}_{r_4, h}  \boldsymbol{\Phi}_{r_4}  \Bar{\mathbf{g}}_{r_4, k} ,
\end{align}
\begin{align} \nonumber
     \mathbb{E}\{& \mathcal{Q}_{r_1, j, k}^{(1)}(.) \mathcal{Q}_{r_2, j, k}^{(1)}(.)  \mathcal{Q}_{r_3, h, k}^{(2)}(\Tilde{\mathbf{g}}_{r_3, k}) \hat{\mathcal{Q}}_{r_4, h, k}^{(2)}(\Tilde{\mathbf{g}}_{r_4, k}) \} \\ \label{self_term_new_2}
     &= \eta_{r_1, j, k}^{(1)}  \eta_{r_2, j, k}^{(1)}  (\eta_{r_3, h, k}^{(2)})^2 M g_{r_3, h}
      \Bar{\mathbf{g}}_{r_1, k}^H \boldsymbol{\Phi}_{r_1}^H \Bar{\mathbf{H}}_{r_1, j} ^H
      \Bar{\mathbf{H}}_{r_2, j}  \boldsymbol{\Phi}_{r_2}  \Bar{\mathbf{g}}_{r_2, k},
\end{align}
\begin{align} \nonumber
      \mathbb{E}\{ & \hat{\mathcal{Q}}_{r_1, j, k}^{(4)}(\Tilde{\mathbf{g}}_{r_1, k}, \Tilde{\mathbf{H}}_{r_1, j}) \mathcal{Q}_{r_2, j, k}^{(4)}(\Tilde{\mathbf{g}}_{r_2, k}, \Tilde{\mathbf{H}}_{r_2, j})   \mathcal{Q}_{r_3, h, k}^{(1)}(.) \mathcal{Q}_{r_4, h, k}^{(1)}(.) \} \\\label{self_term_new_3}
      &= (\eta_{r_1, j, k}^{(4)})^2   \eta_{r_3, h, k}^{(1)} \eta_{r_4, h, k}^{(1)}
      M\mathrm{Tr}\{   \mathbf{A}_{j,  k}^H \}
     \Bar{\mathbf{g}}_{r_3, k}^H \boldsymbol{\Phi}_{r_3}^H \Bar{\mathbf{H}}_{r_3, h}^H
     \Bar{\mathbf{H}}_{r_4, h}  \boldsymbol{\Phi}_{r_4}  \Bar{\mathbf{g}}_{r_4, k} ,
\end{align}
\begin{align} \nonumber
     \mathbb{E}\{&  \mathcal{Q}_{r_1, j, k}^{(1)}(.) \mathcal{Q}_{r_2, j, k}^{(1)}(.) \mathcal{Q}_{r_3, h, k}^{(4)}(\Tilde{\mathbf{g}}_{r_3, k}, \Tilde{\mathbf{H}}_{r_3, h}) \hat{\mathcal{Q}}_{r_4, h, k}^{(4)}(\Tilde{\mathbf{g}}_{r_4, k}, \Tilde{\mathbf{H}}_{r_4, h}) \} \\ \label{self_term_new_4}
     &=  \eta_{r_1, j, k}^{(1)}  \eta_{r_2, j, k}^{(1)} (\eta_{r_3, h, k}^{(4)})^2  M\mathrm{Tr}\{   \mathbf{A}_{h,  k}^H \}
           \Bar{\mathbf{g}}_{r_1, k}^H \boldsymbol{\Phi}_{r_1}^H \Bar{\mathbf{H}}_{r_1, j} ^H
      \Bar{\mathbf{H}}_{r_2, j}  \boldsymbol{\Phi}_{r_2}  \Bar{\mathbf{g}}_{r_2, k}  ,
\end{align}
\begin{align} \nonumber
      \mathbb{E}\{& \hat{\mathcal{Q}}_{j, k}^{(5)}(\Tilde{\mathbf{h}}_{j, k}) \mathcal{Q}_{j, k}^{(5)}(\Tilde{\mathbf{h}}_{j, k}) \mathcal{Q}_{r_3, h, k}^{(1)}(.) \mathcal{Q}_{r_4, h, k}^{(1)}(.) \} \\\label{self_new_5}
      &= (\eta_{j, k}^{(5)})^2 \eta_{r_3, h, k}^{(1)} \eta_{r_4, h, k}^{(1)}
      \mathrm{Tr}\{ \mathbf{A}_{j,  k} \}
       \Bar{\mathbf{g}}_{r_3, k}^H \boldsymbol{\Phi}_{r_3}^H \Bar{\mathbf{H}}_{r_3, h}^H
     \Bar{\mathbf{H}}_{r_4, h}  \boldsymbol{\Phi}_{r_4}  \Bar{\mathbf{g}}_{r_4, k},
\end{align}
\begin{align} \nonumber
     \mathbb{E}\{& \mathcal{Q}_{r_1, j, k}^{(1)}(.)  \mathcal{Q}_{r_2, j, k}^{(1)}(.) \mathcal{Q}_{h, k}^{(5)}(\Tilde{\mathbf{h}}_{h, k})\hat{\mathcal{Q}}_{j, k}^{(5)}(\Tilde{\mathbf{h}}_{h, k}) \} \\ \label{self_term_6}
     &= \eta_{r_1, j, k}^{(1)}  \eta_{r_2, j, k}^{(1)}  (\eta_{h, k}^{(5)})^2 \mathrm{Tr}\{ \mathbf{A}_{h,  k} \}
      \Bar{\mathbf{g}}_{r_1, k}^H \boldsymbol{\Phi}_{r_1}^H \Bar{\mathbf{H}}_{r_1, j} ^H
      \Bar{\mathbf{H}}_{r_2, j}  \boldsymbol{\Phi}_{r_2}  \Bar{\mathbf{g}}_{r_2, k} ,
\end{align}
\begin{align} \nonumber
      \mathbb{E}\{& \hat{\mathcal{Q}}_{j, k}^{(5)}(\Tilde{\mathbf{h}}_{j, k}) \mathcal{Q}_{j, k}^{(5)}(\Tilde{\mathbf{h}}_{j, k}) \hat{\mathcal{Q}}_{j, k}^{(5)}(\Tilde{\mathbf{h}}_{h, k}) \hat{\mathcal{Q}}_{j, k}^{(5)}(\Tilde{\mathbf{h}}_{h, k}) \} \\ \label{self_term_7}
      &= (\eta_{j, k}^{(5)}  \eta_{h, k}^{(5)})^2
        \mathrm{Tr}\{   \mathbf{A}_{j,  k}^H \} \mathrm{Tr}\{   \mathbf{A}_{h,  k}\},
\end{align}
\begin{align} \nonumber
    \mathbb{E}\{& \hat{\mathcal{Q}}_{r_1, j, k}^{(2)}(\Tilde{\mathbf{g}}_{r_1, k})  \mathcal{Q}_{r_2, j, i}^{(2)}(\Tilde{\mathbf{g}}_{r_2, i})  \mathcal{Q}_{r_3, h, i}^{(4)}(\Tilde{\mathbf{g}}_{r_3, i}, \Tilde{\mathbf{H}}_{r_3, h}) \hat{\mathcal{Q}}_{r_4, h, k}^{(4)}(\Tilde{\mathbf{g}}_{r_4, k}, \Tilde{\mathbf{H}}_{r_4, h}) \} \\\label{self_term_8}
  &= ( \eta_{r_1, j, k}^{(2)}   \eta_{r_3, h, k}^{(4)})^2 M^2 \mathrm{Tr}\{   \mathbf{A}_{h,  k}\} g_{r_1, j} ,
\end{align}
\begin{align} \nonumber     
\mathbb{E}\{&  \hat{\mathcal{Q}}_{r_1, j, k}^{(4)}(\Tilde{\mathbf{g}}_{r_1, k}, \Tilde{\mathbf{H}}_{r_1, j}) \mathcal{Q}_{r_2, j, k}^{(4)}(\Tilde{\mathbf{g}}_{r_2, k}, \Tilde{\mathbf{H}}_{r_2, j}) \mathcal{Q}_{r_3, h, k}^{(2)}(\Tilde{\mathbf{g}}_{r_3, k}) \hat{\mathcal{Q}}_{r_4, h, k}^{(2)}(\Tilde{\mathbf{g}}_{r_4, k})  \} \\ \nonumber
    &= (\eta_{r_1, j, k}^{(4)}  \eta_{r_3, h, k}^{(2)} )^2
      \mathbb{E}\{ \Tilde{\mathbf{g}}_{r_3, k}^H \boldsymbol{\Phi}_{r_3}^H \Bar{\mathbf{H}}_{r_3, h} ^H \mathbf{A}_{h,  k}^H
      \Bar{\mathbf{H}}_{r_3, h} \boldsymbol{\Phi}_{r_3} \Tilde{\mathbf{g}}_{r_3, k} \} \\\label{self_term_9}
     &=  (\eta_{r_1, j, k}^{(4)}  \eta_{r_3, h, k}^{(2)} )^2 M^2 \mathrm{Tr}\{   \mathbf{A}_{j,  k}^H\} g_{r_3, j},
\end{align}
\begin{align} \nonumber
    \mathbb{E}\{& \hat{\mathcal{Q}}_{j, k}^{(5)}(\Tilde{\mathbf{h}}_{j, k}) \mathcal{Q}_{j, k}^{(5)}(\Tilde{\mathbf{h}}_{j, k}) \} \mathcal{Q}_{r_3, h, k}^{(2)}(\Tilde{\mathbf{g}}_{r_3, k}) \hat{\mathcal{Q}}_{r_4, h, k}^{(2)}(\Tilde{\mathbf{g}}_{r_4, k}) \\\nonumber
    &= \eta_{j, k}^{(5)}  \eta_{j, k}^{(5)}  \eta_{r_3, h, k}^{(2)} \eta_{r_4, h, k}^{(2)}  \mathbb{E}\{ \Tilde{\mathbf{h}}_{j, k}^H \mathbf{A}_{j,  k}^H \Tilde{\mathbf{h}}_{j, k} \}  \mathbb{E}\{    \Tilde{\mathbf{g}}_{r_3, k}^H \boldsymbol{\Phi}_{r_3}^H \Bar{\mathbf{H}}_{r_3, h} ^H \mathbf{A}_{h,  k}^H
      \Bar{\mathbf{H}}_{r_4, h} \boldsymbol{\Phi}_{r_4} \Tilde{\mathbf{g}}_{r_4, k} \} \\ \label{self_term_10}
      &=  (\eta_{j, k}^{(5)}  \eta_{r_3, h, k}^{(2)})^2  M g_{r_3, h}  \mathrm{Tr}\{ \mathbf{A}_{j,  k}^H \},
\end{align}
\begin{align} \nonumber
     \mathbb{E}\{ & \hat{\mathcal{Q}}_{r_1, j, k}^{(2)}(\Tilde{\mathbf{g}}_{r_1, k}) \mathcal{Q}_{r_2, j, k}^{(2)}(\Tilde{\mathbf{g}}_{r_2, k})  \mathcal{Q}_{h, k}^{(5)}(\Tilde{\mathbf{h}}_{h, k})\hat{\mathcal{Q}}_{j, k}^{(5)}(\Tilde{\mathbf{h}}_{h, k}) \}  \\ \label{self_term_11}
       &=  (\eta_{r_1, j, k}^{(2)} \eta_{h, k}^{(5)})^2 M g_{r_1, k} \mathrm{Tr}\{  \mathbf{A}_{h,  k} \},
\end{align}
\begin{align} \nonumber
   \mathbb{E}\{& \hat{\mathcal{Q}}_{r_1, j, k}^{(4)}(\Tilde{\mathbf{g}}_{r_1, k}, \Tilde{\mathbf{H}}_{r_1, j})  \mathcal{Q}_{r_2, j, i}^{(4)}(\Tilde{\mathbf{g}}_{r_2, k}, \Tilde{\mathbf{H}}_{r_2, j})  \mathcal{Q}_{h, k}^{(5)}(\Tilde{\mathbf{h}}_{h, k})  \hat{\mathcal{Q}}_{j, k}^{(5)}(\Tilde{\mathbf{h}}_{h, k}) \} \\ \label{self_term_12}
       &= ( \eta_{r_1, j, k}^{(4)} \eta_{h, k}^{(5)})^2 M \mathrm{Tr}\{ \mathbf{A}_{j,  k}^H \} \mathrm{Tr}\{ \mathbf{A}_{h,  k}^H \},
\end{align}
\begin{align} \nonumber
    \mathbb{E}\{ &  \hat{\mathcal{Q}}_{j, k}^{(5)}(\Tilde{\mathbf{h}}_{j, k}) \mathcal{Q}_{j, k}^{(5)}(\Tilde{\mathbf{h}}_{j, k}) \mathcal{Q}_{r_3, h, k}^{(4)}(\Tilde{\mathbf{g}}_{r_3, k}, \Tilde{\mathbf{H}}_{r_3, h}) \hat{\mathcal{Q}}_{r_4, h, k}^{(4)}(\Tilde{\mathbf{g}}_{r_4, k}, \Tilde{\mathbf{H}}_{r_4, h}) \}  \\  \label{self_term_12}
      &= ( \eta_{j, k}^{(5)}   \eta_{r_3, h, k}^{(4)})^2 M \mathrm{Tr}\{ \mathbf{A}_{j,  k}^H \} \mathrm{Tr}\{ \mathbf{A}_{h,  k} \},
\end{align}
\begin{align} \nonumber
    \mathbb{E}\{& \hat{\mathcal{Q}}_{j, k}^{(5)}(\Tilde{\mathbf{h}}_{j, k}) \mathcal{Q}_{j, k}^{(5)}(\Tilde{\mathbf{h}}_{j, k}) \mathcal{Q}_{r_3, h, k}^{(3)}(\Tilde{\mathbf{H}}_{r_3, h}) \hat{\mathcal{Q}}_{r_4, h, k}^{(3)}(\Tilde{\mathbf{H}}_{r_4, h})  \} \\\label{self_term_13}
      &=  (\eta_{j, k}^{(5)}   \eta_{r_3, h, k}^{(3)})^2 M \mathrm{Tr}\{ \mathbf{A}_{j,  k}^H \}  \mathrm{Tr}\{ \mathbf{A}_{h,  k} \},
\end{align}
\begin{align} \nonumber
    \mathbb{E}\{&  \hat{\mathcal{Q}}_{r_1, j, k}^{(3)}(\Tilde{\mathbf{H}}_{r_1, j}) \mathcal{Q}_{r_2, j, k}^{(3)}(\Tilde{\mathbf{H}}_{r_2, j}) \mathcal{Q}_{h, k}^{(5)}(\Tilde{\mathbf{h}}_{h, k}) \hat{\mathcal{Q}}_{j, k}^{(5)}(\Tilde{\mathbf{h}}_{h, k}) \} \\ \label{self_term_14}
    &=  (\eta_{h, k}^{(5)}   \eta_{r_3, j, k}^{(3)})^2 M \mathrm{Tr}\{ \mathbf{A}_{j,  k}^H \}  \mathrm{Tr}\{ \mathbf{A}_{h,  k} \},
\end{align}
\begin{align} \nonumber
    \mathbb{E}\{& \hat{\mathcal{Q}}_{r_1, j, k}^{(4)}(\Tilde{\mathbf{g}}_{r_1, k}, \Tilde{\mathbf{H}}_{r_1, j})  \mathcal{Q}_{r_2, j, k}^{(4)}(\Tilde{\mathbf{g}}_{r_2, k}, \Tilde{\mathbf{H}}_{r_2, j}) \mathcal{Q}_{r_3, h, k}^{(3)}(\Tilde{\mathbf{H}}_{r_3, h}) \hat{\mathcal{Q}}_{r_4, h, k}^{(3)}(\Tilde{\mathbf{H}}_{r_4, h})  \} \\ \label{self_term_15}
     &= (\eta_{r_1, j, k}^{(4)}  \eta_{r_3, h, k}^{(3)})^2 M^2
     \mathrm{Tr}\{ \mathbf{A}_{j,  k}^H \}  \mathrm{Tr}\{ \mathbf{A}_{h,  k} \} ,
\end{align}
\begin{align} \nonumber
    \mathbb{E}\{&  \hat{\mathcal{Q}}_{r_1, j, k}^{(3)}(\Tilde{\mathbf{H}}_{r_1, j}) \mathcal{Q}_{r_2, j, k}^{(3)}(\Tilde{\mathbf{H}}_{r_2, j}) \mathcal{Q}_{r_3, h, k}^{(4)}(\Tilde{\mathbf{g}}_{r_3, k}, \Tilde{\mathbf{H}}_{r_3, h}) \hat{\mathcal{Q}}_{r_4, h, k}^{(4)}(\Tilde{\mathbf{g}}_{r_4, k}, \Tilde{\mathbf{H}}_{r_4, h}) \} \\ \label{self_term_16}
    &= (\eta_{r_1, j, k}^{(3)}  \eta_{r_3, h, k}^{(4)})^2 M^2
     \mathrm{Tr}\{ \mathbf{A}_{j,  k}^H \}  \mathrm{Tr}\{ \mathbf{A}_{h,  k} \} ,
\end{align}
\begin{align} \nonumber
    \mathbb{E}\{& \hat{\mathcal{Q}}_{r_1, j, k}^{(2)}(\Tilde{\mathbf{g}}_{r_1, k})  \mathcal{Q}_{r_2, j, k}^{(2)}(\Tilde{\mathbf{g}}_{r_2, k}) \mathcal{Q}_{r_3, h, k}^{(3)}(\Tilde{\mathbf{H}}_{r_3, h}) \hat{\mathcal{Q}}_{r_4, h, k}^{(3)}(\Tilde{\mathbf{H}}_{r_4, h}) \} \\ \label{self_term_17}
     &=  (\eta_{r_1, j, k}^{(2)}  \eta_{r_3, h, k}^{(3)})^2
     M^2 \mathrm{Tr}\{ \mathbf{A}_{h,  k} \} g_{r_1, j},
\end{align}
\begin{align} \nonumber
    \mathbb{E}\{& \hat{\mathcal{Q}}_{r_1, j, k}^{(3)}(\Tilde{\mathbf{H}}_{r_1, j}) \mathcal{Q}_{r_2, j, i}^{(3)}(\Tilde{\mathbf{H}}_{r_2, j}) \mathcal{Q}_{r_3, h, k}^{(2)}(\Tilde{\mathbf{g}}_{r_3, k}) \hat{\mathcal{Q}}_{r_4, h, k}^{(2)}(\Tilde{\mathbf{g}}_{r_4, k}) \} \\ \label{self_term_18}
     &=  (\eta_{r_1, j, k}^{(3)}  \eta_{r_3, h, k}^{(2)})^2
     M^2 \mathrm{Tr}\{ \mathbf{A}_{j,  k} \} g_{r_3, h}.
\end{align}

\subsection*{The Pilot Contamination part of $ \mathbb{E}\{\hat{\mathbf{q}}_{j, k}^H \mathbf{q}_{j, k}  (\hat{\mathbf{q}}_{h, k}^H \mathbf{q}_{h, k})^H\}$ }

The pilot contamination terms of this case are given by
\begin{align} \nonumber
    \mathbb{E}\{& \hat{\mathcal{Q}}_{j, k}^{(5)}(\Tilde{\mathbf{h}}_{j, k}) \mathcal{Q}_{j, k}^{(5)}(\Tilde{\mathbf{h}}_{j, k}) \mathcal{Q}_{r_3, h, k}^{(3)}(\Tilde{\mathbf{H}}_{r_3, h}) \hat{\mathcal{Q}}_{r_4, h, l}^{(3)}(\Tilde{\mathbf{H}}_{r_4, h})  \} \\\label{self_term_13}
      &=  (\eta_{j, k}^{(5)})^2  \eta_{r_3, h, k}^{(3)} \eta_{r_3, h, l}^{(3)} M \mathrm{Tr}\{ \mathbf{A}_{j,  k}^H \}  \mathrm{Tr}\{ \mathbf{A}_{h,  k} \},
\end{align}
\begin{align} \nonumber
    \mathbb{E}\{&  \hat{\mathcal{Q}}_{r_1, j, l}^{(3)}(\Tilde{\mathbf{H}}_{r_1, j}) \mathcal{Q}_{r_2, j, k}^{(3)}(\Tilde{\mathbf{H}}_{r_2, j}) \mathcal{Q}_{h, k}^{(5)}(\Tilde{\mathbf{h}}_{h, k}) \hat{\mathcal{Q}}_{j, k}^{(5)}(\Tilde{\mathbf{h}}_{h, k}) \} \\ \label{self_term_14}
    &=  (\eta_{h, k}^{(5)})^2   \eta_{r_3, j, k}^{(3)} \eta_{r_3, j, l}^{(3)}  M \mathrm{Tr}\{ \mathbf{A}_{j,  k}^H \}  \mathrm{Tr}\{ \mathbf{A}_{h,  k} \},
\end{align}
\begin{align} \nonumber
    \mathbb{E}\{& \hat{\mathcal{Q}}_{r_1, j, k}^{(4)}(\Tilde{\mathbf{g}}_{r_1, k}, \Tilde{\mathbf{H}}_{r_1, j})  \mathcal{Q}_{r_2, j, k}^{(4)}(\Tilde{\mathbf{g}}_{r_2, k}, \Tilde{\mathbf{H}}_{r_2, j}) \mathcal{Q}_{r_3, h, k}^{(3)}(\Tilde{\mathbf{H}}_{r_3, h}) \hat{\mathcal{Q}}_{r_4, h, l}^{(3)}(\Tilde{\mathbf{H}}_{r_4, h})  \} \\ \label{self_term_15}
     &= (\eta_{r_1, j, k}^{(4)})^2  \eta_{r_3, h, k}^{(3)} \eta_{r_3, h, l}^{(3)}  M^2
     \mathrm{Tr}\{ \mathbf{A}_{j,  k}^H \}  \mathrm{Tr}\{ \mathbf{A}_{h,  k} \} ,
\end{align}
\begin{align} \nonumber
    \mathbb{E}\{&  \hat{\mathcal{Q}}_{r_1, j, l}^{(3)}(\Tilde{\mathbf{H}}_{r_1, j}) \mathcal{Q}_{r_2, j, k}^{(3)}(\Tilde{\mathbf{H}}_{r_2, j}) \mathcal{Q}_{r_3, h, k}^{(4)}(\Tilde{\mathbf{g}}_{r_3, k}, \Tilde{\mathbf{H}}_{r_3, h}) \hat{\mathcal{Q}}_{r_4, h, k}^{(4)}(\Tilde{\mathbf{g}}_{r_4, k}, \Tilde{\mathbf{H}}_{r_4, h}) \} \\ \label{self_term_16}
    &= \eta_{r_1, j, l}^{(3)} \eta_{r_1, j, k}^{(3)}   (\eta_{r_3, h, k}^{(4)})^2 M^2
     \mathrm{Tr}\{ \mathbf{A}_{j,  k}^H \}  \mathrm{Tr}\{ \mathbf{A}_{h,  k} \} ,
\end{align}
\begin{align} \nonumber
    \mathbb{E}\{& \hat{\mathcal{Q}}_{r_1, j, k}^{(2)}(\Tilde{\mathbf{g}}_{r_1, k})  \mathcal{Q}_{r_2, j, k}^{(2)}(\Tilde{\mathbf{g}}_{r_2, k}) \mathcal{Q}_{r_3, h, k}^{(3)}(\Tilde{\mathbf{H}}_{r_3, h}) \hat{\mathcal{Q}}_{r_4, h, l}^{(3)}(\Tilde{\mathbf{H}}_{r_4, h}) \} \\ \label{self_term_17}
     &=  (\eta_{r_1, j, k}^{(2)})^2  \eta_{r_3, h, k}^{(3)} \eta_{r_3, h, l}^{(3)}
     M^2 \mathrm{Tr}\{ \mathbf{A}_{h,  k} \} g_{r_1, j},
\end{align}
\begin{align} \nonumber
    \mathbb{E}\{& \hat{\mathcal{Q}}_{r_1, j, l}^{(3)}(\Tilde{\mathbf{H}}_{r_1, j}) \mathcal{Q}_{r_2, j, i}^{(3)}(\Tilde{\mathbf{H}}_{r_2, j}) \mathcal{Q}_{r_3, h, k}^{(2)}(\Tilde{\mathbf{g}}_{r_3, k}) \hat{\mathcal{Q}}_{r_4, h, k}^{(2)}(\Tilde{\mathbf{g}}_{r_4, k}) \} \\ \label{self_term_18}
     &=  \eta_{r_1, j, l}^{(3)} \eta_{r_1, j, k}^{(3)}  (\eta_{r_3, h, k}^{(2)})^2
     M^2 \mathrm{Tr}\{ \mathbf{A}_{j,  k} \} g_{r_3, h}.
\end{align}

Finally, by summing up all the terms, we will arrive at
\begin{align} \nonumber
    \mathbb{E}\{ \hat{\mathbf{q}}_{j, k}^H \mathbf{q}_{j, k}  (\hat{\mathbf{q}}_{h, k}^H \mathbf{q}_{h, k})^H\} &= \sum_{r_1=1}^R\sum_{r_2=1}^R\sum_{r_3=1}^R\sum_{r_4=1}^R E_{j,k,h,k}^{\mathrm{CROSS-BS}}(\boldsymbol{\Phi}) + E_{j,k,h}^{\mathrm{SELF-UE}}(\mathbf{r}) \\\label{term_jkjkhkhk}
    &+ \sum_{l \in \mathcal{P}_k \backslash \{k\}}  E_{j,l,h,k}^{\mathrm{PC-CROSS-BS}}(\boldsymbol{\Phi}) + E_{j,l,h,k}^{\mathrm{PC - SELF-UE}}(\mathbf{r}),
\end{align}
where
\begin{align}
    E_{j,k,h}^{\mathrm{SELF}}(\mathbf{r}) &= (\eta_{r_1, j, k}^{(2)})^2  \eta_{r_3, h, k}^{(1)} \eta_{r_4, h, k}^{(1)}
M g_{r_1, j}
     \Bar{\mathbf{g}}_{r_3, k}^H \boldsymbol{\Phi}_{r_3}^H \Bar{\mathbf{H}}_{r_3, h}^H
     \Bar{\mathbf{H}}_{r_4, h}  \boldsymbol{\Phi}_{r_4}  \Bar{\mathbf{g}}_{r_4, k} \\ \nonumber
     &+ \eta_{r_1, j, k}^{(1)}  \eta_{r_2, j, k}^{(1)}  (\eta_{r_3, h, k}^{(2)})^2 M g_{r_3, h}
      \Bar{\mathbf{g}}_{r_1, k}^H \boldsymbol{\Phi}_{r_1}^H \Bar{\mathbf{H}}_{r_1, j} ^H
      \Bar{\mathbf{H}}_{r_2, j}  \boldsymbol{\Phi}_{r_2}  \Bar{\mathbf{g}}_{r_2, k} \\\nonumber
      &+ (\eta_{r_1, j, k}^{(4)})^2   \eta_{r_3, h, k}^{(1)} \eta_{r_4, h, k}^{(1)}
      M\mathrm{Tr}\{   \mathbf{A}_{j,  k}^H \}
     \Bar{\mathbf{g}}_{r_3, k}^H \boldsymbol{\Phi}_{r_3}^H \Bar{\mathbf{H}}_{r_3, h}^H
     \Bar{\mathbf{H}}_{r_4, h}  \boldsymbol{\Phi}_{r_4}  \Bar{\mathbf{g}}_{r_4, k} \\\nonumber
     &+ \eta_{r_1, j, k}^{(1)}  \eta_{r_2, j, k}^{(1)} (\eta_{r_3, h, k}^{(4)})^2  M\mathrm{Tr}\{   \mathbf{A}_{h,  k}^H \}
           \Bar{\mathbf{g}}_{r_1, k}^H \boldsymbol{\Phi}_{r_1}^H \Bar{\mathbf{H}}_{r_1, j} ^H
      \Bar{\mathbf{H}}_{r_2, j}  \boldsymbol{\Phi}_{r_2}  \Bar{\mathbf{g}}_{r_2, k} \\ \nonumber
      &+ (\eta_{j, k}^{(5)})^2 \eta_{r_3, h, k}^{(1)} \eta_{r_4, h, k}^{(1)}
      \mathrm{Tr}\{ \mathbf{A}_{j,  k} \}
       \Bar{\mathbf{g}}_{r_3, k}^H \boldsymbol{\Phi}_{r_3}^H \Bar{\mathbf{H}}_{r_3, h}^H
     \Bar{\mathbf{H}}_{r_4, h}  \boldsymbol{\Phi}_{r_4}  \Bar{\mathbf{g}}_{r_4, k} \\ \nonumber
     &+ \eta_{r_1, j, k}^{(1)}  \eta_{r_2, j, k}^{(1)}  (\eta_{h, k}^{(5)})^2 \mathrm{Tr}\{ \mathbf{A}_{h,  k} \}
      \Bar{\mathbf{g}}_{r_1, k}^H \boldsymbol{\Phi}_{r_1}^H \Bar{\mathbf{H}}_{r_1, j} ^H
      \Bar{\mathbf{H}}_{r_2, j}  \boldsymbol{\Phi}_{r_2}  \Bar{\mathbf{g}}_{r_2, k} \\ \nonumber
      &+ (\eta_{j, k}^{(5)}  \eta_{h, k}^{(5)})^2
        \mathrm{Tr}\{   \mathbf{A}_{j,  k}^H \} \mathrm{Tr}\{   \mathbf{A}_{h,  k}\} \\\nonumber
        &+ ( \eta_{r_1, j, k}^{(2)}   \eta_{r_3, h, k}^{(4)})^2 M^2 \mathrm{Tr}\{   \mathbf{A}_{h,  k}\} g_{r_1, j} \\\nonumber
        &+ (\eta_{r_1, j, k}^{(4)}  \eta_{r_3, h, k}^{(2)} )^2 M^2 \mathrm{Tr}\{   \mathbf{A}_{j,  k}^H\} g_{r_3, j} \\\nonumber
        &+  (\eta_{j, k}^{(5)}  \eta_{r_3, h, k}^{(2)})^2  M g_{r_3, h}  \mathrm{Tr}\{ \mathbf{A}_{j,  k}^H \} \\ \nonumber
        &+ (\eta_{r_1, j, k}^{(2)} \eta_{h, k}^{(5)})^2 M g_{r_1, k} \mathrm{Tr}\{  \mathbf{A}_{h,  k} \} \\\nonumber
        &+ ( \eta_{r_1, j, k}^{(4)} \eta_{h, k}^{(5)})^2 M \mathrm{Tr}\{ \mathbf{A}_{j,  k}^H \} \mathrm{Tr}\{ \mathbf{A}_{h,  k}^H \} \\\nonumber
        &+ ( \eta_{j, k}^{(5)}   \eta_{r_3, h, k}^{(4)})^2 M \mathrm{Tr}\{ \mathbf{A}_{j,  k}^H \} \mathrm{Tr}\{ \mathbf{A}_{h,  k} \} \\ \nonumber
        &+ (\eta_{j, k}^{(5)}   \eta_{r_3, h, k}^{(3)})^2 M \mathrm{Tr}\{ \mathbf{A}_{j,  k}^H \}  \mathrm{Tr}\{ \mathbf{A}_{h,  k} \} \\ \nonumber
        &+ (\eta_{r_1, j, k}^{(4)}  \eta_{r_3, h, k}^{(3)})^2 M^2
     \mathrm{Tr}\{ \mathbf{A}_{j,  k}^H \}  \mathrm{Tr}\{ \mathbf{A}_{h,  k} \} \\ \nonumber
     &+ (\eta_{r_1, j, k}^{(3)}  \eta_{r_3, h, k}^{(4)})^2 M^2
     \mathrm{Tr}\{ \mathbf{A}_{j,  k}^H \}  \mathrm{Tr}\{ \mathbf{A}_{h,  k} \} \\\nonumber
     &+ (\eta_{r_1, j, k}^{(2)}  \eta_{r_3, h, k}^{(3)})^2
     M^2 \mathrm{Tr}\{ \mathbf{A}_{h,  k} \} g_{r_1, j} \\\nonumber
     &+ (\eta_{r_1, j, k}^{(3)}  \eta_{r_3, h, k}^{(2)})^2
     M^2 \mathrm{Tr}\{ \mathbf{A}_{j,  k} \} g_{r_3, h},
\end{align}
and
\begin{align}
     E_{j,l,h,k}^{\mathrm{PC - SELF}}(\mathbf{r}) &= (\eta_{j, k}^{(5)})^2  \eta_{r_3, h, k}^{(3)} \eta_{r_3, h, l}^{(3)} M \mathrm{Tr}\{ \mathbf{A}_{j,  k}^H \}  \mathrm{Tr}\{ \mathbf{A}_{h,  k} \} \\ \nonumber
     &+ (\eta_{h, k}^{(5)})^2   \eta_{r_3, j, k}^{(3)} \eta_{r_3, j, l}^{(3)}  M \mathrm{Tr}\{ \mathbf{A}_{j,  k}^H \}  \mathrm{Tr}\{ \mathbf{A}_{h,  k} \} \\ \nonumber
     &+  (\eta_{r_1, j, k}^{(4)})^2  \eta_{r_3, h, k}^{(3)} \eta_{r_3, h, l}^{(3)}  M^2
     \mathrm{Tr}\{ \mathbf{A}_{j,  k}^H \}  \mathrm{Tr}\{ \mathbf{A}_{h,  k} \} \\ \nonumber
     &+ \eta_{r_1, j, l}^{(3)} \eta_{r_1, j, k}^{(3)}   (\eta_{r_3, h, k}^{(4)})^2 M^2
     \mathrm{Tr}\{ \mathbf{A}_{j,  k}^H \}  \mathrm{Tr}\{ \mathbf{A}_{h,  k} \} \\ \nonumber
     &+ (\eta_{r_1, j, k}^{(2)})^2  \eta_{r_3, h, k}^{(3)} \eta_{r_3, h, l}^{(3)}
     M^2 \mathrm{Tr}\{ \mathbf{A}_{h,  k} \} g_{r_1, j} \\\nonumber
     &+ \eta_{r_1, j, l}^{(3)} \eta_{r_1, j, k}^{(3)}  (\eta_{r_3, h, k}^{(2)})^2
     M^2 \mathrm{Tr}\{ \mathbf{A}_{j,  k} \} g_{r_3, h}.
\end{align}

\subsection{Derivation of $  \mathbb{E}\{|\hat{\mathbf{q}}_{j, k}^H  \mathbf{q}_{j, i}|^2\}$}

We now move on the derivation of $  \mathbb{E}\{|\hat{\mathbf{q}}_{j, k}^H  \mathbf{q}_{j, i}|^2\}$. Some of them have been calculated in $\mathbb{E}\{ \hat{\mathbf{q}}_{j, k}^H \mathbf{q}_{j, k}  (\hat{\mathbf{q}}_{h, k}^H \mathbf{q}_{h, k})^H\}$ with $h = j$. The are some extra terms that need to be calculated separately. The approach for the derivation is the same as before and hence we omit it for brevity and the final result is as follows.  

\begin{align} \nonumber
      \mathbb{E}\{|\hat{\mathbf{q}}_{j, k}^H  \mathbf{q}_{j, i}|^2\} &=  E_{j, k, i}^{\mathrm{C}} +  \sum_{r_1=1}^R\sum_{r_2=1}^R\sum_{r_3=1}^R\sum_{r_4=1}^R \Big( E_{j,k,j,i}^{\mathrm{CROSS-BS}}(\boldsymbol{\Phi}) +E_{r, j, k, i}^{\mathrm{CROSS-USER}}(\boldsymbol{\Phi})+ E_{j, k, i}^{\mathrm{NOISE}}(\boldsymbol{\Phi}) +  \\\label{E_jkji}
      &+ \sum_{l \in \mathcal{P}_k \backslash \{k\}}   E_{j,l,j,i}^{\mathrm{PC-CROSS-BS}}(\boldsymbol{\Phi})  +  E_{j, l, k, i}^{\mathrm{CROSS-USER-PC}}(\boldsymbol{\Phi}) 
     \Big),
\end{align}
with 
\begin{align} \label{sub_1}
    E_{j, k, i}^{\mathrm{C}} = (\eta_{j, k}^{(5)} \eta_{j, i}^{(5)})^2 \mathrm{Tr}\{ \mathbf{A}_{j,  k}^H \mathbf{A}_{j,  k}\} + \frac{\sigma^2}{\rho \tau_p} (\eta_{ j, i}^{(5)})^2 \mathrm{Tr}\{ \mathbf{A}_{j,  k}^H  \mathbf{A}_{j,  k} \} +  \sum_{l \in \mathcal{P}_k \backslash \{k\}} \Big( (\eta_{j, l}^{(5)} \eta_{j, i}^{(5)})^2 M\mathrm{Tr}\{ \mathbf{A}_{j,  k}^H \mathbf{A}_{j,  k}\}  \Big),
\end{align}
\begin{align} \nonumber
E_{r, j, k, i}^{\mathrm{CROSS-USER}}(\boldsymbol{\Phi}) &=  \eta_{r_1, j, k}^{(1)} (\eta_{r_2, j, i}^{(3)})^2  \eta_{r_4, j, k}^{(1)}  M f_{r_1, j, k}^H(\boldsymbol{\Phi}_r) f_{r_4, j, k}(\boldsymbol{\Phi}_{r_4})   \mathbf{a}_{N, r_1, j}^H  \mathbf{a}_{N, r_4, j}  \\ \nonumber
      &+ \eta_{r_1, j, k}^{(1)} (\eta_{r_2, j, i}^{(4)})^2  \eta_{r_4, j, k}^{(1)} f_{r_1, j, k}^H(\boldsymbol{\Phi}_r) f_{r_4, j, k}(\boldsymbol{\Phi}_{r_4}) \mathbf{a}_{N, r_1, j}^H \mathbf{a}_{N, r_4, j} \\ \nonumber
     &+ \eta_{r_1, j, k}^{(1)} (\eta_{j, i}^{(5)})^2  \eta_{r_4, j, k}^{(1)} f_{r_1, j, k}^H(\boldsymbol{\Phi}_r) f_{r_4, j, k}(\boldsymbol{\Phi}_{r_4})  \mathbf{a}_{N, r_1, j}^H  \mathbf{a}_{N, r_4, j} \\\nonumber
    &+ (\eta_{r_1, j, k}^{(2)} \eta_{r_2, j, i}^{(3)})^2   M^2 h_{r_1,j,k} +  (\eta_{r_1, j, k}^{(2)} \eta_{r_2, j, i}^{(4)})^2  M h_{r_1,j,k} + (\eta_{r_1, j, k}^{(2)} \eta_{j, i}^{(5)})^2 M h_{r_1,j,k} \\ \nonumber
    &+   (\eta_{r_1, j, k}^{(3)})^2 \eta_{r_2, j, i}^{(1)}  \eta_{r_3, j, i}^{(1)} M \mathrm{Tr}\{ \mathbf{A}_{j,  k}^H \Bar{\mathbf{H}}_{r_2, j}  \boldsymbol{\Phi}_{r_2} \Bar{\mathbf{g}}_{r_2, i} \Bar{\mathbf{g}}_{r_3, i}^H \boldsymbol{\Phi}_{r_3}^H \Bar{\mathbf{H}}_{r_3, j}^H  \mathbf{A}_{j,  k} \}  + (\eta_{r_1, j, k}^{(3)} \eta_{r_2, j, i}^{(2)})^2     M^2 h_{r_2, j, k} \\\nonumber
    &+ (\eta_{r_1, j, k}^{(3)} \eta_{j, i}^{(5)})^2 M \mathrm{Tr}\{ \mathbf{A}_{j,  k}^H \mathbf{A}_{j,  k}  \}  \\ \nonumber
     &+(\eta_{r_1, j, k}^{(4)})^2 \eta_{r_2, j, i}^{(1)} \eta_{r_3, j, i}^{(1)}  M \mathrm{Tr}\{ \mathbf{A}_{j,  k}^H \Bar{\mathbf{H}}_{r_2, j}  \boldsymbol{\Phi}_{r_2} \Bar{\mathbf{g}}_{r_2, i} \Bar{\mathbf{g}}_{r_3, i}^H \boldsymbol{\Phi}_{r_3}^H \Bar{\mathbf{H}}_{r_3, j}^H  \mathbf{A}_{j,  k}  \} \\\nonumber
     &+  (\eta_{r_1, j, k}^{(4)} \eta_{r_2, j, i}^{(2)})^2   M^2 h_{r_2,j,k} + (\eta_{r_1, j, k}^{(4)} \eta_{j, i}^{(5)})^2  M \mathrm{Tr}\{ \mathbf{A}_{j,  k}^H  \mathbf{A}_{j,  k} \} \\
     &+  (\eta_{j, k}^{(5)})^2   \eta_{r_2, j, i}^{(1)} \eta_{r_3, j, i}^{(1)} \mathrm{Tr}\{\mathbf{A}_{j,  k}^H  \Bar{\mathbf{H}}_{r_2, j}  \boldsymbol{\Phi}_{r_2}  \Bar{\mathbf{g}}_{r_2, i} \Bar{\mathbf{g}}_{{r_3}, i}^H \boldsymbol{\Phi}_{r_3}^H \Bar{\mathbf{H}}_{{r_3}, j}^H \mathbf{A}_{j,  k} \} \\ \nonumber
     &+  (\eta_{j, k}^{(5)} \eta_{r_1, j, i}^{(2)})^2 M h_{r_1,j,k}   \\ \nonumber
    &+ (\eta_{j, k}^{(5)} \eta_{r_1, j, i}^{(3)})^2  M\mathrm{Tr}\{ \mathbf{A}_{j,  k}^H \mathbf{A}_{j,  k}\} + (\eta_{j, k}^{(5)} \eta_{r_1, j, i}^{(4)})^2 \mathrm{Tr}\{ \mathbf{A}_{j,  k}^H \mathbf{A}_{j,  k}\} \Big)   \Big)  \Big),
\end{align}
and
\begin{align} \nonumber
    E_{r, j, k, i}^{\mathrm{NOISE}}(\boldsymbol{\Phi}) &= 
     \Big(
         \frac{\sigma^2}{\rho \tau_p} \eta_{r_1, j, i}^{(1)} \eta_{r_2, j, i}^{(1)} f_{r_1, j, i}(\boldsymbol{\Phi}_r) f_{r_2, j, i}^H(\boldsymbol{\Phi}_{r_2})   \mathrm{Tr}\{ \mathbf{A}_{j,  k}^H
     \mathbf{a}_{N, r_1, j} \mathbf{a}_{N, r_2, j}^H \mathbf{A}_{j,  k} \} \\ \nonumber
     &+ \frac{\sigma^2}{\rho \tau_p} (\eta_{r_1, j, i}^{(2)})^2 M h_{i, j} + \frac{\sigma^2}{\rho \tau_p} (\eta_{r_1, j, i}^{(3)})^2 M \mathrm{Tr}\{ \mathbf{A}_{j,  k}^H  \mathbf{A}_{j,  k} \} \\ \label{z_10_f}
      &+ \frac{\sigma^2}{\rho \tau_p} (\eta_{r_1, j, i}^{(4)})^2  \mathrm{Tr}\{ \mathbf{A}_{j,  k}^H  \mathbf{A}_{j,  k} \}  \Big),
\end{align}
\begin{align} \nonumber
   E_{j, l, k, i}^{\mathrm{CROSS-USER-PC}}(\boldsymbol{\Phi}) &= 
    (\eta_{r_1, j, l}^{(2)}  \eta_{r_2, j, i}^{(3)})^2 M^2 h_{r_1,j,k}  + (\eta_{r_1, j, l}^{(2)}  \eta_{r_2, j, i}^{(4)})^2 M h_{r_1,j,k} + (\eta_{r_1, j, l}^{(2)} \eta_{j, i}^{(5)})^2  M h_{r_1,j,k} \\ \nonumber
     &+  (\eta_{r_1, j, l}^{(3)})^2 \eta_{r_2, j, i}^{(1)}  \eta_{r_3, j, i}^{(1)} M \mathrm{Tr}\{ \mathbf{A}_{j,  k}^H \Bar{\mathbf{H}}_{r_2, j}  \boldsymbol{\Phi}_{r_2} \Bar{\mathbf{g}}_{r_2, i} \Bar{\mathbf{g}}_{r_3, i}^H \boldsymbol{\Phi}_{r_3}^H \Bar{\mathbf{H}}_{r_3, j}^H  \mathbf{A}_{j,  k} \}  \\\nonumber
    &+ (\eta_{r_1, j, l}^{(3)} \eta_{r_2, j, i}^{(2)})^2     M^2 h_{r_2, j, k} +  (\eta_{r_1, j, l}^{(3)} \eta_{r_2, j, i}^{(3)})^2    M^2 \mathrm{Tr}\{ \mathbf{A}_{j,  k}^H \mathbf{A}_{j,  k}  \} \\ \nonumber
    &+  (\eta_{r_1, j, l}^{(3)} \eta_{j, i}^{(5)})^2 M \mathrm{Tr}\{ \mathbf{A}_{j,  k}^H \mathbf{A}_{j,  k}  \} \\ \nonumber
    &+ (\eta_{r_1, j, l}^{(4)})^2 \eta_{r_2, j, i}^{(1)} \eta_{r_3, j, i}^{(1)}  M \mathrm{Tr}\{ \mathbf{A}_{j,  k}^H \Bar{\mathbf{H}}_{r_2, j}  \boldsymbol{\Phi}_{r_2} \Bar{\mathbf{g}}_{r_2, i} \Bar{\mathbf{g}}_{r_3, i}^H \boldsymbol{\Phi}_{r_3}^H \Bar{\mathbf{H}}_{r_3, j}^H  \mathbf{A}_{j,  k}  \} \\\nonumber
     &+  (\eta_{r_1, j, l}^{(4)} \eta_{r_2, j, i}^{(2)})^2   M^2 h_{r_2,j,k} \\ \nonumber
    &+  (\eta_{r_1, j, l}^{(4)} \eta_{r^\prime, j, i}^{(4)})^2  M^2 \mathrm{Tr}\{ \mathbf{A}_{j,  k}^H  \mathbf{A}_{j,  k} \} + (\eta_{r_1, j, l}^{(4)} \eta_{j, i}^{(5)})^2  M \mathrm{Tr}\{ \mathbf{A}_{j,  k}^H  \mathbf{A}_{j,  k} \} \\ \nonumber
    &+ (\eta_{j, l}^{(5)})^2   \eta_{r_2, j, i}^{(1)} \eta_{r_3, j, i}^{(1)} \mathrm{Tr}\{\mathbf{A}_{j,  k}^H  \Bar{\mathbf{H}}_{r_2, j}  \boldsymbol{\Phi}_{r_2}  \Bar{\mathbf{g}}_{r_2, i} \Bar{\mathbf{g}}_{{r_3}, i}^H \boldsymbol{\Phi}_{r_3}^H \Bar{\mathbf{H}}_{{r_3}, j}^H \mathbf{A}_{j,  k} \} \\ \nonumber
    &+  (\eta_{j, l}^{(5)} \eta_{r_1, j, i}^{(2)})^2 M h_{r_1,j,k}   + (\eta_{j, l}^{(5)} \eta_{r, j, i}^{(3)})^2  M\mathrm{Tr}\{ \mathbf{A}_{j,  k}^H \mathbf{A}_{j,  k}\} \\ \nonumber
     &+ (\eta_{j, l}^{(5)} \eta_{r_1, j, i}^{(4)})^2 \mathrm{Tr}\{ \mathbf{A}_{j,  k}^H \mathbf{A}_{j,  k}\} .
\end{align}

\subsection{Derivation of $  \mathbb{E}\{|\hat{\mathbf{q}}_{j, k}^H  \mathbf{q}_{j, k}|^2\}$}
The next part is the case when  $h = j$ which is equivalent to $  \mathbb{E}\{|\hat{\mathbf{q}}_{j, k}^H  \mathbf{q}_{j, k}|^2\}$. The result of this case equals to the terms obtained for the case $  \mathbb{E}\{\hat{\mathbf{q}}_{j, k}^H \mathbf{q}_{j, k}  (\hat{\mathbf{q}}_{h, k}^H \mathbf{q}_{h, k})^H\} $ plus the terms obtained for the case $  \mathbb{E}\{|\hat{\mathbf{q}}_{j, k}^H  \mathbf{q}_{j, i}|^2\}$ since it covers all the possible cases. Hence,  $  \mathbb{E}\{|\hat{\mathbf{q}}_{j, k}^H  \mathbf{q}_{j, k}|^2\}$ is given by

\begin{align} \nonumber
      \mathbb{E}\{&|\hat{\mathbf{q}}_{j, k}^H  \mathbf{q}_{j, i}|^2\} =  E_{j, k, j}^{\mathrm{C}} +  \sum_{r_1=1}^R\sum_{r_2=1}^R\sum_{r_3=1}^R\sum_{r_4=1}^R \Big( E_{j,k,j,k}^{\mathrm{CROSS-BS}}(\boldsymbol{\Phi}) +E_{r, j, k, k}^{\mathrm{CROSS-USER}}(\boldsymbol{\Phi})+ E_{j,k,k}^{\mathrm{SELF-UE}}(\mathbf{r}) \\\label{E_jkjk}
      &+E_{j, k, k}^{\mathrm{NOISE}}(\boldsymbol{\Phi})  + \sum_{l \in \mathcal{P}_k \backslash \{k\}}   E_{j,l,j,k}^{\mathrm{PC-CROSS-BS}}(\boldsymbol{\Phi})  +  E_{j, l, k, k}^{\mathrm{CROSS-USER-PC}}(\boldsymbol{\Phi})  +  E_{j,l,j,k}^{\mathrm{PC - SELF-UE}}(\mathbf{r}).
\end{align}

The only remaining part to be calculated is $\mathbf{V}_k = \mathrm{diag}\left( \mathbb{E}\{\rVert \mathbf{v}_{1, k}\rVert^2\}, \dots, \mathbb{E}\{\rVert\mathbf{v}_{J, k}\rVert^2 \} \right)$ which will be calculated next. The $j$th entry of $\mathbf{V}_k$ could be written as
\begin{align}
     \mathbb{E}\{\rVert\mathbf{v}_{j, k}\rVert^2 \} &=  \mathbb{E}\{\rVert  \hat{\mathbf{q}}_{j,  k} \rVert^2 \} = \mathbb{E}\{ \hat{\mathbf{q}}_{j,  k}^H \hat{\mathbf{q}}_{j,  k} \}.
\end{align}
According to the orthogonality property of the LMMSE estimator, we  will have $\mathbb{E}\{ \hat{\mathbf{q}}_{j,  k}^H \hat{\mathbf{q}}_{j,  k} \} = \mathbb{E}\{ \hat{\mathbf{q}}_{j,  k}^H \mathbf{q}_{j,  k} \}$ where $\mathbb{E}\{ \hat{\mathbf{q}}_{j,  k}^H \mathbf{q}_{j,  k} \}$ is calculated in (\ref{exp_qhq}). Thus, all the terms in (\ref{SINR_here}) are calculated and this completes the proof.

\newpage
\ifCLASSOPTIONcaptionsoff
  \newpage
\fi



%



\bibliographystyle{IEEEtran}


\end{document}